\let\newfloat\newfloat@ltx
\newlength{\tablen}
\newcommand\tabbox[2][\tablen]{\makebox[#1][l]{#2}}
\title{Modeling and mitigation of cross-talk effects in readout noise with applications to the Quantum Approximate Optimization Algorithm}
	\author{Filip B. Maciejewski}
	\email{filip.b.maciejewski@gmail.com}
	\affiliation{Center for Theoretical Physics, Polish Academy of Sciences, Al. Lotnik\'ow 32/46, 02-668 Warsaw, Poland}
	\author{Flavio Baccari}
	\affiliation{Max-Planck-Institut f\"ur Quantenoptik, Hans-Kopfermann-Stra{\ss}e 1, 85748 Garching, Germany}
	\author{Zolt\'an Zimbor\'as}
	\affiliation{Wigner Research Centre for Physics,
		H-1525 Budapest, P.O.Box 49, Hungary}
	\affiliation{BME-MTA Lend\"ulet Quantum Information Theory Research Group, Budapest, Hungary}
	\affiliation{Mathematical Institute, Budapest University of Technology and Economics,
		P.O.Box 91, H-1111, Budapest, Hungary}
	\author{Micha\l\ Oszmaniec}
	\email{oszmaniec@cft.edu.pl}
	\affiliation{Center for Theoretical Physics, Polish Academy of Sciences, Al. Lotnik\'ow 32/46, 02-668 Warsaw, Poland}
\begin{document} 
\twocolumn[
	\begin{@twocolumnfalse}
		\maketitle
\begin{abstract}
Measurement noise is one of the main sources of errors in currently available quantum devices based on superconducting qubits. 
At the same time, the complexity of its characterization and mitigation often exhibits exponential scaling with the system size.
In this work, we introduce a correlated measurement noise model that can be efficiently described and characterized, and which admits effective noise-mitigation on the level of marginal probability distributions.
Noise mitigation can be performed up to some error for which we derive upper bounds. 
Characterization of the model is done efficiently using Diagonal Detector Overlapping Tomography -- a generalization of the recently introduced Quantum Overlapping Tomography to the problem of reconstruction of readout noise with restricted locality. 
The procedure allows to characterize $k$-local measurement cross-talk on $N$-qubit device using $O\rbracket{k2^k\log\rbracket{N}}$ circuits containing random combinations of X and identity gates.
We perform experiments on 15 (23) qubits using IBM's (Rigetti's) devices to test both the noise model and the error-mitigation scheme, and obtain an average reduction of errors by a factor $>22$ ($>5.5$) compared to no mitigation.
Interestingly, we find that correlations in the measurement noise do not correspond to the physical layout of the device.
Furthermore, we study numerically the effects of readout noise on the performance of the Quantum Approximate Optimization Algorithm (QAOA). 
We observe in simulations that for numerous objective Hamiltonians, including random MAX-2-SAT instances and the Sherrington-Kirkpatrick model, the noise-mitigation improves the quality of the optimization.  
 Finally, we provide arguments why in the course of QAOA optimization the estimates of the local energy (or cost) terms often behave like uncorrelated variables, which greatly reduces sampling complexity of the energy estimation compared to the pessimistic error analysis.
 We also show that similar effects are expected for Haar-random quantum states and states generated by shallow-depth random circuits.
\end{abstract}
\end{@twocolumnfalse}]
	
\twocolumn
\section{Introduction}
\subsection{Motivation}
	Outstanding progress has been made in the last years on the path to development of scalable and fully functional quantum devices.
	With state of the art quantum processors reaching a scale of 50-100 qubits \cite{Arute2019quantum}, the scientific community is approaching a regime in which quantum systems cannot be modeled on modern supercomputers using known methods \cite{Villalonga2020establishing}. This situation is unarguably exciting since it opens possibilities for the first demonstrations of quantum advantage and, potentially, useful applications \cite{farhi2016quantum, Moll2018Quantum,Preskill2018}. 
	At the same time, however, it entails a plethora of non-trivial problems related to fighting effects of experimental imperfections on the performance of quantum algorithms. As quantum devices in the near-term will be unable to implement proper error correction \cite{Preskill2018}, various methods of \emph{noise mitigation} have been recently developed \cite{
	Wallman2016noise, 
	Li2017efficient, 
	Temme2017error,
	Endo2018, 
	Kandala2019error,
	sun2020mitigating,
	huggins2020virtual,
	Maciejewski2020mitigation,
	Chen2019detector,
	Bravyi2020mitigating}. 
	Those methods aim at reducing the effects of errors present in quantum gates and/or in quantum measurements. In this work, we focus on the latter, i.e., noise affecting quantum detectors.
	
Indeed, it has been found in currently available quantum devices that the noise affecting measurements is quite significant. Specifically,  errors of the order of a few percent in a single qubit measurement and non-negligible effects of cross-talk were reported  \cite{
		Arute2019quantum,
		Maciejewski2020mitigation,
		Chen2019detector, 
		Bravyi2020mitigating,
		Geller2020efficient}.
		Motivated by this, a number of methods to characterize and/or reduce measurement errors were proposed \cite{ 
		Maciejewski2020mitigation, 
		Chen2019detector,
		Geller2020efficient, 
		Bravyi2020mitigating,
		Geller2020rigorous,
		Nachman2019unfolding,
		Kwon2020hybrid,
		Hamilton2020scalable,
		lilly2020modeling,
		funcke2020measurement,
		zheng2020bayesian}.
	Readout noise mitigation usually relies on the classical post-processing of experimental statistics, preceded by some procedure of noise characterization.
	Existing techniques typically suffer from the curse of dimensionality due to characterization cost, sampling complexity, and the complexity of post-processing -- which scale exponentially with the number of qubits $N$. 
	Fortunately, some interesting problems in quantum computing do not require measurements to be performed across the whole system. 
An important class of algorithms that have this feature are the Quantum Approximate Optimization Algorithms (QAOA) \cite{farhi2014quantum}, which only require the estimation of a number of few-particle \emph{marginals}.
	QAOA is a heuristic, hybrid quantum-classical optimization technique \cite{farhi2014quantum, cerezo2020variational}, that was later generalized as a standalone ansatz \cite{hadfield2019quantum} shown to be computationally universal \cite{lloyd2018quantum,Morales2020universality}.
	In its original form, QAOA aims at finding approximate solutions for hard combinatorial problems, notable examples of which are maximum satisfiability (SAT) \cite{Hansen1990algorithms}, maximum cut (MAXCUT) \cite{Guerreschi2019QAOA}, or the Sherrington-Kirkpatrick (SK) spin-glass model \cite{Panchenko2012Sherrington, farhi2019quantum}. 
	Regardless of the underlying problem, the main QAOA subroutine is the estimation of the energy of the local classical Hamiltonians on a quantum state generated by the device. 	Those Hamiltonians are composed of a  number of a few-body commuting operators, and hence estimation of energy can be done via estimation of the local terms (which can be performed simultaneously). Since the estimation of marginal distributions is the task of low sampling and post-processing complexity, this suggests that error mitigation techniques can be efficiently applied in QAOA.
	In this work, we present a number of contributions justifying the usage of measurement error-mitigation in QAOA, even in the presence of significant cross-talk effects. 
	
\subsection{Results outline}	Our first contribution is to provide an efficiently describable measurement noise model that incorporates asymmetric errors and cross-talk effects.
	Importantly, our noise model admits efficient error-mitigation on the marginal probability distributions, which can be used, e.g., for improvement of the performance of variational quantum algorithms.
	We show how to efficiently characterize the noise using a number of circuits scaling logarithmically with the number of qubits. 
	To this aim, we generalize the techniques of the recently introduced Quantum Overlapping Tomography (QOT) \cite{Cotler2020quantum}
	to the problem of readout noise reconstruction.
	Specifically, we introduce notion of Diagonal Detector Overlapping Tomography (DDOT) which allows to reconstruct noise description with $k$-local cross-talk on $N$-qubit device using $O\rbracket{k2^k \log\rbracket{N}}$ quantum circuits consisting of single layer of $X$ and identity gates.
	Furthermore, we explain how to use that characterization to mitigate noise on the marginal distributions and provide a bound for the accuracy of the mitigation.
	Importantly, assuming that cross-talk in readout noise is of bounded locality, the sampling complexity of error-mitigation is not significantly higher than that of the starting problem of marginals estimation.
	
	We test our error-mitigation method in experiments on 15 qubits using IBM's device and on 23 qubits using Rigetti's device, both architectures based on superconducting transmon qubits \cite{Koch2007}.
	We obtain a  significant advantage by using a correlated error model for error mitigation in the task of ground state energy estimation. Interestingly, the locality structure of the reconstructed errors in these devices does not match the spatial locality of qubits in these systems.

We also study statistical errors that appear in the simultaneous estimation of multiple local Hamiltonian terms that appear frequently in QAOAs. 
	In particular, we provide arguments why one can expect that the estimated energies of local terms behave effectively behave as uncorrelated variables,
	for the quantum states appearing at the beginning and near the end of the QAOA algorithm. This allows to prove significant reductions in sampling complexity of the total energy estimation (compared to the worst-case upper bounds).

	Finally, we present a numerical study and detailed discussion of the possible effects that measurement noise can have on the performance of the Quantum Approximate Optimization Algorithm.  This includes the study of how noise distorts the quantum-classical optimization in QAOAs. We simulate the QAOA protocol on an 8-qubit system affected by correlated measurement noise inspired by the results of IBM's device characterization.
    For a number of random Hamiltonians, we conclude that the error-mitigation highly improves the accuracy of energy estimates, and can help the optimizer to converge faster than when no error-mitigation is used.

\subsection{Related works}
The effects of simple, uncorrelated noisy quantum channels on QAOAs were analyzed in Refs.~\cite{xue2019effects, marshall2020characterizing, alam2019analysis}.
In particular, the symmetric bitflip noise analyzed in \cite{xue2019effects} can be used also to model symmetric, uncorrelated measurement noise - a type of readout noise which has been demonstrated to be not very accurate in currently available devices based on transmon qubits \cite{Maciejewski2020mitigation,Geller2020efficient}. 
A simple readout noise-mitigation technique on the level of marginals for QAOAs was recently experimentally implemented on up to 23 qubits in a work by Google AI Quantum team and collaborators \cite{arute2020quantum}. 
Importantly, the authors assumed uncorrelated noise on each qubit -- we believe that our approach which accounts for correlations could prove beneficial in those kinds of experiments. 
In a similar context, readout noise-mitigation using classical post-processing on global probability distributions was implemented in QAOA and Variational Quantum Eigensolver (VQE) experiments on up to 6 qubits \cite{montanaro2020compressed, gokhale2020optimized}. 
While for such small system sizes it is possible to efficiently perform global error mitigation, we emphasize that our approach based on the noise-mitigation on the marginals could be performed also in larger experiments of this type. 

Alternative methods of characterization of correlated measurement noise were recently proposed in Refs.~\cite{Geller2020efficient,Hamilton2020scalable,Bravyi2020mitigating}. 
Out of the above-mentioned works, perhaps the most related to ours in terms of studied problems is Ref.~\cite{Bravyi2020mitigating}, therefore we will now comment on it thoroughly.
The authors introduced a correlated readout noise model based on Continuous Time Markov Processes (CTMP). 
They provide both error-characterization and error-mitigation methods.
The CTMP noise model assumes that the noisy stochastic process in a measurement device can described by a set of two-qubit generators with corresponding error rate parameters. 
The total number of parameters required to describe a generic form of such noise for $N$-qubit device is $2N^2$.
The authors propose a method of noise characterization that requires preparation of a set of suitably chosen classical states and performing a post-selection on the noise-free outcomes (i.e., correct outcomes given known input classical state) on the subset of $\rbracket{N-2}$ qubits.
Since probability of noise-free outcomes can be exponentially small even for the uncorrelated readout noise, this method can prove relatively costly in practice.
We believe that our DDOT characterization technique could prove useful in reducing the number of parameters needed to describe CTMP model (by showing which pairs of qubits are correlated, and for which the cross-talk can be neglected), hence allowing for much more efficient version of noise reconstruction presented in Ref.~\cite{Bravyi2020mitigating}.
The authors also present a novel noise-mitigation method that allows to estimate the noise-reduced expected values of observables.
The method is based on decomposition of inverse noise matrix into the linear combination of stochastic matrices, and constructing a random variable (with the aid of classical randomness and post-processing) that agrees in the expected value of noise-free observable.
In general, both sampling complexity and classical post-processing cost scale exponentially with the number of qubits.
Since the authors aim to correct observables with arbitrary locality, the problem they consider is different to our approach that aims to correct only \emph{local} observables (and therefore does not exhibit exponential scalings).
It is an interesting problem to see whether the CTMP noise model can also be interpreted on the level of marginal probability distributions in a way that allows for mitigation analogous to ours in terms of complexity.

\section{Correlated readout noise model}\label{sec:noise_model}
   \subsection{Preliminaries}  
    In quantum mechanics, the most general measurement $\M$ that can be performed is represented by a Positive Operator-Valued Measure (POVM) \cite{Peres2006}. 
    A POVM $\M$ with $r$ outcomes on a $d$-dimensional Hilbert space is a set of positive-semidefinite operators that sum up to identity, i.e., 
	\begin{align}	\label{eq:povm_def}
	\M&=\cbracket{M_i}_i^r\ ,		
	&\forall i\ M_{i}&\geqslant\ 0\ ,&\sum_{i=1}^{r}\ M_{i}=\iden ,
	\end{align}
	where $\iden$ is the identity operator.
	When one performs a measurement $\M$ on the quantum state $\rho$, the probability of obtaining outcome $i$ is given by Born's rule: $\text{Pr}\rbracket{i|\M,\ \rho} = \Tr{\rho M_i}$. 
	
	In quantum computing a perfect measurement is often modeled as a so-called  \emph{projective} measurement $\P = \cbracket{P_i}_i^r$ for which the measurement operators, in addition to the requirements from Eq.~\eqref{eq:povm_def}, are also projectors, i.e.,  $\forall_i P_i^2=P_i$.
    While non-projective measurements have many applications in certain quantum information processing tasks, in this work we are interested in using them to model imperfections in measurement devices.
    The relationship between an ideal measurement $\P$ and its noisy implementation $\M$ can be modeled by the application of a generic quantum channel.
    Recently it has been experimentally demonstrated for superconducting quantum devices that in practice those channels, to a good approximation, belong to a restricted class of stochastic maps \cite{Chen2019detector,Maciejewski2020mitigation}, which we will refer to as 'classical measurement noise'.
	In such a model, the relation between $\M$ and $\P$ is given by some stochastic transformation $\Lambda$. 
	Namely, we have $\M=\Lambda \P$, i.e., $M_i = \sum_j \Lambda_{ij}P_j$. 
	Due to the linearity of Born's rule, it follows that probabilities $\p^{\textit{noisy}}$ from which noisy detector samples are related to the noiseless probabilities $\p^{\textit{ideal}}$ via the same stochastic map, hence \cite{Maciejewski2020mitigation}
	\begin{align}\label{eq:classical_noise}
	    \p^{\textit{noisy}} = \Lambda \p^{\textit{ideal}} \ .
	\end{align}
	Specifically, in the convention where probability vectors are columns, the noise matrix $\Lambda$ is left-stochastic, meaning that each of its columns contains non-negative numbers that sum up to 1.
	Such noise is thus equivalent to a stochastic process, in which an outcome from a perfect device probabilistically changes to another (possibly erroneous) one.
	Equation~\eqref{eq:classical_noise}  suggests a simple way to mitigate errors on the noisy device -- via left-multiplying the estimated statistics by the inverse of noise matrix $\Lambda^{-1}$ \cite{Chen2019detector,Maciejewski2020mitigation}. 
	From stochasticity of $\Lambda$ it follows that its inverse does preserve the sum of the elements of probability vectors, however it may introduce some unphysical (i.e., lower than 0 or higher than 1) terms in corrected vector. 
	A common practice in such a scenario is to solve optimization problem
	\begin{align}
	    \p=\mathrm{argmin}_{\mathbf{q}} ||\mathbf{q}-\Lambda^{-1}\p^{\text{noisy}}||_{2}^2 \ ,
	\end{align}
	where minimization goes over all proper probability distributions. 
	This introduces additional errory in the final estimations which can be easily upper-bounded \cite{Maciejewski2020mitigation}.
	
	Before going further, we note that while multiplication by $\Lambda^{-1}$ is perhaps the most natural (and simple) method to reduce the noise and has been shown to be useful in practical situations \cite{Chen2019detector,Maciejewski2020mitigation}, there exist more sophisticated techniques of noise-mitigation that do not exhibit this problem.
	For example, Iterative Bayesian Unfolding \cite{Nachman2019unfolding} always returns physical probability vectors.
	We note that our noise model and noise-mitigation on marginal probability distributions (to be introduced in following sections) is consistent with using methods different than $\Lambda^{-1}$ correction, and we intend to test them in the future.

 To finish this introduction, we recall that, as mentioned above,  Eq.~\eqref{eq:classical_noise}  does not present the most general model of quantum measurement noise. Specifically, \textit{coherent} errors might occur, and reduce the effectiveness of error-mitigation (a detailed analysis of this effect was presented in \cite{Maciejewski2020mitigation}).
	However, it was validated experimentally on multiple occasions, that coherent errors in superconducting quantum devices are small, and the error-mitigation by classical post-processing was shown to work very well in few-qubit scenarios
	\cite{Maciejewski2020mitigation,Chen2019detector}.

\subsection{Correlations in readout noise} The size of the matrix $\Lambda$ scales exponentially with the number of qubits. Thus, if one wants to estimate such a generic $\Lambda$ using standard methods, both the number of circuits and sampling complexity scale exponentially. Indeed, the standard method of reconstructing $\Lambda$ is to create all the $2^n$ computational basis and estimate the resulting probability distributions (which constitute the columns of $\Lambda$). 	We refer to such characterization as Diagonal Detector Tomography (DDT), since it probes the diagonal elements of the measurement operators describing the detector.
This is restricted version of more general Quantum Detector Tomography (QDT) \cite{Lundeen2008,Hradil2004,Fiurasek2001,Gianani2020compressively}.

	These complexity issues can be circumvented if one assumes some \emph{locality} structure in the measurement errors. 
	For example, in the simplest model with completely uncorrelated readout noise, the $\Lambda$ matrix is a simple tensor product of single-qubit error matrices $\Lambda_{Q_i}$
	\begin{align}\label{eq:uncorrelated_noise}
	    \Lambda = \bigotimes_i \Lambda_{Q_i} \qquad \text{(uncorrelated noise)} .
	\end{align}
    In this model, we need to estimate only single-qubit matrices, which renders the complexity of the problem to be linear in the number of qubits. 
    However, for contemporary quantum devices based on superconducting qubits, it was demonstrated that such a noise model is not very accurate due to the cross-talk effects \cite{Arute2019quantum,Maciejewski2020mitigation,Chen2019detector,Geller2020efficient}.
    At the same time, the completely correlated noise is not realistic as well, which motivates the search for a model that can account for correlations in readout errors while still giving an efficient description of $\Lambda$.

	In this work, we propose such a model and give a method to characterize it.  Let us lay out the basic concepts of our model.
    Consider the correlated errors between some group of qubits $\cluster_{\cindex}$. 
    The most general way of describing those errors is to treat the qubits in $\cluster_{\cindex}$ as a single object, i.e., to always consider their measurement outcomes together. 
    In terms of the noise matrix description, this means that the noise matrix on $\cluster_{\cindex}$ is some generic $\Lambda_{\cluster_{\cindex}}$ acting on $\cluster_{\cindex}$.
    This gives rise to the first basic object in our model -- the \emph{clusters} of qubits.
    The cluster $\cluster_{\cindex}$ is a group of qubits with correlations between them so strong, that one can not consider outcomes of their measurements separately.
    At the same time, it is unlikely that in actual devices the correlations between all the qubits will be so strong that one should assign them all to a single cluster.
    This motivates the introduction of another, milder possibility.
    Consider a measurement performed on qubits in cluster $\cluster_{\cindex}$ and some other qubits $\N_{\cindex}$ (not being in that cluster).
    It is conceivable to imagine some complicated physical process, which results in the situation in which the noise matrix $\Lambda_{\cluster_{\cindex}}$ on cluster $\cluster_{\chi}$ slightly depends on the state of the qubits in $\N_{\cindex}$.
    To account for that, we introduce the second basic object of our model -- the \emph{neighborhood} of the cluster.
    The neighborhood $\N_{\cindex}$ of a cluster ${\cluster_{\cindex}}$ is a group of qubits the state of which $\emph{just before the measurement}$ affects slightly the noise matrix acting on the cluster ${\cluster_{\cindex}}$.

    For example, if $\cluster_{\cindex}$ contains only a single qubit, say $Q_0$, it is possible that due to some effective ferromagnetic-type interaction, the probability of erroneously detecting state ``0'' of $Q_0$ as ``1'' rises when the neighboring qubit $Q_1$ is in state ``1'' (compared to when it is in state $'0'$).
    
     A notion related to our ``neighborhood'' has appeared in recent literature. Specifically in the context of measurement error characterization, 'spectator qubits' are the qubits that affect measurement noise on other qubits \cite{Geller2020efficient,Hamilton2020scalable}.
       However, so far this effect was treated rather as an undesired complication, while here it is an inherent element of the proposed noise model.

\begin{figure}[!t]
\begin{center}
\captionsetup[subfigure]{format=default,singlelinecheck=on,justification=RaggedRight}
\subfloat[\label{fig:generic_device}]
        {\includegraphics[width=0.24\textwidth]{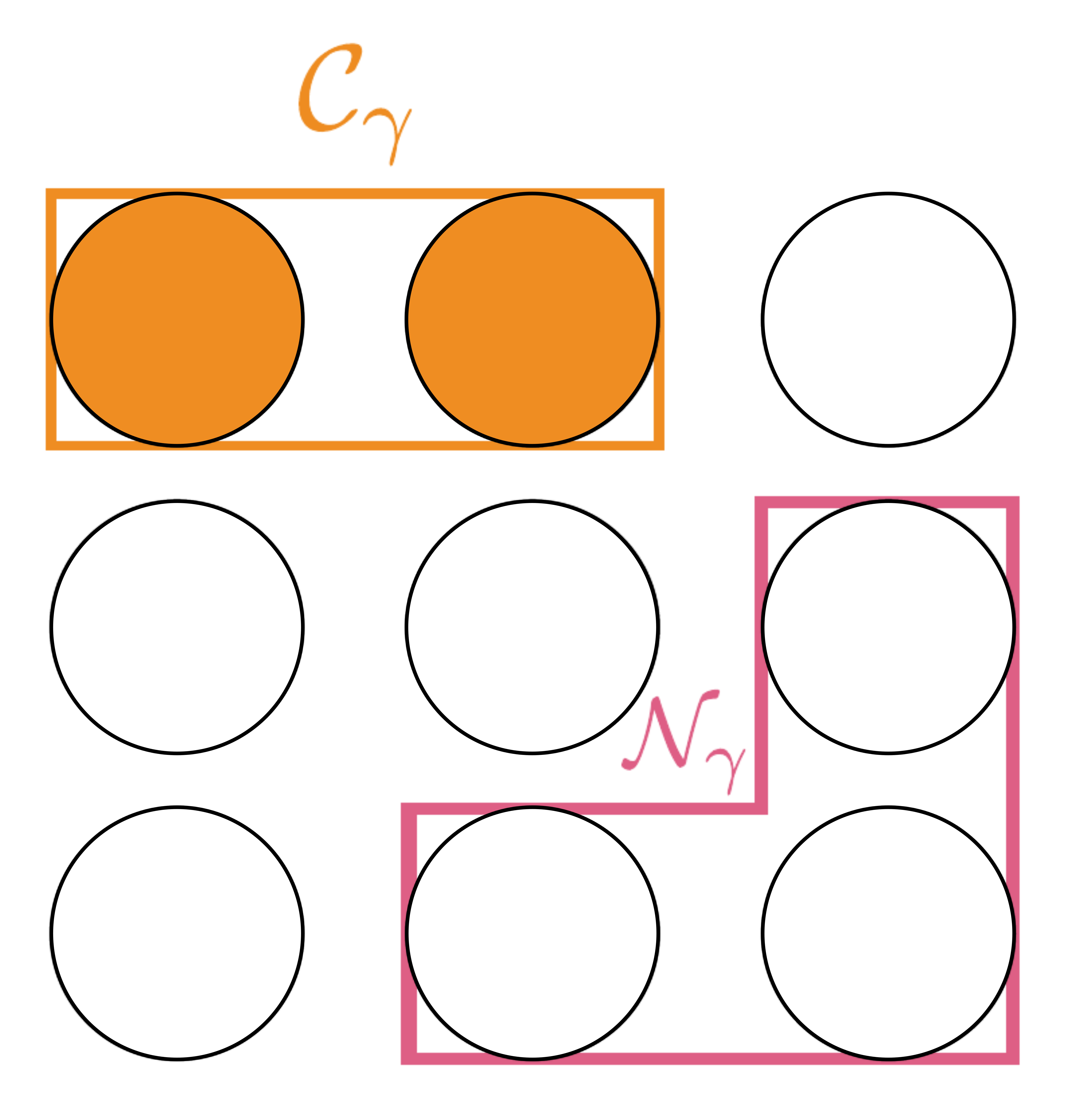}}
\subfloat[\label{fig:exemplary_correlations4q}]
        {\includegraphics[width=0.21\textwidth]{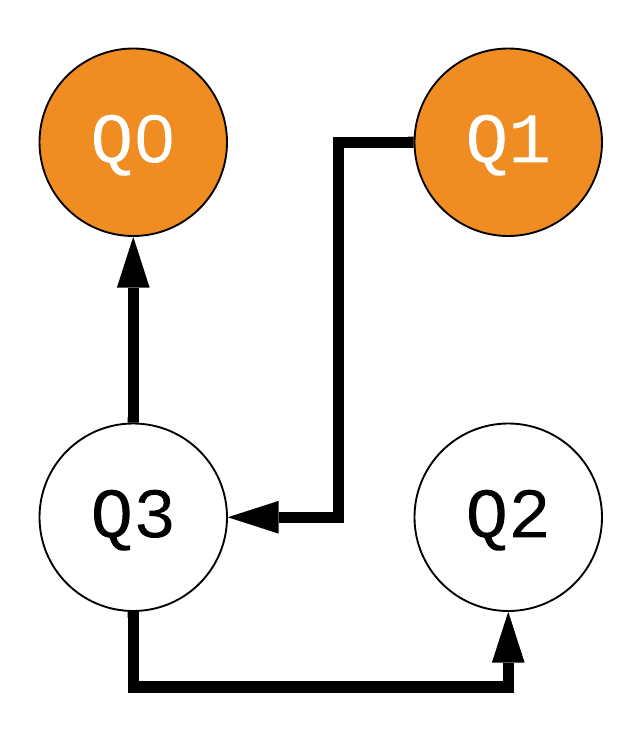}}
    \caption{
Exemplary correlations in the measurement noise that can be captured by our model. Each circle represents a qubit. 
Subfigure a) represents a 9-qubit device with some cluster $\mathcal{C}_{\cindex}$ (orange envelope) consisting of two qubits.
The noise on the qubits from that cluster is dependent on the state of qubits from its neighborhood $\N_{\cindex}$ (magenta envelope). 
Note that the ``neighborhood'' does not have to correspond to spatial arrangement of qubits in the device.
Subfigure b) shows a more detailed example of a four-qubit device.
Qubits $Q0$ and $Q1$ are in one cluster, which is indicated by coloring them with the same color.
Qubits $Q2$ and $Q3$ are white, meaning they do not belong to any cluster.
 Qubits at the beginning of the arrow are the neighbors of the qubits at the end of the arrow. 
 Explicitly, the clusters in the example are $\mathcal{C}_1 = \cbracket{0, 1},\ \mathcal{C}_2 = \cbracket{2},$ and $\mathcal{C}_3 = \cbracket{3}$, while their neighborhoods are $\N_1 = \cbracket{3},\ \N_2 = \cbracket{3}$, and $\N_3 = \cbracket{1}$. 
 Correlations in the readout errors for qubits $Q_0$ and $Q_1$ can be arbitrary, while for the rest of the qubits the dependencies are restricted by the structure of clusters and the neighbors.
 In particular, the noise matrix on $Q_3$ depends on the state of $Q_1$, while the state of $Q_3$ affects the noise on qubits $Q_0$ and $Q_2$.  At the same time, qubit $2$ does not affect the noise matrix on any other qubit. See the description in the main text.}
 \end{center}
\end{figure}

Now we are ready to provide an efficient noise model.
We construct a global noise matrix $\Lambda$ with matrix elements of the following form 
\begin{align}\label{eq:noise_model_correlated}
	\Lambda_{X_1\dots X_N|Y_1 \dots Y_N} = \prod_{\chi} \Lambda^{ \Y_{\N_{\chi}}}_{\X_{\cluster_\chi}|\Y_{\cluster_\chi}} \ .
\end{align}
\noindent 
In the next subsection we give some illustrative examples, but first  let us thoroughly describe the notation used in the above equation.
	A collection $\cbracket{\cluster_{\cindex}}_{\cindex}$ gives us the partitioning of the set of all qubits. 
	Explicitly, $\cluster_{\cindex} \cap \cluster_{\cindex'} = \emptyset$ if $\cindex \neq \cindex'$ and $\cup_{\cindex}\cluster_{\cindex} = \sbracket{N}$, where $N$ is the number of qubits. To each cluster $\cluster_{\cindex}$ the model associate its \emph{neighborhood} $\N_{\cindex}$. 	Equation \eqref{eq:noise_model_correlated} can be now understood in the following way. The noise matrix $\Lambda^{\Y_{\N_{\cindex}}}$ describing the measurement noise occurring in cluster $\cluster_{\cindex}$ depends on the state \emph{just before measurement} of the qubits in the neighborhood $\N_{\cindex}$ of that cluster (hence the superscript $\Y_{\N_{\cindex}}$ denoting that state). 
	Importantly, each $\Lambda^{\Y_{\N_{\cindex}}}$ is left-stochastic for any state of the neighbors.
	By $\X_{\cluster_{\cindex}}$ (or $\Y_{\cluster_{\cindex}}$) we denote bit-strings of qubits belonging to cluster $\cluster_{\cindex}$ which were measured (or put inside the device just before measurement).
	Finally,  $\Y_{\N_{\cindex}}$ indicates the bit-string denoting the state just before the measurement of the qubits from the neighborhood $\N_{\cindex}$ of the cluster $\cluster_{\cindex}$ (see Fig.~\ref{fig:generic_device} for illustration).   
	Note that in general the correlations in measurement errors (expressed by the structure of the clusters and neighborhoods) do not need to be directly correlated with the physical layout of the device. 
	In general $\Lambda$ in Eq.\eqref{eq:noise_model_correlated} is specified by only $\sum_{\cindex}(2^{|\cluster_{\cindex}|})2^{|\N_{\cindex}|}$ parameters, where $|\cluster_{\cindex}|$ and $|\N_{\cindex}|$ are sizes of $\cindex$'th cluster and its neighbourhood respectively.
	Therefore this description is \emph{efficient} provided sizes of the clusters and their neighborhoods are bounded by a constant.

\subsection{Illustrative examples}   
In what follows present  examples of readout correlation structures that can be described with our model.
It is instructive to start with a simple example of a hypothetical four-qubit device depicted in  Fig.~\ref{fig:exemplary_correlations4q}.
    Note that in this example we have only one non-trivial (i.e., with size $\geq2$) cluster. 
    Let us write explicitly the matrix elements of the global noise matrix acting on that exemplary 4-qubit device
    \begin{align}\label{eq:example4q}
        \Lambda_{X_0X_1X_2X_3|Y_0Y_1Y_2Y_3} = \Lambda^{Y_3}_{X_0X_1|Y_0Y_1}\Lambda^{Y_3}_{X_2|Y_2}\Lambda^{Y_1}_{X_3|Y_3} \ .
    \end{align}
    Note that on the RHS of Eq.~\eqref{eq:example4q}, the superscript $Y_3$ appears two times, indicating that noise matrices on the cluster $\cluster_1$ and on the cluster $\cluster_2$ both depend on the state just before measurement of the qubit 3.
    At the same time, there is no superscript $Y_2$, which follows from the fact that qubit 2 does not affect the noise on any other qubits.
    Note that while a generic noise matrix on $4$ qubits would require reconstruction of $16 \times 16$ matrix, here we need a number of smaller dimensional matrices to fully describe the noise.

We now move to a more general readout error model, which is particularly inspired by current superconducting qubits implementations of quantum computing devices. Consider a collection of qubits arranged on a device with a limited connectivity. This can be schematically represented as a graph $\mathcal{G}(V,E)$, where each vertex in $V$ represents a qubit and each edge in $E$ connects two qubits that can interact in the device.
In such a scenario, a natural first step beyond an uncorrelated readout noise model can be a nearest-neighbour correlated model, where readout errors on each qubit are assumed to be influenced at most by the state of the neighbouring ones. By using the notation introduced in the previous section, we can represent such a model by associating a single-qubit cluster to each vertex, $ C_i = \lbrace i \rbrace $, for ${i \in V} $, and defining the neighbourhoods according to the graph structure, namely $\mathcal{N}_i = \lbrace j | (i,j) \in E \rbrace$. The global noise matrix then reads
\begin{equation}
	\Lambda_{X_1\dots X_{|V|}|Y_1 \dots Y_{|V|}} = \prod_{i \in V} \Lambda^{ \Y_{\N_{i}}}_{\X_{i}|\Y_{i}}   \, .  
\end{equation}
If we specialise this to the case of a $2D$ rectangular lattice of size $L$, the neighbourhood of generic (i.e. not belonging to the boundary) vertex becomes $\mathcal{N}_i = \lbrace i+1,i-1,i+L,i-L  \rbrace$. It follows that each $\Lambda^{ \Y_{\N_{i}}}_{\X_{i}|\Y_{i}} $ can be represented by a collection of $2^4 = 16$ matrices of size $2 \times 2$, which is an exponential improvement with respect to a general $2^{L^2} \times 2^{L^2}$. 

Although the above correlated noise model seems a very natural one, we will see in the following Section that it does not encompass all the correlated readout errors in current superconducting devices, for which it will be more convenient to resort to models \eqref{eq:noise_model_correlated} with more general cluster and neighbourhood structures that do not necessarily correspond to the physical layout of the devices.

\section{Characterization of readout noise}\label{sec:characterization}

Here we outline a strategy to determine a noise matrix in the form \eqref{eq:noise_model_correlated} which closely represents the readout noise of a given device. We proceed in two steps: at first we infer the structure of clusters ($\lbrace \mathcal{C}_{\chi}\rbrace $ and neighbourhoods  ($\lbrace{\mathcal{N}_{\chi}\rbrace}$ by making use of Diagonal Detector Tomography (DDT); then we proceed to experimentally determine noise matrices $\lbrace \Lambda^{ \Y_{\N_{\chi}}}_{\X_{\cluster_\chi}|\Y_{\cluster_\chi}} \rbrace$

For the first step, we propose to reconstruct all two-qubit noise matrices (averaged over all other qubits) by means of DDT and calculate the following quantities
\begin{align}\label{eq:correlations_pair}
c_{j\rightarrow i} = \frac{1}{2}||\Lambda_{Q_i}^{Y_j = '0'}-\Lambda_{Q_i}^{Y_j = '1'}||_{1\rightarrow1}\, 
\end{align}
where $||A||_{1\rightarrow1} \coloneqq \sup_{||v||_1=1} ||Av||_{1}= \max_j \sum_i |A_{ij}|$. The above quantity has a strong operational motivation in terms of Total-Variation Distance (TVD). 
This distance quantifies statistical distinguishability of probability distributions $\p$ and $\mathbf{q}$ and can be defined by 
\begin{align}\label{eq:TVD}
    \text{TVD}\rbracket{\p,\mathbf{q}} = \frac{1}{2} ||\p-\mathbf{q}||_{1} =  \frac{1}{2}  \sum_i |p_i-q_i| \ .
\end{align}

We can give the following, intuitive interpretation of the quantity from Eq.~\eqref{eq:correlations_pair}: $c_{j\rightarrow i}$ represents the maximal TVD for which the output probability distributions on qubit $Q_i$ differs due to the impact of the state of the qubit $Q_j$ on the readout noise on $Q_i$.
Note that in general $c_{j\rightarrow i}\neq c_{i\rightarrow j}$, which encapsulates the asymmetry in the correlations which is built into our noise model.

We propose to use the values of $c_{j\rightarrow i}$ to decide whether the qubits should belong to the same clusters, to the neighborhoods, or should be considered uncorrelated 
(a simple, intuitive algorithm for this procedure is presented in Appendix~\ref{sec:app:ddot_inference}., Algorithm~\ref{alg:clusters} --  in the future, we intend to extend those methods).
After doing so, the noise matrices $\lbrace \Lambda^{ \Y_{\N_{\chi}}}_{\X_{\cluster_\chi}|\Y_{\cluster_\chi}} \rbrace$ can be reconstructed by means of joint DDT over the sets of qubits $\lbrace \mathcal{C}_\chi \cup \mathcal{N}_\chi  \rbrace_\chi$. 

In the above construction we assumed that the joint size of a cluster and its neighborhood is at most $k$. This makes it so that one has to gather DDT data on subsystems of fixed size, implying a number of different circuits that scales at most as $O(N^k)$. However, for any characterization procedure, it is expedient to utilize as few resources as possible. In order to reduce the number of circuits even further, in the next Section we generalize the recently introduced Quantum Overlapping Tomography (QOT) \cite{Cotler2020quantum} (see also recent followups \cite{evans2019scalable,yu2020sample}) to the context of Diagonal Detector Tomography. We will refer to our method as Diagonal Detector Overlapping Tomography (DDOT).

\subsection{\\Diagonal Detector Overlapping Tomography}
Quantum Overlapping Tomography is a technique that was introduced for a problem of efficient estimation of all $k$-particle marginal quantum states.
The main result of Ref.~\cite{Cotler2020quantum} was to use the concept of hashing functions \cite{Majewski1996family,Stinson2000perfect,blackburn2000perfect,alon2008balanced} to reduce the number of circuits needed to reconstruct all $k$-qubit marginal states.
Specifically, it was shown there that  $O\rbracket{\log\rbracket{N}e^k}$ circuits suffice for this purpose.
Here we propose to use an analogous technique to estimate all noise matrices corresponding to $k$-particle subsets of qubits. 
Specifically, we propose to 
construct a collection of circuits consisting of certain combinations of $\iden$ and $X$ gates in order to initialize qubits in states $\ket{0}$ or $\ket{1}$. 
With fixed $k$, the collection of quantum circuits for DDOT must have the following property -- for each subset of qubits of size $k$, all of the computational-basis states on that subset must appear at least once in the whole collection of circuits. 
Intuitively, if a collection has this property, then the implementation of all circuits in the collection allows us to perform tomographic reconstruction (via standard DDT) of noise matrices on all $k$-qubit subsets.
One can think about DDOT as a method of parallelizing multiple local DDTs in order to minimize number of circuits needed to obtain description of all local $k$-qubit noise processes.
In Appendix~\ref{sec:app:ddot_perfect} we show that it suffices to implement $O\rbracket{k2^k\log\rbracket{N}}$ quantum circuits consisting of random combinations of X and identity gates in order to construct a DDOT circuits collection that allows to capture all $k$-qubit correlations in readout errors (see Algorithm~\ref{alg:random_generation} and Algorithm~\ref{alg:random_generation_circuits}).
It is an exponential improvement over standard technique of performing local DDTs separately (which, as mentioned above, requires $O\rbracket{N^k}$ circuits).
For example, if one chooses $k=5$ for $N=15$-qubit device, the naive estimation of all $5$-qubit marginals would require the implementation of $2^5 \binom{15}{5}\approx 10^5$ quantum circuits, while DDOT allows doing so using $\approx 350$ circuits.
We note that this efficiency, however, comes with a price. 
Namely, since different circuits are sampled with different frequencies, some false-positive correlations might appear. 
This may cause some correlations in the reconstructed noise model to be overestimated (see Appendix~\ref{sec:app:ddot_overestimation} for a detailed explanation of this effect).
This effect can be mitigated either by certain post-processing of experimental results (see Appendix~\ref{sec:app:ddot_overestimation}), or by constructing DDOT collections that sample each term the same number of times.
Using probabilistic arguments in Appendix~\ref{sec:app:ddot_balanced} we show that still the number of circuits exponential in $k$ and logarithmic in $N$ suffices if we want to have all $k$ particle subsets sampled with \emph{approximately} equal frequency.

\subsection{Experimental results} We implemented the above procedure with $k=5$ for IBM's 15q \textit{Melbourne} device and 23-qubit subset of Rigetti's \textit{Aspen-8} device (the details of the experiments are moved to Appendices).
The obtained correlation models are depicted in Fig.~\ref{fig:correlations_models}. 
In the case of Rigetti's device, our procedure reports a very complicated structure of multiple correlations in readout noise, while in the case of IBM's device the correlations are fairly simple.
We discuss this issue in detail in further sections while presenting results of noise-mitigation benchmarks.
Here we conclude by making an observation that, despite common intuition, the structure of the correlations in the readout noise can not be directly inferred from the physical layout of the device.

\begin{@twocolumnfalse}
\begin{figure*}
\begin{center}
\captionsetup[subfigure]{format=default,singlelinecheck=on,justification=RaggedRight}
\subfloat[IBM's \textit{Melbourne} device, 15 qubits.]
        {\includegraphics[width=0.95\textwidth]{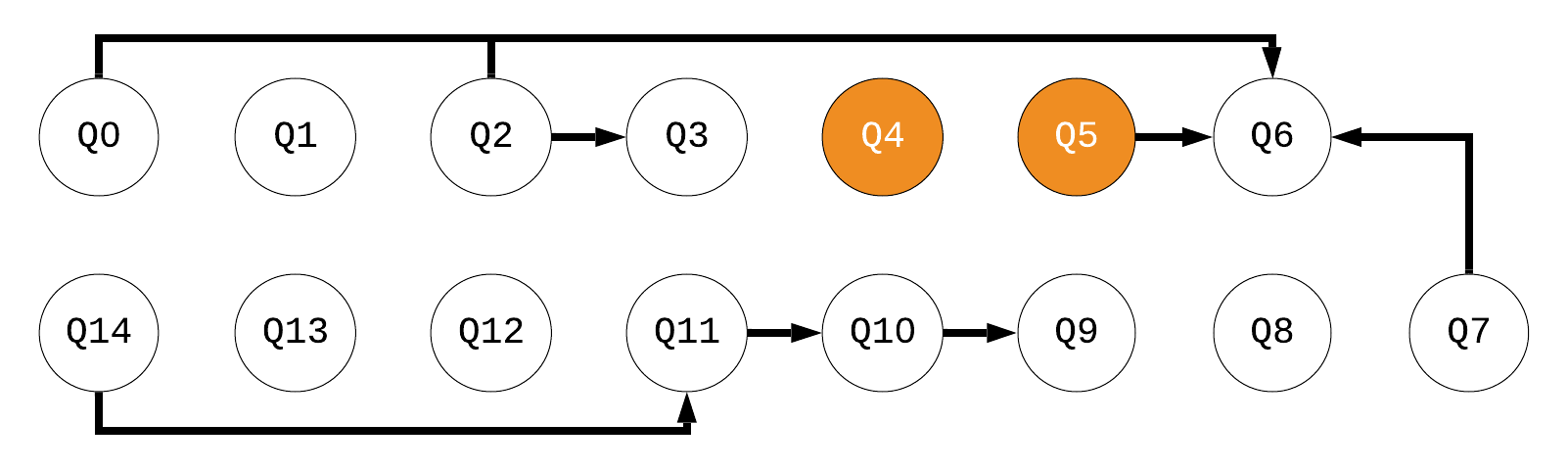}\label{fig:IBM_model15q}}\\
\subfloat[Rigetti's \textit{Aspen-8} device. 
Arrows indicate qubits which affect the measurement noise on the left half of the device.]
        {\includegraphics[width=0.95\textwidth]{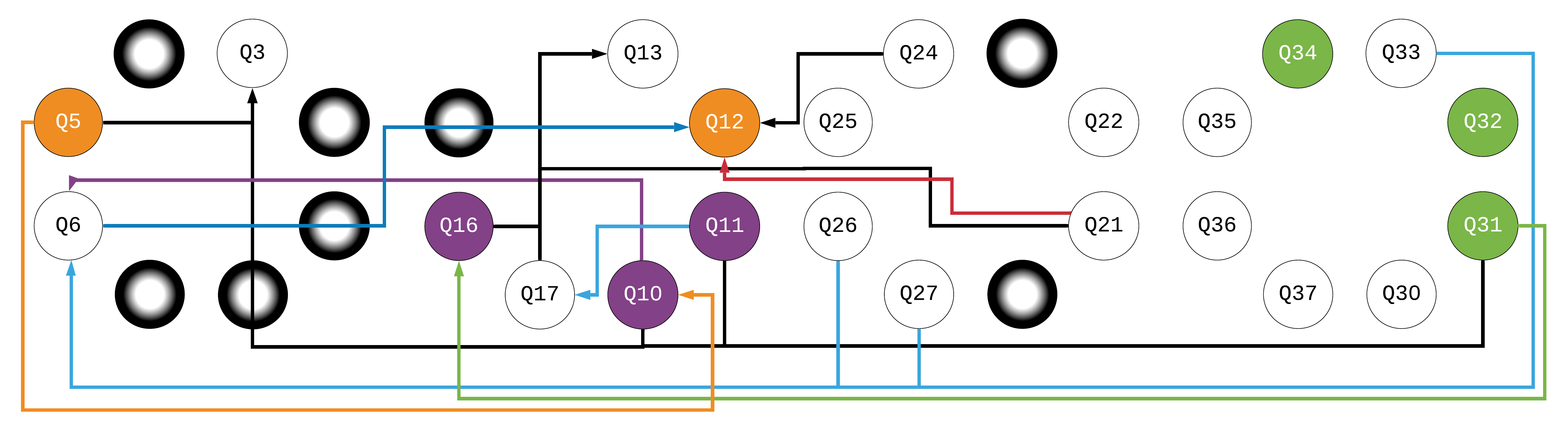}\label{fig:Rigetti_model_a}}\\
\subfloat[Rigetti's \textit{Aspen-8} device. 
Arrows indicate qubits which affect the measurement noise on the right half of the device.]
        {\includegraphics[width=0.95\textwidth]{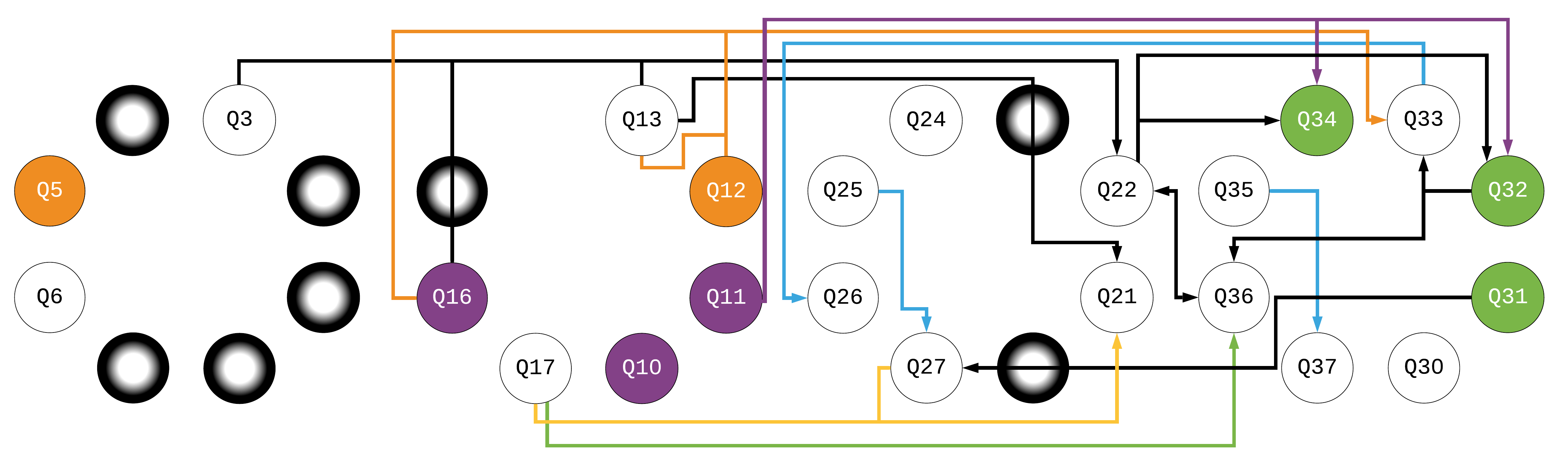}\label{fig:Rigetti_model_b}}
    \caption{ \label{fig:correlations_models} 
    Depiction of the correlation model obtained with Diagonal Detector Overlapping Tomogaphy on a) IBM's 15-qubit  \textit{Melbourne} device and b,c) a 23-qubit subset of Rigetti's \textit{Aspen-8} device.
    Due to the complicated structure of Rigetti's correlations, for clarity we divided the plots into two parts which show correlations on the left and right halves of the device (the merged plot can be found in the Appendix~\ref{sec:app:misc_experiments}).
    The $8$ black-and-white circles without label represent qubits which were not included in the characterization due to poor fidelity of the single-qubit gates (below $98\%$).
    The meaning of the rest of the symbols is described in the caption of Fig.~\ref{fig:exemplary_correlations4q}.  
  The colors of the lines connecting the neighbors on b,c) are provided such that the crossings of the lines are unambiguous (and have no other meaning otherwise).
    For the layout of the graphs, we used the qubits actual connectivity in the devices (i.e., it is possible to physically implement two-qubit entangling gates on all nearest-neighbors in the graph).
  For IBM's device, we included the qubits in the cluster if the correlations given by Eq.~\eqref{eq:correlations_pair} were higher than $4\%$ in any direction, while we marked qubits as neighbors if the correlations were higher than $1\%$. 
  In the case of Rigetti, the respective thresholds were chosen to be $6\%$ and $2\%$. 
  Moreover, for Rigetti we imposed locality constraints by forcing the joint size of the cluster and the neighborhood to be at most 5 by disregarding the smallest correlations.
  In practice the correlations within clusters were significantly higher than the chosen thresholds -- heatmaps of all correlations can be found in Appendix~\ref{sec:app:misc_experiments}.
}
\end{center}
\end{figure*}
\end{@twocolumnfalse}

\section{Noise mitigation on marginals}\label{sec:mitigation}

Now we will analyze the estimation and correction of marginal probability distributions affected by the kind of correlated noise model introduced in the previous section. 
In the next subsection, we will use those findings to propose a benchmark of the adopted noise model.

\subsection{Noise on marginal probability distributions}
Let us denote by $\p^{noisy}$ a global probability distribution generated by measurement of arbitrary quantum state on the noisy detector for which Eq.~\eqref{eq:noise_model_correlated} holds. 
As mentioned previously, for many interesting problems, such as QAOA or VQE algorithms, one is interested not in the estimation of $\p$ itself (which is an exponentially hard task), but instead in the estimation of multiple marginal probability distributions obtained from $\p$.
Let us say that we are interested in the marginal on a subset $\S$ formed by clusters $\cluster_\cindex$ indexed by a set $\mathcal{A}$, $\S\coloneqq \cup_{\cindex \in \mathcal{A} }{\cluster_{\cindex}}$,  where each $\cluster_{\cindex}$ is some cluster of qubits (see green envelope on Fig.~\ref{fig:example_marginal_subset} for illustration).
Our goal is to perform error mitigation on $\S$. 
To achieve this, we need to understand how our model of noise affects marginal distribution on $\S$.

From the definition of the noise model in Eq.~\eqref{eq:noise_model_correlated} we get that the marginal probability distribution $\p^{noisy}_\S$ on $\S$, is a function of the local noise matrices acting on the qubits from $\S$ and the ``joint neighborhood'' of $\S$, $\N\rbracket{\S} \coloneqq  \cup_{\cindex\in\mathcal{A}}\N_{\cindex} \setminus \S$. 
The set $\N\rbracket{\S}$ consists of qubits which are neighbors of points from $\S$ but are not in $\S$.

Because of this one can not simply use the standard mitigation strategy: i.e., estimate $\p^{noisy}_\S$ and reconstruct probability distribution $\p^{ideal}_\S$ by applying the inverse of $\Lambda$ (more discussion of this matter is given in Appendices).

To circumvent the above problem we propose to use the following natural ansatz for the construction of an \textit{approximate} effective noise model on the marginal $\S$
\begin{align}\label{eq:marginal_lambda_average}
 \Lambda^{\S}_{av}\coloneqq \frac{1}{2^{|\N\rbracket{\S}|}} \sum_{\Y_{\N\rbracket{\S}}} \Lambda^{\Y_{{\N(\S)}}} \ ,
\end{align}
where summation is over states of qubits in the joint neighborhood $\N(\S)$ defined above. 
In other words, it is a noise matrix averaged over all states of the neighbors of the clusters in $\S$, \emph{excluding} potential neighbors which themselves belong to the clusters in $\S$. 
Indeed, note that it might happen that a qubit from one cluster is a neighbor of a qubit from another cluster -- in that case, one does not average over it but includes it in a noise model.
Importantly, the average matrices $\Lambda^{\S}_{av}$ can be calculated explicitly using data obtained in the characterization of the readout noise.

\subsection{Approximate noise mitigation} The noise matrix  $\Lambda^{\S}_{av}$ can be used to  construct the corresponding \emph{effective correction matrix}
\begin{align}\label{eq:marginal_correction_average}
\corr^{\S}_{av} \coloneqq \rbracket{\Lambda^{\S}_{av}}^{-1}\ .
\end{align}	
Correcting the marginal distribution via left-multiplication by the above ansatz matrix is not perfect and can introduce error in the mitigation. 
In the following Proposition~\ref{lem:error_mitigation} we provide an upper bound on that error measured in TV distance.
\begin{proposition}\label{lem:error_mitigation}
Let $\p^{noisy}$ be a probability distribution on $N$ qubits obtained from the $N$ qubit probability distribution $\p^{ideal}$ via stochastic transformation $\Lambda$ of the form given in Eq.~\eqref{eq:noise_model_correlated}. 
Consider the subset of qubits $\S=\cup_{\cindex\in \mathcal{A}}{\cluster_{\cindex}}$. 
Let $\p^{\text{corr}}_\S=\corr^\S_{av} \p^{noisy}_\S$ be the result of the application to the marginal distribution $\p^{noisy}_\S$ of the correction procedure using the effective correction matrix $\corr^{\S}_{av}$ from Eq. \eqref{eq:marginal_correction_average}. 
We then have the following inequality 
\begin{align*}\label{eq:mitigation_error_bound}
&\text{TVD}\rbracket{\p^{\text{corr}}_{\S},\p^{\text{ideal}}_{\S}} \leq\\ &\qquad \quad\leq \frac{1}{2} ||\corr^{\S}_{av}||_{1\rightarrow1} \max_{\Y_{\N\rbracket{\S}}}||\Lambda^{\S}_{av}-\Lambda^{\Y_{\N\rbracket{\S}}}||_{1\rightarrow 1} \ , \numberthis
\end{align*}
where the maximization goes over all possible states of the neighbors of $\S$.
\end{proposition}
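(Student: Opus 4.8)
The plan is to track explicitly how the averaged correction matrix $\corr^\S_{av}$ fails to invert the \emph{true} conditional noise acting on $\S$, and then to control the residual using submultiplicativity of the $1\to1$ norm. The first step is to derive the precise form of the noisy marginal $\p^{noisy}_\S$. Marginalizing Eq.~\eqref{eq:noise_model_correlated} over all outcomes outside $\S$, the left-stochastic cluster matrices attached to clusters with $\cindex\notin\mathcal{A}$ each sum to one and drop out, leaving an \emph{effective conditional} noise matrix on $\S$, which I denote $\Lambda^{\Y_{\N\rbracket{\S}}}$, with entries $\prod_{\cindex\in\mathcal{A}}\Lambda^{\Y_{\N_\cindex}}_{\X_{\cluster_\cindex}|\Y_{\cluster_\cindex}}$ and depending only on the pre-measurement state of the qubits in $\S\cup\N\rbracket{\S}$. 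Introducing the subnormalized vectors $\widetilde{\p}_{\Y_{\N\rbracket{\S}}}\rbracket{\Y_\S}\coloneqq \p^{ideal}_{\S\cup\N\rbracket{\S}}\rbracket{\Y_\S,\Y_{\N\rbracket{\S}}}$, this yields the decomposition $\p^{noisy}_\S=\sum_{\Y_{\N\rbracket{\S}}}\Lambda^{\Y_{\N\rbracket{\S}}}\widetilde{\p}_{\Y_{\N\rbracket{\S}}}$, where normalization gives $\sum_{\Y_{\N\rbracket{\S}}}\widetilde{\p}_{\Y_{\N\rbracket{\S}}}=\p^{ideal}_\S$ and $\sum_{\Y_{\N\rbracket{\S}}}||\widetilde{\p}_{\Y_{\N\rbracket{\S}}}||_1=1$.

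The second step is purely algebraic. Since $\corr^\S_{av}=\rbracket{\Lambda^\S_{av}}^{-1}$, I would write $\p^{ideal}_\S=\corr^\S_{av}\Lambda^\S_{av}\p^{ideal}_\S=\corr^\S_{av}\sum_{\Y_{\N\rbracket{\S}}}\Lambda^\S_{av}\widetilde{\p}_{\Y_{\N\rbracket{\S}}}$ and subtract it from $\p^{\text{corr}}_\S=\corr^\S_{av}\sum_{\Y_{\N\rbracket{\S}}}\Lambda^{\Y_{\N\rbracket{\S}}}\widetilde{\p}_{\Y_{\N\rbracket{\S}}}$. This produces the clean residual $\p^{\text{corr}}_\S-\p^{ideal}_\S=\corr^\S_{av}\sum_{\Y_{\N\rbracket{\S}}}\rbracket{\Lambda^{\Y_{\N\rbracket{\S}}}-\Lambda^\S_{av}}\widetilde{\p}_{\Y_{\N\rbracket{\S}}}$, in which the discrepancy between each conditional noise matrix and its neighborhood-average appears explicitly.

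The final step is to estimate the $\ell_1$ norm of this residual. Applying submultiplicativity of $||\cdot||_{1\to1}$ together with the triangle inequality gives $||\p^{\text{corr}}_\S-\p^{ideal}_\S||_1\leq ||\corr^\S_{av}||_{1\to1}\sum_{\Y_{\N\rbracket{\S}}}||\Lambda^{\Y_{\N\rbracket{\S}}}-\Lambda^\S_{av}||_{1\to1}\,||\widetilde{\p}_{\Y_{\N\rbracket{\S}}}||_1$. Upper-bounding each matrix-norm difference by its maximum over neighborhood states and invoking $\sum_{\Y_{\N\rbracket{\S}}}||\widetilde{\p}_{\Y_{\N\rbracket{\S}}}||_1=1$ collapses the sum; dividing by two, since the TVD is half the $\ell_1$ distance by Eq.~\eqref{eq:TVD}, reproduces exactly Eq.~\eqref{eq:mitigation_error_bound}.

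I expect the main obstacle to lie in the bookkeeping of the first step rather than in any sharp inequality. One must verify carefully that marginalizing the product form factorizes into a conditional matrix on $\S$ times a conditional ideal distribution on $\S\cup\N\rbracket{\S}$, and in particular handle the case in which a neighbor of one cluster of $\S$ itself lies inside another cluster of $\S$, so that it is measured rather than averaged over and must be retained in the conditional matrix rather than summed into $\N\rbracket{\S}$. Once the decomposition $\p^{noisy}_\S=\sum_{\Y_{\N\rbracket{\S}}}\Lambda^{\Y_{\N\rbracket{\S}}}\widetilde{\p}_{\Y_{\N\rbracket{\S}}}$ is secured, the remaining argument is a short and routine norm estimate.
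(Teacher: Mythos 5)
Your proof is correct, but it is organized differently from the paper's. The paper proceeds in two stages: it first proves (Appendix A.1) that the noisy marginal is related to the ideal one by a single \emph{state-dependent} left-stochastic matrix $\Lambda^{(\S)}$, whose columns are convex mixtures of the columns of the matrices $\Lambda^{\Y_{\N\rbracket{\S}}}$ weighted by the unknown conditional distribution $p\rbracket{\Y_{\N\rbracket{\S}}|\Y_{\S}}$; the proof of the Proposition then writes the residual as $\corr^{\S}_{av}\rbracket{\Lambda^{(\S)}-\Lambda^{\S}_{av}}\p^{ideal}_{\S}$, bounds it by $||\corr^{\S}_{av}||_{1\rightarrow1}\,||\Lambda^{(\S)}-\Lambda^{\S}_{av}||_{1\rightarrow1}$, and finally controls the second factor by taking a supremum over all conditional distributions and invoking convexity of the norm to argue the supremum sits at an extremal point, i.e.\ at some fixed $\Y_{\N\rbracket{\S}}$. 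You never construct $\Lambda^{(\S)}$: by keeping the sum over neighborhood states explicit through the subnormalized vectors $\widetilde{\p}_{\Y_{\N\rbracket{\S}}}$ (which are exactly the paper's convex weights in disguise, since $\widetilde{\p}_{\Y_{\N\rbracket{\S}}}\rbracket{\Y_\S}=p\rbracket{\Y_{\N\rbracket{\S}}|\Y_\S}\,p^{ideal}_\S\rbracket{\Y_\S}$), the maximum over $\Y_{\N\rbracket{\S}}$ falls out of a plain triangle inequality together with the normalization $\sum_{\Y_{\N\rbracket{\S}}}||\widetilde{\p}_{\Y_{\N\rbracket{\S}}}||_1=1$, so no supremum over distributions and no extremal-point argument is needed. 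Your route is thus more elementary and self-contained; what the paper's route buys is the intermediate object $\Lambda^{(\S)}$ itself (its Proposition 3), which establishes that the marginal noise is again a stochastic map and thereby motivates the averaged ansatz $\Lambda^{\S}_{av}$ in the first place. You also correctly identify the one genuine bookkeeping hazard -- neighbors of a cluster in $\S$ that themselves lie in $\S$ must be kept as column indices of the conditional matrix rather than averaged over -- which is precisely why the paper defines $\N\rbracket{\S}=\cup_{\cindex\in\mathcal{A}}\N_{\cindex}\setminus\S$.
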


The proof of the above Proposition is given in Appendix~\ref{sec:app:marginals_proof1} -- it uses the convexity of the set of stochastic matrices, together with standard properties of matrix norms and with a triangle inequality (similar methods were used for providing error bounds on mitigated statistics in Ref.~\cite{Maciejewski2020mitigation}). 
Note that the quantity on RHS of Eq.~\eqref{eq:mitigation_error_bound} shows resemblance to $c_{i\rightarrow j}$ in  Eq.~\eqref{eq:correlations_pair} (which we used to quantify correlations). 
Hence $\frac{1}{2}\max_{\Y_{\N\rbracket{\S}}}||\Lambda^{\S}_{av}-\Lambda^{\Y_{\N\rbracket{\S}}}||_{1\rightarrow 1}$ can be interpreted as the maximal TVD between  states on $\S$ generated by $\Lambda^{\S}_{av}$ and states generated by $\Lambda^{\Y_{\N_{\cindex}}}$ (which appear in the description of the noise model). 
This can be also interpreted as a measure of dependence of noise between qubits in $\S$ and the state of their neighbors just before measurement. 
Indeed, if the true noise does not depend on the state of the neighbors, the RHS of inequality Eq.~\eqref{eq:mitigation_error_bound} yields $0$, and it grows when the noise matrices $\cbracket{\Lambda^{\Y_{\N\rbracket{\cindex}}}}$ increasingly differ.

\begin{figure}[!t]
\begin{center}
		\includegraphics[width=0.25\textwidth]{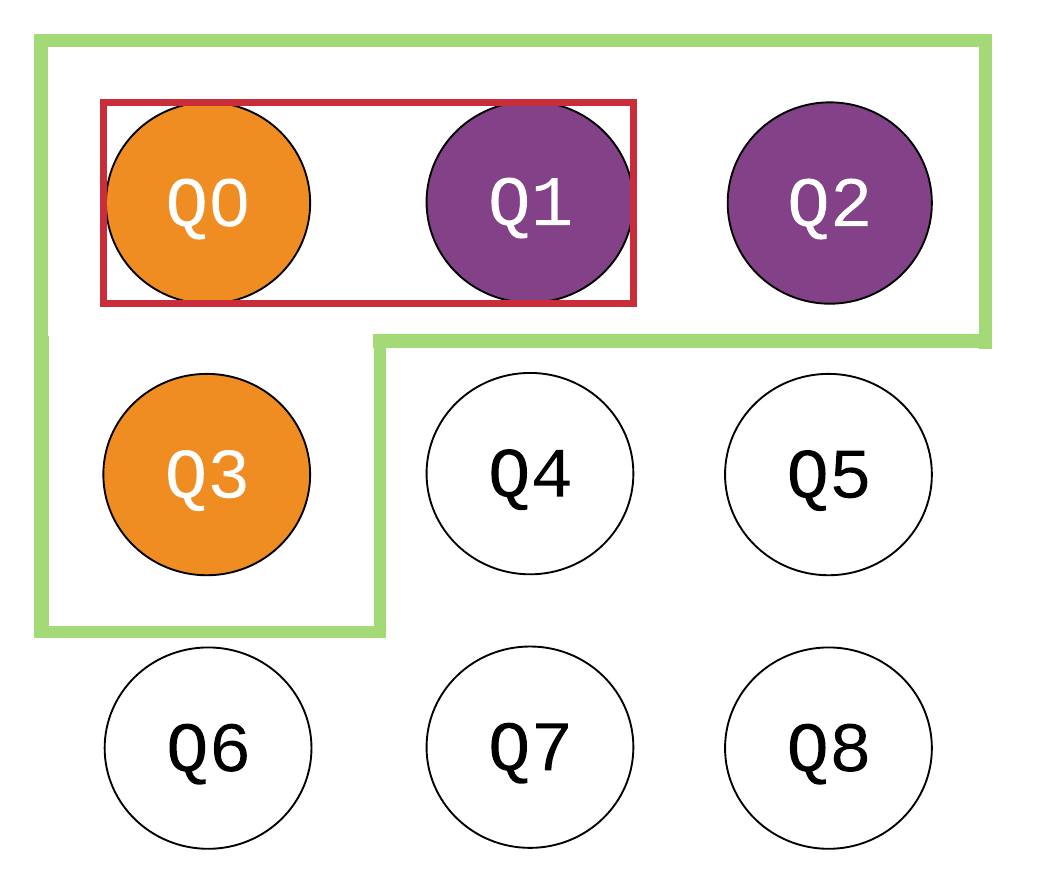}
		\caption{\label{fig:example_marginal_subset}
		Illustration of the cluster structure of an exemplary 9-qubit device. 
		There are two non-trivial clusters present ($\cbracket{0,3}$ and $\cbracket{1,2}$). 
		For clarity, no neighborhood dependencies are shown, though in reality noise on the clusters can be dependent on the qubits outside the clusters.
		When one measures all the qubits but the goal is to estimate four-qubit marginal distribution on $\mathcal{S} = \cbracket{0,3}\cup \cbracket{1,2}= \cbracket{0,1,2,3}$ (green envelope), noise-mitigation should be performed based on the noise model for the whole set of qubits $\S$.
		On the other hand, when the goal is to estimate the two-qubit marginal on qubits $\S_k=\cbracket{0,1}$ (red envelope), it is still preferable to first perform error-mitigation on the four-qubit marginal on qubits $\S=\cbracket{0,1,2,3}$, and then take marginals over qubits $2$ and $3$ to obtain corrected marginal on $\S_k$. See the description in the text.
}
	\end{center}
\end{figure}

In practice, it might happen that one is interested in the marginal distribution on the qubits from some subset $\S_k \subset \S$ (red envelope in Fig.~\ref{fig:example_marginal_subset}).
In principle, one could then consider a coarse-graining of noise-model \textit{within} $\S$ (i.e., construction of noise model averaged over qubits from $\S$ that do not belong to $\S_k$, treating those qubits like neighbors) and perform error-mitigation on the coarse-grained subset $\S_k$.
However, due to the high level of correlations within clusters, we expect such a strategy to work worse than performing error-mitigation on $\S$, and then taking marginal to $\S_k$. Indeed, we observed numerous times that the latter strategy works better in practice. 
Yet, it is also more costly (since, by definition, $\S$ is bigger than $\S_k$), hence in actual implementations with restricted resources one may also consider implementing a coarse-grained strategy.
In the following sections, we will focus on error-mitigation on the set $\S$.
All those considerations can be easily generalized to the case of $\S_k\subset \S$. 

\subsection{Sampling complexity of error-mitigation}
Let us now briefly comment on the sampling complexity of this error-mitigation scheme (the detailed discussion is postponed to Section~\ref{sec:covariances}).
If one is interested in estimating an expected value of local Hamiltonian, a standard strategy is to estimate the local marginals and calculate the expected values of local Hamiltonian terms on those marginals.
Without any error-mitigation, this has exponential sampling complexity in locality of marginals (which for local Hamiltonians is small), and logarithmic complexity in the number of local terms (hence, for typically considered Hamiltonians, also logarithmic in the number of qubits) -- see Eq.~\eqref{eq:statistical_bound} and its derivation in Appendix~\ref{sec:app:marginals_statistical}.
Now, if one adds to this picture error-mitigation \emph{on marginals}, this, under reasonable assumptions, does not significantly change the scaling of the sampling complexity.
We identify here two sources of sampling complexity increase (as compared to the non-mitigated marginal estimation).
First, the noise mitigation via inverse of noise matrix does propagate statistical deviations -- the bound on this quantitatively depends on the norm of the correction matrix (see Ref.~\cite{Maciejewski2020mitigation} and detailed discussion around Eq.~\eqref{eq:statistical_bound} in Section~\ref{sec:covariances}). 
Assuming that the local noise matrices are not singular (which is anyway required for error-mitigation to work), this increases sampling complexity by a constant factor .
Second, the additional errors can come from the fact that, as described above, sometimes it is desirable to perform noise-mitigation on higher-dimensional marginals (if some qubits are highly correlated). 
However, assuming that readout noise has bounded locality, this can increase a sampling complexity only by a constant factor (this factor is proportional to the increase of the marginal size as compared to estimation without error-mitigation). 
In both cases, for a fixed size of marginals (as is the case for local Hamiltonians), it does not change the \emph{scaling} of the sampling complexity with the number of qubits, which remains logarithmic.

\begin{@twocolumnfalse}
\begin{figure*}
\begin{center}
\includegraphics[width=0.98\textwidth]{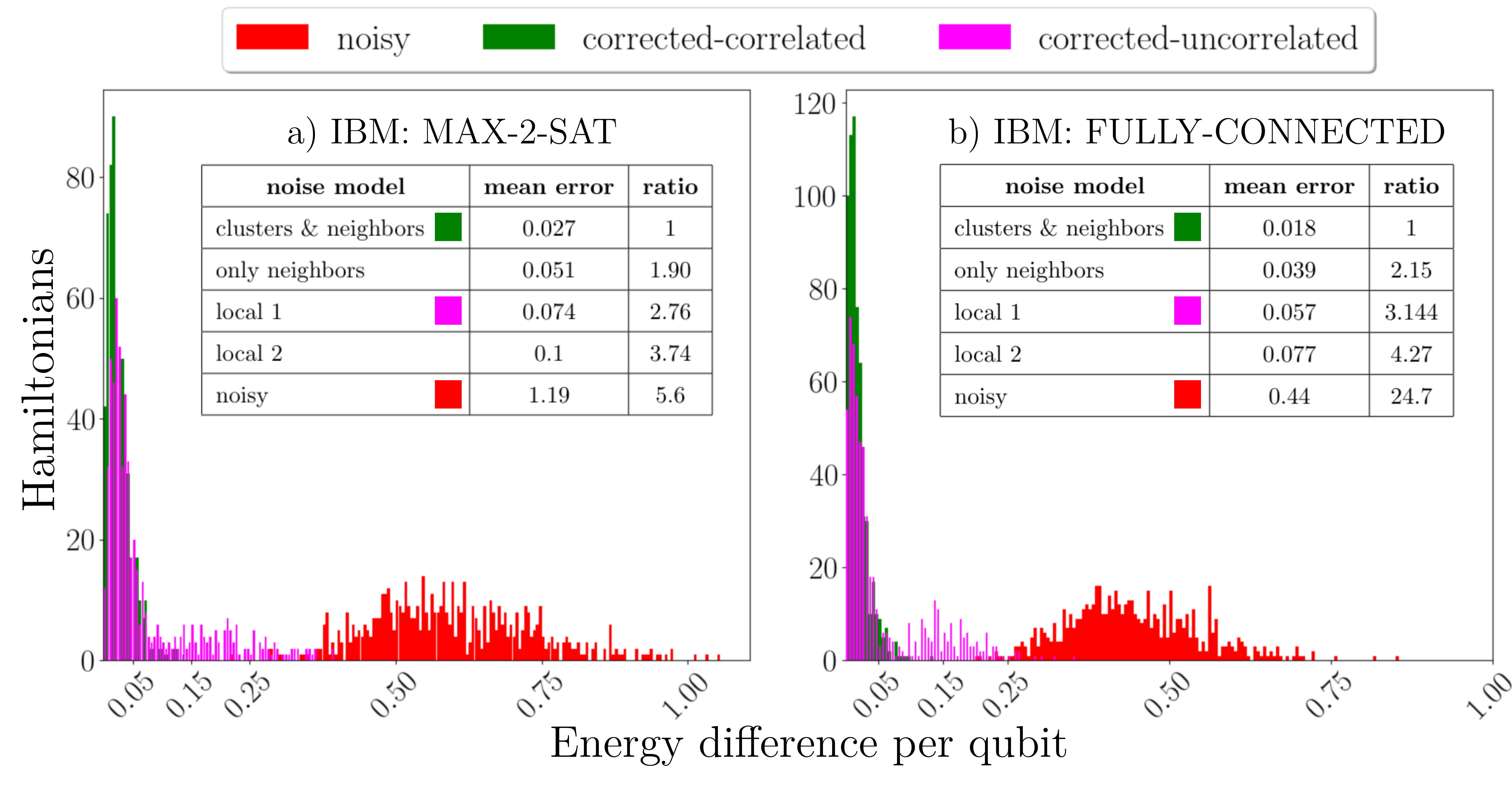}\\
\includegraphics[width=0.75\textwidth]{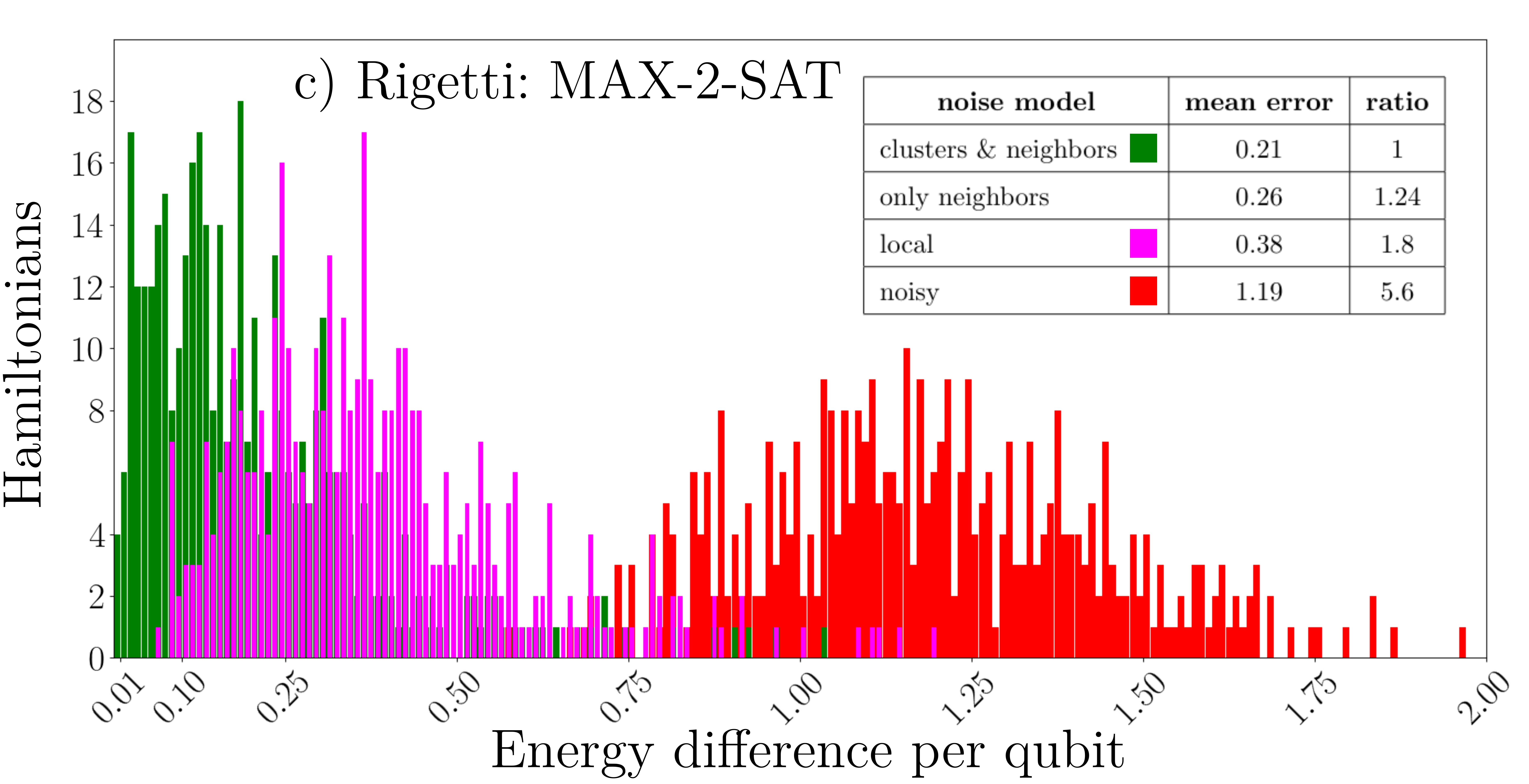}
\caption{Results of an experimental benchmark of the readout noise mitigation on (a-b) IBM's 15-qubit \textit{Melbourne} device, and (c) a subset of 23 qubits on Rigetti's \textit{Aspen-8} device. 
Each histogram shows data for 600 (IBM's) or 399 (Rigetti) different random Hamiltonians -- (a,c) random MAX-2-SAT and (b) fully-connected graph with random interactions and local fields. 
The horizontal axis shows the absolute energy difference (between estimated and theoretical) divided by the number of qubits.
The histogram comparison is done with no mitigation and with uncorrelated noise model characterization. 
The embedded tables show average errors depending on the adopted noise model. 
Here "ratio" refers to ratio of means.
Additional second row in each figure shows data for noise model with only trivial (single-qubit) clusters and their neighborhoods.
In case of IBM data, the additional fifth row illustrates memory effects. See description in the main text.} 
\label{fig:benchmark}
\end{center}
\end{figure*}

\end{@twocolumnfalse}

\section{Benchmark of the noise model}

\subsection{Energy estimation of local Hamiltonians}
After having characterized the noise model, how to assess whether it is accurate? 
To answer this question we propose the following, application-driven heuristic benchmark. 
The main idea is to test whether the error-mitigation of local marginals based on the adopted noise model is accurate. To check this we propose to to consider the problem of estimation of the expectation value $\tbracket{\H}$  of a local classical Hamiltonian 
	\begin{align}\label{eq:hamiltonian_local}
    \H = \sum_{\alpha}H_{\alpha}\ 
    \end{align} 
    measured on its ground state $\ket{\psi_{0}\rbracket{\H}}$. 
   Here by ``local'' we mean that the maximal number of qubits on which each $H_{\alpha}$ acts non-trivially does not scale with the system size. 
   Classicality of the Hamiltonian means that every $\H_\alpha$ is a linear combination of products of  $\sigma_z$ Pauli matrices. 
   In turn the ground state  $\ket{\psi_{0}\rbracket{\H}}$ can be chosen as classical i.e.,
    \begin{align}
        \ket{\psi_0\rbracket{\H}} = \ket{\X\rbracket{\H}} \ ,
    \end{align}
    for some bit-string $\X\rbracket{\H}$ representing one of the states from the computational basis. This problem is a natural candidate for error-mitigation benchmark due to at least three reasons.
    First, a variety of interesting optimization problems can be mapped to Ising-type Hamiltonians from Eq.~\eqref{eq:hamiltonian_local}.
    Indeed, this is the type of Hamiltonians appearing in the Quantum Approximate Optimization Algorithm. 
    The goal of the QAOA is to get as close as possible to the ground state $\ket{\psi_{0}\rbracket{\H}}$.
    Second, the estimation of $\langle \H \rangle  $ can be solved by the estimation of energy of local terms $\tbracket{H_{\alpha}}$ and therefore error-mitigation on marginals can be efficiently applied. Finally, the preparation of the classical ground state $\ket{\psi_0}$, once it is known, is very easy and requires only the application of local $\sigma_x$ (NOT) gates. This works in our favor because we want to extract the effects of the readout noise, and single-qubit gates are usually of high quality in existing devices.

    To perform the benchmark we propose to implement quantum circuits preparing ground states of many different local classical Hamiltonians, measure them on the noisy device, and perform two estimations of the energy - first from the raw data, and second with error-mitigation based on our characterization.    Naturally, it is also desirable to compare both with the error-mitigation based on a completely uncorrelated noise model (cf. Eq.~\eqref{eq:uncorrelated_noise}).
	We propose that if the mitigation based on a particular noise model works well on average (over some number of Hamiltonians), one can infer that the model is more accurate as well.

\subsection{Experimental results}	
We applied the benchmark strategy described above on the 15-qubit IBM's \textit{Melbourne} device and 23-qubit subset of Rigetti's \textit{Aspen-8} device.
We implemented Hamiltonians encoding random MAX-2-SAT instances with clause density 4 (600 on IBM and 399 on Rigetti) and fully-connected graphs with random interactions and local fields (600 on IBM).
MAX-2-SAT instances were generated by considering all possible sets of $4*15=60$ clauses with 8 variables, choosing one randomly and mapping it to Ising Hamiltonian (see, e.g., Ref.~\cite{Santra2014max}).
Random interactions and local fields for fully-connected graphs were chosen uniformly at random from $\sbracket{-1,1}$.
 Figure~\ref{fig:benchmark} presents the results of our experiments, together with a comparison with the uncorrelated noise model.
 
 Let us first analyze the results of experiments performed on a 15-qubit IBM's device.
Here it is clear that the error-mitigation based on our noise model performs well, often reducing errors in estimation by as much as one order of magnitude. 
 We further note that the uncorrelated noise model performs quite well (yet being visibly worse than ours). 
 To compare the accuracy by using a single number (as opposed to looking at the whole histogram), we take the ratio of the mean deviations from the ideal energy for the error-mitigated data based on two models.    
 The results are presented in tables embedded into Fig.~\ref{fig:benchmark}.
 In those tables, we provide also additional experimental data. 
 Namely, for each figure there the second row shows data for noise model labeled as "only neighbors". 
 This corresponds to noise model in which each qubit is a trivial, single-qubit cluster, and correlations are included only via neighborhoods.
 The worse results of error-mitigation for such model as compared to full clusters-neighborhoods model motivates the introduction of non-trivial clusters.
 Furthermore, we found experimentally that the characterization of the uncorrelated noise model exhibits significant memory effects (see, for example, Ref.~\cite{Rudinger2019probing}).
     Particularly, if one performs uncorrelated noise characterization in a standard way, i.e., by performing characterization in a separate job request to a provider, without any other preceding experiments (``local 2'' in tables), the accuracy (measured by the error in energy after mitigation based on a given noise model) is much lower than for the characterization with some other experiments performed \textit{prior} to the characterization of the uncorrelated model (``local 1'' in tables).
     Indeed, the difference in mean accuracy can be as big as $\approx 26\%$.

Clearly, the overall performance of Riggeti's 23 qubit device is lower than that for IBM's device.
 First, the effects of noise (measured in energy error per qubit) are stronger.
 Second, the mean error with error-mitigation is only around $\approx 5.6$ times smaller than the error without error-mitigation (as opposed to factor over $22$ for IBM's device).
 Third, the comparison to the uncorrelated noise model shows that the uncorrelated model performs not much worse than the correlated one.
 
Here we provide possible explanations of this poorer quality of experiments performed on Rigetti's device.
  Due to the limited availability of the Rigetti’s device, we used a much lower number of samples to estimate Hamiltonian's energies in those experiments. 
  Specifically, each energy estimator for Rigetti's experiments was calculated using only $1000$ samples, while for IBM's experiments the number of samples was $40960$. 
  This should lead to statistical errors higher by a factor of roughly $\sqrt{41}\approx 6.4$ (and note that the errors in error-mitigated energy estimation in Rigetti's device are around $7.8$ times higher than corresponding errors for IBM's device for the same class of Hamiltonians).
  Similarly, we used fewer measurements to perform DDOT -- on Rigetti's device, we implemented 504 DDOT circuits sampled $1000$ each, while on IBM's device we performed $749$ circuits sampled $8192$ each.
  Less DDOT circuits imply less balanced collection, hence, as already mentioned, some correlations might have been overestimated.
  In summary, our characterization of this device was in general less accurate than on IBM's device.
    This might be further amplified by the fact that single-qubit gates (which are used to implement DDOT circuits) were of lower quality for Rigetti's device.
 Finally, as illustrated in Fig.~\ref{fig:correlations_models}, we observed that correlations in measurement noise for Rigetti's device are much more complex than in the case of IBM's.
As mentioned in the Figure's description, to work around this we imposed locality constraints in the constructed noise model by disregarding the lowest correlations between qubits, which made the model less accurate.

  We note that due to the limited availability of Rigetti's device, we did not perform the study of memory effects similar to that performed in IBM. The local noise model presented for this device originates from a separate uncorrelated noise characterization performed prior to the rest of the experiments (hence it is analogous to the ``local 2'' model in IBM's case).

To summarize, presented results suggest that in experiments on near-term quantum devices it will be indispensable to account for cross-talk effects in measurement noise.
For both studied quantum devices we provided proof-of-principle experiments showing significant improvements in ground state energy estimation on the systems of sizes in the NISQ regime.
Motivated by those results, we hope that the framework developed in this work will prove useful in the future, more complex experiments on even larger systems.

\section{Error analysis for QAOA}\label{sec:covariances}
In this section, we analyze the magnitude of errors resulting from our noise-mitigation scheme when applied to an energy estimation problem. Those errors result as a combination of two independent sources. First, from the fact that we use approximate correction matrices instead of the exact ones (see in Proposition~\ref{lem:error_mitigation}). Secondly, by statistical errors due to the common practice of measuring \textit{multiple} marginals \textit{simultaneously} in a single run of the experiment. In the following, we will analyze the first and second effects separately and then provide a bound that takes both into account. We restrict our analysis to local Hamiltonians diagonal in the computational basis. A detailed derivation of the results below can be found in Appendices.

\subsection{QAOA overview} Before starting, let us provide a short overview of the QAOA algorithm. 
In standard implementation \cite{farhi2014quantum}, one initializes quantum system to be in $\ket{+}^{\otimes \noq}$ state, where $\ket{+}=\frac{1}{\sqrt{2}}\rbracket{\ket{0}+\ket{1}}$.
Then $p$-layer QAOA is performed via implementation of unitaries of the form
\begin{align}
U_p\rbracket{\mathbf{\alpha},\mathbf{\beta}} = \prod_{j}^p U_{\alpha_j}U_{\beta_j} , \ 
\end{align}
where $\mathbf{\alpha},\ \mathbf{\beta}$ are the angles to-be-optimized.
Unitary matrices are given by $U_{\alpha_j} \coloneqq \exp\rbracket{-i\ \alpha_j \H_{D}}$, and 
$U_{\beta_j} \coloneqq \exp\rbracket{-i\ \beta_j \H_{O}}$,
where $\H_{D}$ is driver Hamiltonian (which we take to be $\H_{D} = \sum_{k}^{\noq} \sigma_{x}^{k}$), and $\H_{O}$ is objective Hamiltonian that one wishes to optimize (i.e., to find approximation for its ground state energy). 
The quantum state after $p$-th layer is $\ket{\psi_p} = U_p \ket{+}^{\otimes \noq}$
and the function which is passed to classical optimizer is the estimator of the expected value $\bra{\psi_p}\H_{O}\ket{\psi_p}$ (note that this makes those estimators to effectively be a function of parameters $\cbracket{\alpha_j},\cbracket{\beta_j}$).
The estimator is obtained by sampling from the distribution defined by the measurement of $\ket{\psi_p}$ in the computational basis, taking the relevant marginals, and calculating the expected value of $\H_{O}$ using values of those estimated marginals.
Let us now proceed to the analysis of possible sources of errors while performing noise-mitigation on the level of marginals to estimate the energy of local Hamiltonians, such as those present in QAOA.

\subsection{Approximation errors} We start by recalling that performing noise mitigation with the average noise matrix instead of the exact one subjects the estimation of each marginal to an error upper bounded by Eq.~\eqref{eq:mitigation_error_bound}. 
It follows that the correction of multiple marginal distributions can lead to the accumulation of errors which for each marginal $\alpha$ (we label subset of qubits by $\alpha$ so that local term $H_{\alpha}$ acts non-trivially on qubits from $\alpha$) take the form
\begin{align}\label{eq:approx_error_definition}
        \delta^{\alpha} \coloneqq  \frac{1}{2} ||C^{\S_{\alpha}}_{av}||_{1\rightarrow1} \max_{\Y_{\N\rbracket{\S_{\alpha}}}}||\Lambda^{\S_{\alpha}}_{av}-\Lambda^{\Y_{\N\rbracket{\S_{\alpha}}}}||_{1\rightarrow 1}\,
\end{align}
where set $\S_{\alpha}=\cup_{\gamma\in\mathcal{A}}\cluster_{\gamma}$, $\mathcal{A}=\{\cindex\ |\ \cluster_\cindex  \cap \alpha \neq \emptyset\}$, consists of clusters to which qubits from $\alpha$ belong, and $C^{\S_{\alpha}}_{av}$ is the average correction matrix for the marginal on that set.	
It is straightforward to show that the total possible deviation between the error-mitigated expected value $\tbracket{\H^{\text{corr}}}$ and the noiseless one $\tbracket{\H}$ is upper bounded by
\begin{align*}
	    |\tbracket{\H^{\text{corr}}}-\tbracket{\H}| \leq 2\sum_{\alpha}\delta^{\alpha}||H_{\alpha}|| \\
	    \quad\rbracket{\text{additive approximation bound}}\ . \numberthis
\end{align*}

\subsection{Additive statistical bound} Moving to the effect of measuring several marginals simultaneously, let us start by considering the simplest bound on the propagation of statistical deviations under our error-mitigation.
In Appendix~\ref{sec:app:marginals} we derive that the TVD (Eq.~\eqref{eq:TVD}) between the estimator $\p^{\text{est}}_{\alpha}$ and the actual local marginal $\p_{\alpha}$ is upper bounded by
\begin{align*}\label{eq:statistical_bound}
    &\text{TVD}\rbracket{\p^{\text{est}}_{\alpha},\p_{\alpha}} 
    \leq\\ &\leq \epsilon^{*} \coloneqq \sqrt{\frac{\log\rbracket{2^n-2}+\log{\rbracket{\frac{1}{P_{ \text{err}}}}+\log{\rbracket{K}}}}{2s}} \ , \numberthis
\end{align*}
where $n$ is the number of of qubits in the support of each local term (for simplicity we assume it to be the same for all $H_{\alpha}$), $K$ is the total number of local terms, $s$ is the number of samples, and $1-P_{\text{err}}$ is the confidence with which the above bound is stated.
Importantly, the above bound is satisfied for each marginal \emph{simultaneously}, hence the logarithmic overhead $log\rbracket{K}$.
Using Eq.~\eqref{eq:statistical_bound} together with standard norm inequalities one obtains the following bound for the total energy estimation
\begin{align*}\label{eq:additive_bound}
        |{\H^{\text{est}}_{\text{corr}}}-\tbracket{\H}| \leq \sum_{\alpha} ||H^{\alpha}||\ ||\corr_{\alpha}||_{1\rightarrow1}\ \epsilon^{*} \\\quad \rbracket{\text{additive statistical bound}}.\qquad \numberthis
\end{align*}
Here ${\H^{\text{est}}_{\text{corr}}}$ denotes the estimator of the total energy with error-mitigation performed on each local term independently and $\corr_{\alpha}$ is the exact (not approximate)  correction matrix on marginal $\alpha$. 

\subsection{Joint approximation and statistical bound} 
The two bounds provided above took into account the two considered sources of errors independently.
By using the triangle inequality (see Appendix~\ref{sec:app:marginals_energy}), we can now combine them to obtain
\begin{align*}\label{eq:additive_bound_total}
  &|{\H^{\text{est}}_{\text{corr}}}-\tbracket{\H}| \leq \\
  &2\sum_{\alpha}||H^{\alpha}||\ \rbracket{\ \underbrace{\epsilon^{*}||\corr^{\S_{\alpha}}_{av}||_{1\rightarrow1}}_{\text{statistical errors}} +\underbrace{  \; \delta_{\alpha}}_{\text{approximation errors}} \ }\ . \numberthis
\end{align*}
It follows that the dominant scaling in the overall error are linear in the number of terms $K$ caused by summing over all of them and the logarithmic overhead in $\epsilon^{*}$ added by the statistical errors.

\subsection{Sampling complexity of energy estimation} While the additive bound from Eq.~\eqref{eq:additive_bound} could be tight \emph{in principle}, we observed numerically on many occasions that in practice the statistical errors are much smaller 
(see Fig.~\ref{fig:statistical_illustration} for exemplary results).

Here we will provide arguments that show that natural estimators of local energy terms $H_\alpha$ effectively behave as uncorrelated for a broad class of quantum states, hence leading to a significantly smaller total error than that obtained from an additive bound.

We start by describing in detail the natural strategy for energy  estimation in the considered scenarios. In this work we are concerned with classical local Hamiltonians. This means that all local terms $H_\alpha$ can be measured simultaneously via a single computational basis measurement. The natural estimation procedure amounts to repeating $s$ independent computational basis measurements on a quantum state $\rho$ of interests. Outcomes of these measurements are then used to  obtain local energy estimators $E^{\text{est}}_\alpha=\frac{1}{s}\sum_{i=1}^s E_\alpha^i$ , where $E^i_\alpha$ are values of the local energy terms obtained in the $i$'th experimental run. Now to perform estimation of expected value of energy, $\langle \H \rangle$, we simply sum the local estimators $E^{\text{est}}_\alpha$  
\begin{equation}
    \H^{\text{est}} =\sum_{\alpha}H_{\alpha}^{\text{est}} =\frac{1}{s} \sum_{i=1}^s \sum_{\alpha} E^i_\alpha\ .
\end{equation}
It is clear that $\H^{\text{est}}$ is an unbiased estimator of $\langle \H \rangle$. Likewise $E^{\text{est}}_\alpha$ are unbiased estimators of $\langle H_\alpha \rangle$.  

We would like to understand statistical properties of     $\H^{\text{est}}$ (specifically its variance) as as a function of number of experimental runs (samples) $s$ and the number of local terms in the Hamiltonian $K$. To this and we observe that random variables $E^i_\alpha, E^j_\beta$ are independent unless $i=j$ and therefore  
\begin{align}
    \mathrm{Var}\rbracket{\H^{\text{est}}}  = \frac{1}{s}  \sum_{\alpha,\beta} \mathrm{Cov}(E_\alpha^i,E_\beta^i)  \ , 
\end{align}
Assuming that measurements of the energy $E^i_\alpha$ are distributed according to the probability compatible with the Born rule allows us to write
\begin{equation}
    \mathrm{Cov}(E_\alpha^i,E_\beta^i)=\mathrm{Cov}\rbracket{H_{\alpha},H_{\beta}} = \tbracket{H_{\alpha}H_{\beta}}-\tbracket{H_{\alpha}}\tbracket{H_{\beta}}\ .
\end{equation}
Consequently we have
\begin{equation}\label{eq:varianceSUM}
    \mathrm{Var}\rbracket{\H^{\text{est}}}= \frac{1}{s} \mathrm{Var}(\H) = \sum_{\alpha,\beta} \mathrm{Cov}(H_\alpha,H_\beta) .
\end{equation}

The variance of $\H^{\text{est}}$ can be related to the  \emph{sample complexity} of the energy estimation.  Let $\Delta E>0$ be some positive number. Then using Chebyshev inequality we get
\begin{equation}
\text{Prob}\rbracket{|\H^{\mathrm{est}}-\tbracket{\H}|\geq \Delta E} \leq  \frac{\mathrm{Var}\rbracket{\H^{\mathrm{est}}}}{(\Delta E)^2}\ .
\end{equation}
Choosing some parameter $P_f$ as an upper bound bound on the RHS of this inequality, i.e., $\frac{\mathrm{Var}\rbracket{\H^{\mathrm{est}}}}{(\Delta E)^2}\leq P_f$.
Then we obtain using \eqref{eq:varianceSUM} that for the number of samples $s$ satisfying 
\begin{align}\label{eq:chebyshev_confidence}
 \frac{\mathrm{Var}\rbracket{\H}}{(\Delta E)^2 P_f} \leq s \ .
\end{align}
the estimator $\H^{\mathrm{est}}$ will be within accuracy form the expectation value $\langle \H \rangle$ with probability at least $1-P_f$. Now, if different Hamiltonian terms $\H_\alpha$ are correlated then according to the above bound the sample complexity grows like $K^2$, where $K$ is the total number of terms in  $\H$. Conversely, if $\mathrm{Cov}(H_\alpha,H_\beta)\approx0$ (for $\alpha\neq\beta$) then get sample complexity scaling linearly with $K$.

The above consideration can be equivalently translated to the estimates of the confidence intervals associated with estimator $\H^{\mathrm{est}}$ for a fixed value of samples $s$.  Specifically, if local terms are uncorrelated, then the confidence interval (statistical error) will scale as square-root $\sqrt{K}$ of the number of Hamiltonian terms, contrary to the pessimistic bound in Eq.~\eqref{eq:additive_bound_total} which is linear in $K$. We now show that one should expect that the sub-linear scaling of energy errors described above holds for a variety of quantum states, which in turn greatly reduces the sampling complexity compared to the pessimistic (linear) bound. We want to emphasize that our results are of immense practical importance for near-term devices. Our findings indicate a reduction (compared to the naive bounds)  of sample complexity of energy estimation by orders of magnitude, even for relatively small systems ($K\approx 100$  and larger).

\subsection{Generic 2-local Hamiltonians in QAOA}
We start by considering states that appear at the beginning and at the end of QAOA.
In recent work \cite{farhi2020quantum} it was shown that after $p$-th layer of QAOA optimization, for given two local terms $H_{\alpha}$ and $H_{\beta}$, there is no entanglement between qubits from $\alpha$ and qubits from $\beta$ if they are further away from each other than $2p$ (on a graph corresponding to interactions present in a Hamiltonian).
Therefore, for generic QAOA optimization, one can expect that for low $p$, i.e., at the beginning of the QAOA, the local Hamiltonian terms will be uncorrelated variables.
On the other hand, it is is well known that the ground states of local Hamiltonians are product states. 
Therefore, provided that QAOA worked well and converged to the state close to the ground state of objective Hamiltonian, the same arguments can be applied for the reversed QAOA circuit, and one expects local terms to be uncorrelated as well.

The following Proposition provides a more quantitative description for generic 2-local Hamiltonians corresponding to random graphs.
\begin{proposition}\label{lem:random_graphs}
Consider Hamiltonian with connectivity given by Erd\"os-R\'enyi random graph in which each edge of the graph is added independently at random with some fixed probability.
Assume that the probability of adding edge is chosen so the average degree of a node is equal to $q=\frac{K}{N}$, hence that a random graph has on average $N$ nodes and $K$ edges.
For QAOA starting from product state, if the number of layers satisfies
\begin{align}\label{eq:p_regime}
    p <   \frac{  w\log(N)}{ {8} \log(2q/ \ln(2))} - 1
\end{align}
with $w<1$, then with probability $1 - e^{-N^{a/2}}$ the variance of the Hamiltonian is bounded by 
 \begin{align}\label{eq:variances_generic_bound}
    \mathrm{Var}\rbracket{\H} \le f_{\H} \ q\ N^{{A}+1}
\end{align}
where
\begin{align*}
 f_{\H} &= \max_{\alpha,\beta} ||H_{\alpha}||\ ||H_{\beta}||  \ ,\\
   A &= w \, \frac{(2 + |\log_{2q}(\ln(2))|) }{(1+ |\log_{2q}(\ln(2))|)}  \ , \\
   a &= \frac{w}{3(1 + |\log_{2q}(\ln(2))|}  \ , \numberthis
\end{align*}
where maximization in first definition goes over all two-qubit local terms acting on subsets of qubits $\alpha$ and $\beta$. 
Since we can always choose $w <1 $ such that $A < 1$, the variance thus scales sub-quadratically for shallow depth QAOA. 
Importantly,  the parameter $w$  can be chosen in such a way that $A$ goes asymptotically to $0$ (an exemplary choice is $w=O(\log(N)^{-\frac{1}{2}})$, which in turn means that in the $N$ regime the variance $\mathrm{Var}(\H)$ will scale almost linearly with the number of terms.

\end{proposition}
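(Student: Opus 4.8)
The plan is to reduce the variance bound to a purely combinatorial counting problem on the random interaction graph, and then to control that count with a concentration estimate on neighborhood sizes. I would start from the exact decomposition $\mathrm{Var}(\H) = \sum_{\alpha,\beta}\mathrm{Cov}(H_\alpha,H_\beta)$ of Eq.~\eqref{eq:varianceSUM} and exploit the locality result of Ref.~\cite{farhi2020quantum}: after $p$ layers the causal cone of each term $H_\alpha$ reaches only qubits within graph-distance $p$, so when two terms $\alpha,\beta$ are separated by more than $2p$ their cones are disjoint. Because the initial state $\ket{+}^{\otimes N}$ is a product state and the Heisenberg-evolved operators $U_p^\dagger H_\alpha U_p$ and $U_p^\dagger H_\beta U_p$ then act on disjoint registers, the two-point function factorizes and $\mathrm{Cov}(H_\alpha,H_\beta)=0$. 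Hence only the pairs with $\mathrm{dist}(\alpha,\beta)\le 2p$ survive the sum.

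For each surviving pair I would apply Cauchy--Schwarz, $|\mathrm{Cov}(H_\alpha,H_\beta)|\le \sqrt{\mathrm{Var}(H_\alpha)\mathrm{Var}(H_\beta)}\le \|H_\alpha\|\,\|H_\beta\|\le f_{\H}$, so that $\mathrm{Var}(\H)\le f_{\H}\,N_{\mathrm{pairs}}$, where $N_{\mathrm{pairs}}$ is the number of ordered pairs of edges within distance $2p$. Writing $N_{\mathrm{pairs}}=\sum_\alpha |\{\beta:\mathrm{dist}(\alpha,\beta)\le 2p\}|\le K\,\max_\alpha|\{\beta:\mathrm{dist}(\alpha,\beta)\le 2p\}|$ and recalling $K=qN$, the whole problem collapses to showing that, with probability at least $1-e^{-N^{a/2}}$, every edge has at most $N^{A}$ edges within distance $2p$. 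This gives $\mathrm{Var}(\H)\le f_{\H}\,qN\cdot N^{A}=f_{\H}\,q\,N^{A+1}$, which is exactly Eq.~\eqref{eq:variances_generic_bound}.

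The combinatorial heart is therefore a uniform (over all vertices) upper bound on the ball size $|B(v,2p+1)|$ in the Erd\"os-R\'enyi graph $G(N,q/N)$, since the number of edges meeting the ball $B(v,2p)$ is at most $q\,|B(v,2p+1)|$ on a high-probability event on which the degrees behave typically. I would model the neighborhood growth by the exploration process, whose per-vertex offspring is stochastically dominated by $\mathrm{Binomial}(N,q/N)$, so the expected ball size grows like $q^{2p}$. To force the bound to hold simultaneously for all $N$ starting vertices I would apply a Chernoff-type tail bound to the level sizes and take a union bound; the slack demanded by the union bound is what replaces the growth base $q$ by $2q$. Substituting the admissible range of $p$ from Eq.~\eqref{eq:p_regime} and using $\log(2q/\ln 2)=\log(2q)\,(1+|\log_{2q}(\ln 2)|)$ then converts the $(2q)$-exponential ball bound into the polynomial factor $N^{A}$, while the per-vertex failure probability, which is $e^{-\Theta(N^{a/2})}$ and survives the union factor $N\ll e^{N^{a/2}}$, yields the stated $1-e^{-N^{a/2}}$.

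The main obstacle is precisely this concentration step: making the constants $A$ and $a$ come out as claimed requires balancing the deviation threshold in the tail bound against the union-bound failure probability, and it is this optimization -- not the covariance factorization, which is an off-the-shelf lightcone argument, nor the Cauchy--Schwarz estimate, which is routine -- that \emph{couples} $A$ and $a$ through the quantity $|\log_{2q}(\ln 2)|$. I would also have to verify that degree and ball-size fluctuations do not spoil the edge count ``$\le q\,|B(v,2p+1)|$'', most cleanly by absorbing them into the same high-probability event, and to confirm that the claimed limit $A\to 0$ under $w=O(\log(N)^{-1/2})$ is consistent with $p\ge 0$ remaining in the allowed window.
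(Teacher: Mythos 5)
Your reduction follows the paper's proof skeleton exactly: decompose $\mathrm{Var}(\H)$ into covariances, eliminate pairs at graph distance greater than $2p$ via the lightcone factorization of Ref.~\cite{farhi2020quantum} (valid because the initial state is a product state), bound each surviving covariance by $f_{\H}$, and reduce the claim to a high-probability uniform bound $\max_\alpha |C(\alpha,2p)|\le N^{A}$ on the number of edges within distance $2p$ of any edge, giving $\mathrm{Var}(\H)\le f_{\H}\,K\,N^{A}=f_{\H}\,q\,N^{A+1}$. Where you diverge is in how that combinatorial bound is obtained. The paper does not prove it: it invokes the Neighborhood Size Theorem of Ref.~\cite{farhi2020quantum} as a black box (the displayed constants $A$ and $a$ are precisely those produced by that proof with the parameter $s=1$), and converts vertex-ball control into edge control via the observation that for $\alpha=(i,j)$ every edge of $C(\alpha,r)$ has both endpoints in $B(i,r+2)$, hence $|C(\alpha,r)|\le |B(i,r+2)|^2$; this is why the ball bound is used at scale $N^{A/2}$. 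You instead propose to re-derive the concentration from scratch (branching-process domination by $\mathrm{Binomial}(N,q/N)$, Chernoff bounds on level sizes, union bound over vertices). That is a legitimate, more self-contained route, but it is the entire difficulty of the proposition, you leave it as a sketch, and the specific constants $A$ and $a$ in the statement are not canonical --- they are artifacts of the particular optimization in Ref.~\cite{farhi2020quantum} --- so there is no guarantee your balance of tail threshold against union bound reproduces them.

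Beyond that incompleteness, one concrete step in your sketch fails as written: you bound the number of edges near $\alpha$ by $q\,|B(v,2p+1)|$ ``on a high-probability event on which the degrees behave typically.'' In $G(N,q/N)$ with constant $q$ the maximum degree is $\Theta(\log N/\log\log N)$, not $O(q)$, so no event forcing all relevant degrees to be $O(q)$ holds with the required probability $1-e^{-N^{a/2}}$. The fix is exactly the paper's squaring trick, $|C(\alpha,2p)|\le |B(i,2p+2)|^2$, which counts pairs of ball vertices and needs no degree control at all.
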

The proof of the above Proposition~\ref{lem:random_graphs} uses insights from Ref.~\cite{farhi2020quantum} and is delegated to Appendix~\ref{sec:app:sampling_proof2}.

\subsection{Random quantum states}
A simple argument can be made to show that for generic Haar-random pure sates, as well as random states appearing in random local quantum circuits, the variance of a local Hamiltonian  $\H$ behaves as if different energy terms were independent. Let $\ket{\psi}$ be a pure state on $(\mathbb{C}^2)^{\otimes N}$.  For a subset of qubits $\gamma$,  let $\rho_\gamma$ denote the reduced density matrix $\ketbra{\psi}{\psi}$ on qubits in $\gamma$, and let $\id_\gamma$ denote the normalized maximally mixed state on qubits in $\gamma$.  Assume now that local Hamiltonian terms $H_\alpha,H_\beta$ have disjoint supports. It can be then shown (see Proposition \ref{prop:covMIXEDBOUND} in Appendices for the proof) that 
 \begin{equation}
     \mathrm{Cov}(H_\alpha,H_\beta) \leq 3 \|H_\alpha\|\|H_\beta\|\,  \|\rho_{\alpha\cup\beta}-\id_{\alpha\cup \beta} \|_1\ .
 \end{equation}
Now, it is a well-known fact \cite{entFOUNDstatMech} that with overwhelming probability \emph{all} few-body marginals $\rho_\gamma$ of a Haar-random multiqubit states $\ket{\psi}$ are exponentially close to maximally mixed states. Therefore, assuming that $\max_{\alpha,\beta} $ does not scale with the system size, we have that high probability over the choice of $\ket{\psi}$, for every disjoint terms $H_\alpha,H_\beta$ in a local Hamiltonian $\H$ 
\begin{equation}
    \mathrm{Cov}(H_\alpha,H_\beta) \approx 0\ . 
\end{equation}
The above reasoning mimics the computation done in Theorem 1 of \cite{OszmaniecMetro}, where it was used to establish that generic Haar-random pure states attain only the so-called standard quantum limit in the paradigmatic interferometric scenarios (again the underlying argument was based on the fact that \emph{all} few-body reduced density matrices of a generic pure state $\ket{\psi}$ are very close to maximally mixed states with overwhelming probability).

Analogous analysis can be carried out for typical states generated by local random quantum circuits. Such circuits are known to form approximate $t$-designs, i.e., capture properties of typical Haar-random unitaries captured by low-degree moments  \cite{BHH2016}. Specifically,  a recent paper \cite{Nick2020} considered evolution of local entropies for pure states $\ket{\psi}$ generated by shallow local random quantum circuits. From Theorem 1 of that work it directly follows that with probability greater than $1-\delta$ over the choice random states $\ket{\psi}$ generated by random local circuits in the brick-wall architecture of depth $r$, all marginals $\rho_\gamma$ of size $|\gamma|=k$,  satisfy $\|\rho_\gamma-\id_\gamma\|_1 \leq \ep$, where 
 \begin{equation}
     \delta \leq  \frac{\binom{N}{k}^2 }{\ep^2} \left( 2^{2k-N} +2^k\left(\frac{4}{5}\right)^{2(r-1)} \right)\ . 
\end{equation}
Clearly, if the size of the marginals $k$ is fixed, setting  $r=c \log(N/ \ep)$, for a suitable constant $c$, allows us to conclude that for all $\gamma$ such that $|\gamma|\leq k$ one has $\|\rho_\gamma -\id_\gamma\|_1 \leq \ep $ with probability approaching $1$ with the increasing system size.

\begin{center}
 \begin{figure}[!t]
       \includegraphics[width=0.495\textwidth]{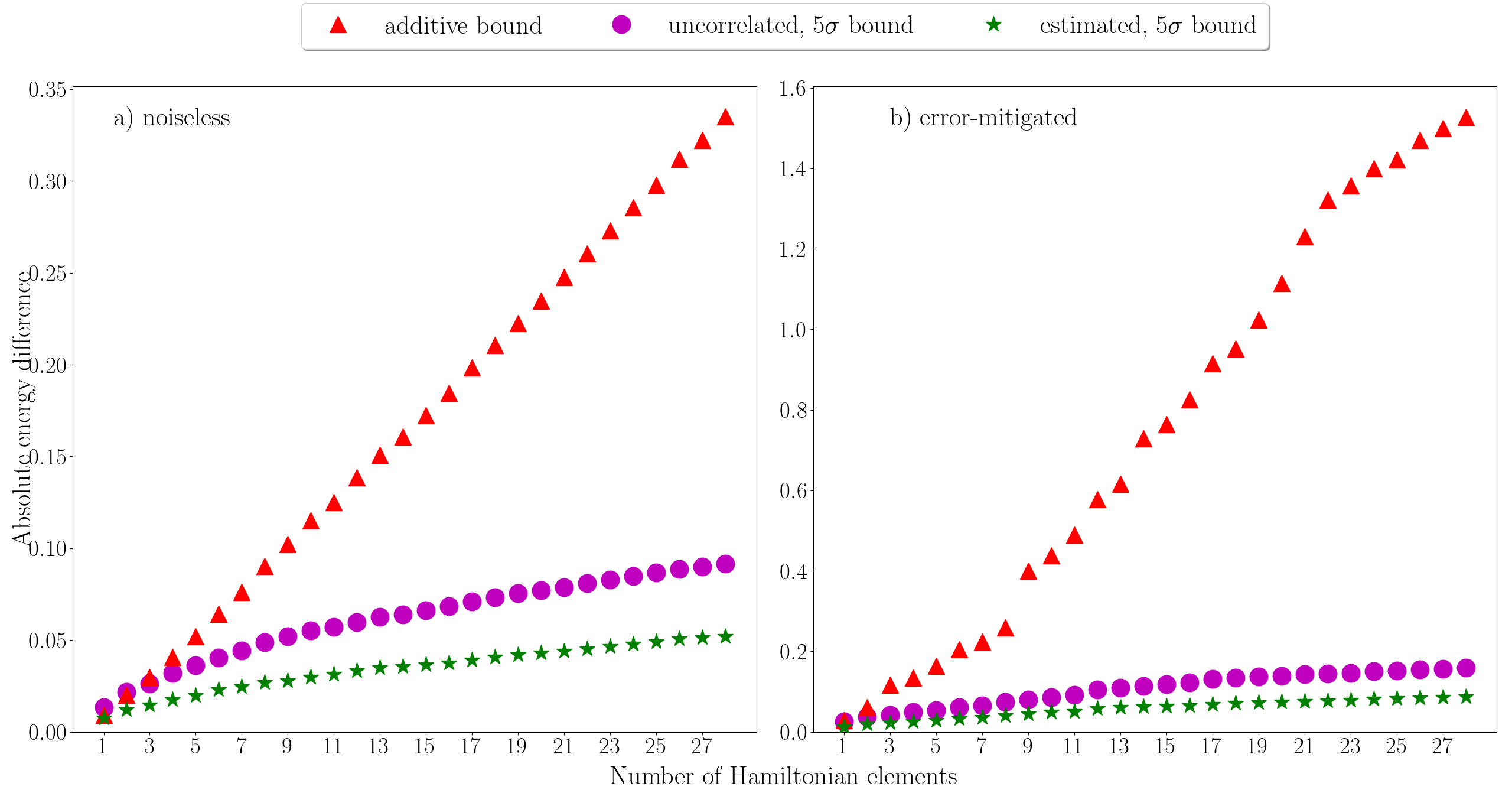}
\caption{\label{fig:statistical_illustration}Numerical results illustrating low statistical errors in the estimation of the energy of local Hamiltonians. 
 We show the results for the energy of 8-qubit random Sherrington-Kirkpatrick Hamiltonians, calculated on quantum states obtained \emph{after the third layer} of QAOA optimization.
 The horizontal axis shows the number of Hamiltonian elements (in a sense that for a given Hamiltonian one estimates only some local terms).
 The Y-axis shows the absolute difference energy error.
 Each data point is an average over $92$ Hamiltonians.
 Red triangles correspond to additive error bound from Eq.~\eqref{eq:additive_bound_total} (with correction matrix norm set to 1 and approximation errors to $0$ for noiseless scenario).
 Magenta circles show the confidence intervals on energy estimation corresponding to $5\sigma$-confidence (via Chebyshev's bound) assuming \emph{uncorrelated} variables for the sample size $s\approx 10^4$ (see Eq.~\eqref{eq:chebyshev_confidence} and description in the text).
 Green stars are analogous confidence intervals \emph{estimated} in numerical simulations (for each Hamiltonian empirical standard deviations were obtained by repeating simulation 100 times).
 Plot a) shows results for the noiseless scenario, while b) shows results for the error-mitigated scenario with noise mitigation performed on marginals and noise model inspired by IBM's device's characterization. 
 }
\end{figure} 
\end{center}

\subsection{Effects of measurement noise}
To conclude, let us provide some analysis of the effects of measurement noise mitigation on the above considerations.
First, let us note that it is straightforward to generalize all of the above arguments to include the \textit{uncorrelated} readout noise. 
Intuitively, if the measurement noise is not correlated, it cannot increase the level of correlations in the energy estimators, and the same holds for noise-mitigation for such model (see Appendix~\ref{sec:app:sampling_covariances} for derivations).

For the correlated noise model, it is hard to provide analytical results, however, one can expect that if correlations in measurement noise are mild, then it should not drastically increase sample complexity.
To test this hypothesis for small system sizes, we performed numerical simulations in the following way.
Consider confidence intervals of the energy estimation.
The most pessimistic bound on the error, as already explained, is given by Eq.~\eqref{eq:additive_bound_total} and is additive in the number of Hamiltonian terms $K$ (if one considers the situation without measurement noise, then it suffices to set $\delta_{\alpha}=0$ and $||C^{\S_{\alpha}}_{av}||_{1\rightarrow 1}=1$ and  Eq.~\eqref{eq:additive_bound_total} still holds).
To obtain confidence intervals expected for uncorrelated variables, we simply set $\mathrm{Var}\rbracket{\H} = \sum_{\alpha}\mathrm{Var}\rbracket{H_{\alpha}}$, and provide corresponding confidence interval by calculating LHS of Eq.~\eqref{eq:chebyshev_confidence}.
To test whether resulting bounds are close to what happens in practice, we numerically \emph{estimate} the variance of $\H$ and calculate the resulting confidence intervals.
Such estimation of variance can be done, for example, by performing multiple numerical experiments, each giving an empirical estimate of $\mathrm{Var}\rbracket{\H}$ (for a fixed number of samples $s$), and taking the mean of those estimates.
Such comparison is plotted on Fig.~\ref{fig:statistical_illustration} for the system of $N=8$ qubits and Hamiltonians estimated on the states coming from $3$-layer QAOA.
Bound $P_f$ on the probability (of energy estimator being outside the calculated confidence interval) in Chebyshev's inequality is set to $P_f=0.04$ which corresponds to $5\sigma$-confidence.
Shown are results for the noiseless scenario, and for the error-mitigated estimators with the noise model inspired by IBM's device characterization.
It is clear that for tested Hamiltonians the confidence intervals in the estimation behave roughly like for uncorrelated variables.

\vspace{-0.175cm}
\section{Effects of measurement noise on QAOA --  numerical study}\label{sec:QAOA}
\vspace{-0.175cm}
In this section we apply our noise characterization and mitigation strategies to numerically study the effects of correlated measurement errors on QAOA and how they can be reduced with our techniques.
As test Hamiltonians, we choose those encoding random MAX-2SAT instances with clause density 4, the Hamiltonians corresponding to fully-connected graphs with random interactions and local fields with magnitude from $\sbracket{-1,1}$, and the Sherrington-Kirkpatrick (SK) model in $2$D (i.e. random Gaussian ZZ-interactions on a square lattice).

\begin{@twocolumnfalse}
\begin{figure*}
\begin{center}
\centering
\includegraphics[width=0.98\textwidth]{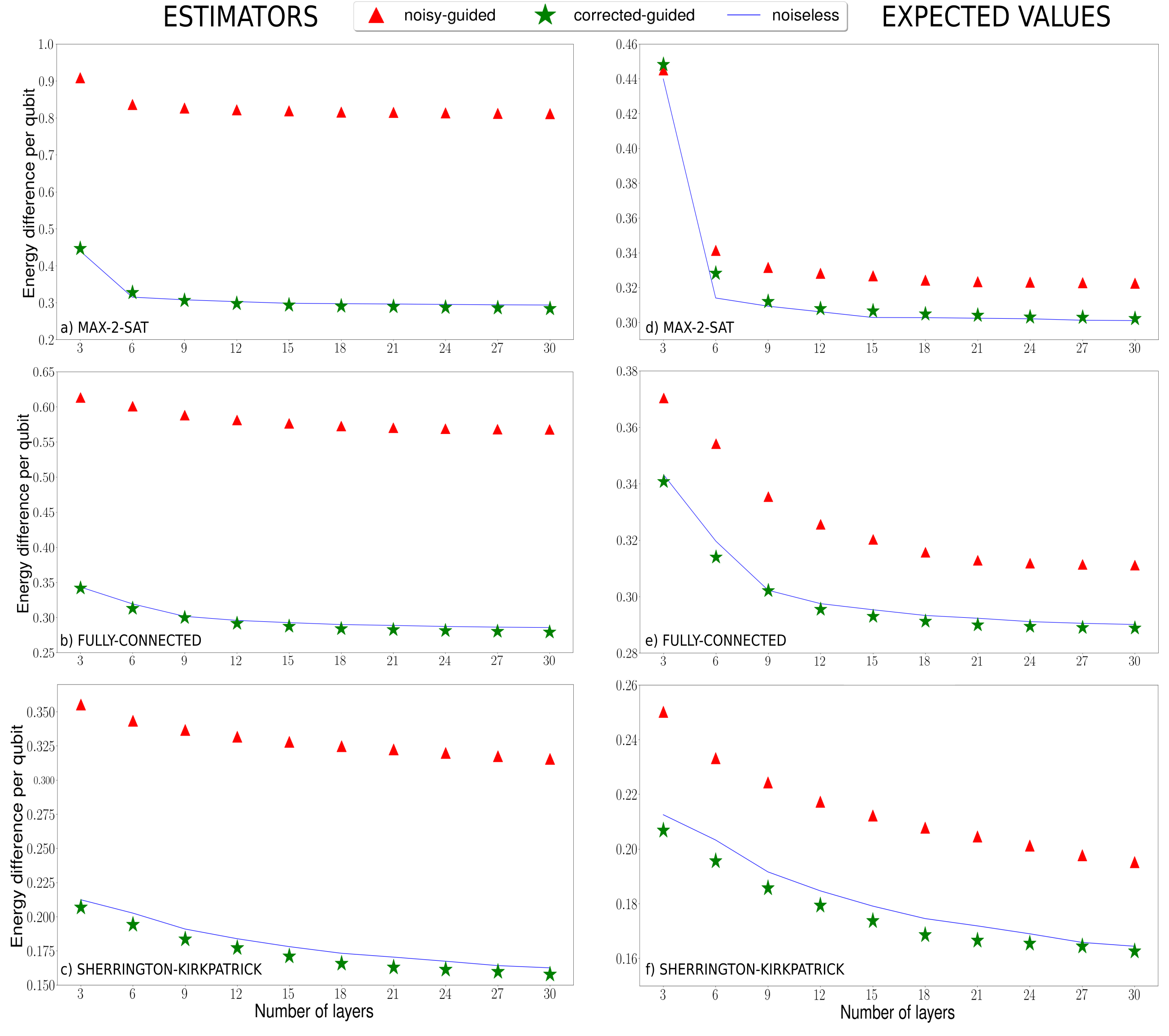}
 \captionof{figure}{
Numerical study of the effects of readout noise on energy estimation for QAOA on an 8-qubit system. 
The algorithm is used to prepare the ground state of a,d) Hamiltonians encoding random MAX 2-SAT problems, b,e) Hamiltonian corresponding to a fully-connected graph with uniformly random (from $\sbracket{-1,1}$) interactions and local fields, and c,f) the $2$D Sherrington-Kirkpatrick model.
In each plot, the horizontal axis shows a number of layers in the QAOA optimization, while the vertical axis shows the absolute difference between obtained and theoretical energies per qubit.
Each data point is the \emph{average} over 96 Hamiltonians.
Note that we made an offset on the y-axis in order to make differences visible.
Red data-points indicate optimization guided by noisy function evaluations, green points indicate noise-mitigated evaluations, while blue lines correspond to noiseless optimization given for reference.
The estimators were obtained from $\approx 10^4$ samples.
Due to differences in spectra of various Hamiltonians over which the mean is calculated the fluctuations around the presented means are very high, therefore for clarity we decided not to include error bars.}
\label{fig:QAOA_estimators}
\end{center}
\end{figure*}
\end{@twocolumnfalse}

\noindent For all models, we classically simulate a QAOA algorithm on an $8$-qubit device with a number of layers ranging from $p = 3$ to $30$.
The parameter optimization is performed using Simultaneous perturbation stochastic approximation (SPSA) \cite{Spall1998ANOO,cade2019strategies,montanaro2020compressed,Kandala2017}  (see Appendix~\ref{sec:app:misc_QAOA} for details of optimization).
As a correlated noise model, we adopt one inspired by our previous characterization of the IBM 15-qubit device. 

We first simulated the result obtained by a QAOA algorithm where both the energy estimation and the gradient estimation used to guide the state evolution were affected by readout noise. 
For different numbers of gate layers, we compared the resulting energy estimators between an optimization guided by noisy estimators (``noisy-guided'' in Fig.~\ref{fig:QAOA_estimators}) and the optimization guided by estimators on which error-mitigation was performed (``corrected-guided'' in Fig.~\ref{fig:QAOA_estimators}).
 The results of our numerical studies are presented in the first column of Fig.~\ref{fig:QAOA_estimators}, together with the noiseless case as a reference.
 Note that to make differences more visible, we set offset on the vertical axis.
It is clear that for the considered Hamiltonians, the noise-mitigated estimators are much better than noisy ones.
This suggests that our noise-mitigation scheme can be used to obtain an overall more reliable QAOA algorithm.

For our second analysis, we wanted to closely analyze the effects of noise (and its mitigation) on the parameter optimization only. 
To do so, we still compare the results of QAOA between the cases where the optimization is guided by a noiseless, noisy (``noisy-guided'' in Fig.~\ref{fig:QAOA_estimators}) and noise-mitigated (``corrected-guided'' in Fig.~\ref{fig:QAOA_estimators}) energy estimators. 
However, in order to isolate the effect of noise on the optimization procedure, instead of sampling the energy from the noisy probability distributions, we calculated its expectation value directly on the quantum state obtained at each layer of the QAOA circuit (the same circuits as in left column).
As a theoretical comparison between the optimal parameters found by QAOA, we took the distance between the resulting average energies on the state, divided by the number of qubits.
The plots in the second column of Fig.~\ref{fig:QAOA_estimators} show our numerical results. 
For both of the considered models, the noise-mitigated optimization leads to better parameters regions.
We note that the difference is not high, yet it is systematic. 
Since this is the case already for $8$ qubits, one can expect that for bigger systems, the relative improvement will be higher.

To conclude, let us stress that the purpose of this section was to illustrate the possible effects of mildly correlated, realistic readout noise on QAOA algorithm. 
The above results clearly indicate that correlated readout noise can potentially influence the optimization, as opposed to an identical and uncorrelated one (see Ref.~\cite{xue2019effects}). 
We find that in a number of instances the error-mitigation strategy helps to land in better parameters regions, however, we also note that the effects of noise on the optimization are not dramatic in the studied cases.
From the point of view of near-term applications, this should be viewed as a positive result -- the noise does not seem to strongly affect optimization, yet even those mild effects can be reduced by performing our error-mitigation.
Furthermore, we note that due to the stochastic nature of the SPSA optimizer, the results might differ from run to run (each data point presented in the plots comes from the results of the optimization run which was the better one amongst two performed independent optimizations, see Appendix~\ref{sec:app:misc_QAOA} for details).
Finally, we note that to obtain accurate estimates of energy, the noise-mitigation, unsurprisingly, remains highly beneficial.

\section{Discussion}\label{sec:conclusions}
\subsection{Conclusions} 

In the first part of this work, we proposed an efficiently describable model of correlated measurement noise in quantum detectors.
The basic idea of the model is to group qubits into strongly-correlated clusters that are mildly affected by their neighborhoods, which, provided that the size of those groups is bounded by a constant, allows to describe a global noise model by much smaller number of parameters compared to the most generic situation.
To characterize our noise model, we have introduced Diagonal Detector Overlapping Tomography, which is a procedure inspired by recently introduced Quantum Overlapping Tomography \cite{Cotler2020quantum}, tailored to the efficient characterization of the proposed measurement noise model.
Similarly to the \cite{Cotler2020quantum}, proposed method can estimate $k$-qubit correlations in measurement noise affecting $N$-qubit device using $O\rbracket{k2^k\log\rbracket{N}}$ quantum circuits.
We have shown that the measurement noise can be efficiently mitigated in problems that require estimation of multiple marginal probability distributions, an example of which is Quantum Approximate Optimization Algorithm \cite{farhi2014quantum}. 
Importantly, from the fact that noise-mitigation is performed on marginal distributions it follows that sampling-complexity of noise mitigation is similar to that of the original problem, provided that cross-talk in readout noise is of bounded locality.
We proposed a benchmark of the noise model and error reduction, which we implemented in experiments on up to 15 qubits on IBM's \textit{Melbourne} device and on 23 qubits on Rigetti's \textit{Aspen-8} device, and concluded significant improvements compared to simple, uncorrelated noise model.
Interestingly, additional experimental data have pointed at previously unreported memory effects in IBM's device that were demonstrated to not-negligibly change the results of experiments.

In the second part of the work, we provided an analysis of the statistical errors one may expect when performing the simultaneous estimation of multiple local terms of Hamiltonian on various classes of states.
We provided simple arguments why low sampling complexity (i.e., scaling of the variance as the square-root of a number of local Hamiltonian terms) should be expected from Haar-random quantum states, and for states generated by shallow random quantum circuits.
Similarly, we gave some arguments based on \cite{farhi2020quantum}, why for states appearing at the beginning and at the end of the QAOA, one may expect that estimated local terms will effectively behave as uncorrelated variables, reducing the sampling complexity.
Furthermore, we have provided analytical results for Hamiltonians encoding random MAX-2-SAT instances.

In the last part of the work, we have presented numerical results and extended discussion of the effects that correlated measurement noise can have on the performance of QAOA.
We have demonstrated that already for $8$ qubits the correlated measurement noise can alter the energy landscape in such a way, that the quantum-classical optimization leads to sub-optimal energy regions (compared to reference runs without noise). 
At the same time, we demonstrated that our noise-mitigation procedure can reduce those effects, improving optimization.

\subsection{Future research} We believe that our findings will prove useful in both near- and long-term applications as efficient methods of characterizing and reducing measurement noise.
At the same time, we find that a number of future research directions opens.
It is natural to ask how well the error mitigation will perform in actual multi-qubit experiments of QAOA, which can be tested only with more access to the quantum devices than is available for the public via cloud services.
Similarly, testing noise mitigation in more general, Variational Quantum Eigensolver (VQE) scenarios is of great interest \cite{huggins2020efficient,barron2020measurement}.
The analysis of statistical errors in the VQE setup is a natural extension -- we note that since the local terms in VQE Hamiltonians do not commute, it is less straightforward than in the QAOA scenario.
Another interesting problem is to design more benchmarks for the noise model and mitigation than the one method described in this work.
A natural extension of techniques presented in this work would be to develop more methods for inference of correlations structure and construction of noise model from DDOT data. 
Having a noise model for marginal probability distributions, it is desirable to test and compare techniques of noise mitigiation that go beyond a standard noise matrix inversion analyzed in this work.
In particular, methods based on Bayesian inference seem promising \cite{Nachman2019unfolding}.
Moreover, it is tempting to ask whether error-mitigation performance and demonstrated memory effects are stable over time, a question which is often omitted (see however recent work \cite{Proctor2020detecting} where systematic methods for studying time instabilities were developed and \cite{dasgupta2020characterizing} which made an important contribution by studying the stability of various types of noise over time for IBM's devices).
Another open problem is to find out whether further generalizations of Quantum Overlapping Tomography can be used to perform reliable characterization of more general types of noise -- coherent measurement noise and the noise affecting quantum gates.
For example, in the work \cite{Govia2020bootstraping} the authors consider an ansatz for the generic noisy quantum process which uses only 2-local noise processes -- a setup which seems natural to benefit from similar techniques.
In Refs.~\cite{Flammia2020efficient,Harper2020efficient} the authors develop methods of estimating generic Pauli channels, and it would be of great interest to assess whether those methods can benefit from using measurement error-mitigation techniques such as ours.
A very important research problem is that of mitigating measurement noise in scenarios which include estimators obtained from a very few samples, such as those in Refs.~\cite{Huang2020predicting,chen2020robust}.
We note that in its current form our methods cannot be directly implemented in such scenarios because they operate on marginal probability distributions.
Finally, it would be extremely interesting to investigate whether our model of measurement noise is accurate for quantum devices based on architectures different than transmon qubits, such as flux qubits \cite{boothby2020nextgeneration}, trapped ions \cite{bruzewicz2019trapped} or photonic quantum devices \cite{Wang2019integrated}. 
We intend to investigate some of the listed problems in future works.

\vspace{0.5cm}
\noindent\textbf{Data availability}\\
The Python code for creating DDOT perfect collections and for characterization of readout noise (in qiskit \cite{Qiskit_ref_proper} and pyquill) will be available in our open-source Quantum Readout-Error Mitigation (QREM) GitHub repository \cite{qrem}.
We intend to further develop the package to contain practically relevant techniques of noise characterization and mitigation, including modifications and refinements of algorithms presented in this work.
The results of experiments used in this work will be available as exemplary data-sets in the QREM respository.
Any requests for more experimental data or codes used in the creation of this work are welcome.

\vspace{0.5cm}
\noindent\textbf{Acknowledgements}\\
 We are grateful to Tomek Rybotycki for providing help with running experiments on Rigetti's device. 
 We acknowledge the use of the IBM Quantum Experience and Amazon AWS  Braket for this work. The views expressed are those of the authors and do not reflect the official policy or position of IBM, or the Rigetti, or the Amazon.
 
 \vspace{0.5cm}
\noindent\textbf{Funding}\\
 FBM and MO acknowledge support by the Foundation for Polish Science through IRAP project co-financed by EU within Smart Growth Operational Programme (contract no. 2018/MAB/5) and through TEAM-NET project (contract no. POIR.04.04.00-00-17C1/18-00). 
 FB acknowledges support by the Deutsche Forschungsgemeinschaft (DFG, German Research Foundation) - Project number 414325145 in the framework of the Austrian Science Fund (FWF): SFB F71.
 ZZ was supported by the NKFIH through the Quantum Technology National Excellence Program (Project No. 2017-1.2.1-NKP-2017-00001) and the Quantum Information National Laboratory of Hungary, and by the grants K124152, K124176, KH129601, K120569, FK 135220.

\bibliography{mitigationrefs.bib}

\onecolumngrid
\appendix
\section*{Appendices}

We collect here technical results that are used in the main part of the paper.  Some of the results stated here can be of independent interest for further works on quantum error mitigation. In Sec.~\ref{sec:app:marginals} we discuss the details of error mitigation on marginals for correlated readout noise models, while Sec.~\ref{sec:app:ddot} provides details of our noise characterization procedure based on the Diagonal Detector Overlapping Tomography technique. In Sec.~\ref{sec:app:characterization} we give a short overview of the whole noise characterization scheme in a step-by-step manner.  
 Results concerning the sample complexity of energy estimation are discussed in Sec.~\ref{sec:app:sampling}, and finally in Sec.~\ref{sec:app:misc} we provide some additional experimental data and details of the numerical simulations.

\section{Correlated readout noise model and its usage for error-mitigation on marginals\label{sec:app:marginals}}

In this section, we provide some details on how correlated measurement noise affects marginal probability distributions. 
We start by providing in Appendix~\ref{sec:app:marginals_noise_model} explicit relation between ideal and noisy marginals, when the global probability distribution is affected by readout noise given by our model.
Then in Appendix~\ref{sec:app:marginals_proof1}, we discuss the error-mitigation on the level of marginals, including proof of Proposition~I from the main text.
We finish by analyzing errors in simultaneous estimation of multiple marginal distributions (Appendix~\ref{sec:app:marginals_statistical}) and estimation of multiple expected values of local terms (Appendix~\ref{sec:app:marginals_energy}).

\subsection{Noise model for marginal distributions\label{sec:app:marginals_noise_model}}

In this subsection, we show how to translate a noise model for the full probability distribution given in Eq.~\eqref{eq:noise_model_correlated} into a simple noise model for marginal distributions. 
Our results can be summarised in the following Proposition~\ref{lem:noise_on_marginal}.

\begin{proposition}\label{lem:noise_on_marginal}

Let $p (Y_1 Y_2\ldots,Y_N)$ and $\tilde{p} (X_1 X_2,\ldots,X_N)$ be the ideal distribution and the one obtained on noisy detector respectively.
Define by $p(\Y_{S})$ and $\tilde{p} (\X_{\S})$ their corresponding marginal distributions for qubits in some subset $\S$. 
If the full distributions are connected by a stochastic matrix of the form Eq.~\eqref{eq:noise_model_correlated} by means of
\begin{equation}\label{eqapp:noisydis}
\tilde{p}\rbracket{X_{1}\dots X_N} =  \sum_{Y_1Y_2\dots Y_N} p\rbracket{Y_1Y_2\dots Y_N} \, \Lambda_{X_1X_2\dots X_N|Y_1Y_2\dots Y_N}    \ ,    
\end{equation}
then there exists a left-stochastic matrix $\Lambda^{(\S)}$ that connects the noisy marginal to the ideal one, namely
\begin{equation}\label{eqapp:noisemarginal}
\tilde{p} (\X_{\S})  = \sum_{\Y_{\S}}   \, \Lambda^{(\S)}_{\X_{\S} | \Y_{\S}} \, p (\Y_{\S}) \, \, .
\end{equation}
Moreover, the form of the marginal noise matrix can be explicitly derived from the decomposition Eq.~\eqref{eq:noise_model_correlated}.
\end{proposition}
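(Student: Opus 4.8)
The plan is to obtain the marginal noise matrix $\Lambda^{(\S)}$ by directly marginalizing the output indices in Eq.~\eqref{eqapp:noisydis} and exploiting the product structure of Eq.~\eqref{eq:noise_model_correlated}. First I would write $\tilde{p}\rbracket{\X_\S}=\sum_{\X_{\S^c}}\tilde{p}\rbracket{X_1\dots X_N}$, where $\S^c$ denotes the complement of $\S$, and exchange the order of the summation over $\X_{\S^c}$ with the summation over the ideal inputs $Y_1\dots Y_N$. Since $\S$ is by assumption a union of clusters (indexed by $\mathcal{A}$), its complement $\S^c$ is also a union of clusters, so the sum over $\X_{\S^c}$ factorizes across the factors of the product in Eq.~\eqref{eq:noise_model_correlated}. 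For every cluster $\cluster_\chi$ with $\chi\notin\mathcal{A}$ left-stochasticity of $\Lambda^{\Y_{\N_\chi}}$ gives $\sum_{\X_{\cluster_\chi}}\Lambda^{\Y_{\N_\chi}}_{\X_{\cluster_\chi}|\Y_{\cluster_\chi}}=1$, so all such factors collapse to one. This leaves only the factors for $\chi\in\mathcal{A}$:
\begin{equation}
\tilde{p}\rbracket{\X_\S} = \sum_{Y_1\dots Y_N} p\rbracket{Y_1\dots Y_N}\prod_{\chi\in\mathcal{A}}\Lambda^{\Y_{\N_\chi}}_{\X_{\cluster_\chi}|\Y_{\cluster_\chi}}\, .
\end{equation}

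Next I would observe that the remaining product depends on the ideal inputs only through the bits lying in $\S$ and in the joint external neighborhood $\N\rbracket{\S}=\cup_{\chi\in\mathcal{A}}\N_\chi\setminus\S$; the bits of each $\N_\chi$ that already lie in $\S$ are part of $\Y_\S$. Hence the sum over the remaining inputs $Y_i$ with $i\notin\S\cup\N\rbracket{\S}$ acts on $p$ alone and produces the joint marginal $p\rbracket{\Y_\S,\Y_{\N\rbracket{\S}}}$. Writing $p\rbracket{\Y_\S,\Y_{\N\rbracket{\S}}}=p\rbracket{\Y_{\N\rbracket{\S}}\mid\Y_\S}\,p\rbracket{\Y_\S}$ then lets me factor out $p\rbracket{\Y_\S}$ and read off the candidate marginal noise matrix
\begin{equation}
\Lambda^{(\S)}_{\X_\S|\Y_\S}=\sum_{\Y_{\N\rbracket{\S}}}p\rbracket{\Y_{\N\rbracket{\S}}\mid\Y_\S}\,\prod_{\chi\in\mathcal{A}}\Lambda^{\Y_{\N_\chi}}_{\X_{\cluster_\chi}|\Y_{\cluster_\chi}}\, ,
\end{equation}
which reproduces exactly Eq.~\eqref{eqapp:noisemarginal} and makes manifest that the marginal noise matrix is explicitly built from the local factors of the decomposition.

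It then remains to check that $\Lambda^{(\S)}$ is left-stochastic. For each fixed $\Y_{\N\rbracket{\S}}$ and $\Y_\S$ the inner product is a left-stochastic matrix in the variables $\rbracket{\X_\S,\Y_\S}$: since every factor depends on the output bits of its own cluster only, summing over $\X_\S$ factorizes over the clusters in $\mathcal{A}$ and each factor sums to one. Because $p\rbracket{\Y_{\N\rbracket{\S}}\mid\Y_\S}$ is a genuine probability distribution over $\Y_{\N\rbracket{\S}}$, the matrix $\Lambda^{(\S)}$ is a convex combination of left-stochastic matrices and is therefore itself left-stochastic, with manifestly nonnegative entries.

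The step I expect to require the most care is the bookkeeping in the second paragraph: for each cluster $\chi\in\mathcal{A}$ one must separate the neighbors that lie inside $\S$ (which remain genuine arguments $\Y_\S$ of $\Lambda^{(\S)}$) from those outside $\S$ (which get averaged against the conditional distribution), so that the output-marginalization and the input-marginalization are kept logically distinct. This is also where the \emph{state dependence} of $\Lambda^{(\S)}$ enters, through the factor $p\rbracket{\Y_{\N\rbracket{\S}}\mid\Y_\S}$; it is precisely this dependence on the ideal distribution that prevents $\Lambda^{(\S)}$ from being a fixed, directly characterizable noise matrix, and thereby motivates replacing it by the state-independent average ansatz $\Lambda^{\S}_{av}$ of Eq.~\eqref{eq:marginal_lambda_average} together with the approximation bound of Proposition~\ref{lem:error_mitigation}.
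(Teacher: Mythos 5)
Your argument is correct and follows essentially the same route as the paper's proof --- marginalize the outputs, collapse the factors of clusters disjoint from $\S$ via left-stochasticity, reduce the ideal distribution to a joint marginal, apply the chain rule to factor out $p\rbracket{\Y_\S}$, and verify stochasticity column by column --- but only for the special case in which $\S$ is a union of whole clusters. That restriction is not in the statement: the proposition is claimed for \emph{some subset} $\S$, and the paper's proof explicitly allows $\S$ to contain only parts of clusters (it introduces $C\rbracket{\S}$, the set of clusters that merely \emph{intersect} $\S$, noting that "we might be interested in taking marginal over some qubits from clusters to which qubits from $\S$ belong"). Your key step, "its complement $\S^c$ is also a union of clusters, so \dots all such factors collapse to one," fails whenever a cluster straddles $\S$ and $\S^c$: summing the output bits of that cluster lying in $\S^c$ does not trivialize its factor $\Lambda^{\Y_{\N_\chi}}_{\X_{\cluster_\chi}|\Y_{\cluster_\chi}}$, and that factor moreover retains dependence on the \emph{input} bits of the cluster qubits outside $\S$, which are not arguments of the matrix $\Lambda^{(\S)}_{\X_\S|\Y_\S}$ you are constructing.

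The missing bookkeeping is exactly what the paper's derivation supplies: in the general case the marginal noise matrix reads
\begin{equation}
\Lambda^{(\S)}_{\X_\S|\Y_\S} =
\sum_{\substack{X \in C\rbracket{\S} \\ X \notin \X_{\S}}}\
\sum_{\substack{Y \in \Y_{\N\rbracket{\S}\cup C\rbracket{\S}} \\ Y \notin \Y_{\S}}}
p\rbracket{\Y_{\lbrace \N\rbracket{\S} \cup C\rbracket{\S}\rbrace \setminus \S}\,|\,\Y_{\S}}
\prod_{\chi \in L\rbracket{\S}} \Lambda_{\X_{\cluster_{\chi}}|\Y_{\cluster_{\chi}}}^{\Y_{\N_{\chi}}}\ ,
\end{equation}
i.e., one must additionally sum over the output bits of partially contained clusters that lie outside $\S$, and the conditional distribution must average over the input bits of those cluster qubits as well, not just over the external neighborhood $\N\rbracket{\S}$; your formula is the special case where both extra sums are empty. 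Alternatively, you can rescue your argument by a two-step reduction: apply your special case to $\S' = C\rbracket{\S}$, then marginalize $\tilde{p}\rbracket{\X_{\S'}}$ down to $\X_\S$ and use one more chain-rule step, $p\rbracket{\Y_{\S'}} = p\rbracket{\Y_{\S'\setminus\S}|\Y_\S}\, p\rbracket{\Y_\S}$, to absorb the surplus input bits into the matrix; left-stochasticity survives both steps by the same column-sum computation you give (note, as a minor point, that your "convex combination" has column-dependent weights, so strictly it is the column-wise sum that does the work). Either way the fix is routine, but as written your proof establishes a strictly weaker statement than the proposition claims.
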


We will now proceed to prove the above Proposition by deriving the concrete expression of $\Lambda^{(\S)}_{\X_{\S} | \Y_{\S} }$ as a function of the matrices in the given noise model. 
Let us start by computing the marginal of the noisy distribution on general $\S$.
Let us denote by $C\rbracket{\S} = \cbracket{\mathcal{C}_{\chi}}_{\chi}$ set of clusters which contain qubits from $\S$ (note that we might be interested in taking marginal over some qubits from clusters to which qubits from $\S$ belong), and by $L\rbracket{\S}=\cbracket{\chi}_{\chi}$ the corresponding set of labels of those clusters.
Following from Eq.~ \eqref{eqapp:noisydis} with global noise noise map given by Eq.~\eqref{eq:noise_model_correlated} we have
\begin{align*}\label{eq:appendix_noisymargder}
	\tilde{p} \rbracket{\X_{\S}} &= \sum_{X\notin \X_{\S}} \tilde{p} \rbracket{X_1X_2\dots X_N}    \\
	&= 
	\sum_{\substack{X \in C\rbracket{\S}\\X \notin \X_{\S}}}\sum_{X\notin C\rbracket{\S}}\sum_{Y_1Y_2\dots Y_N} p\rbracket{Y_1Y_2\dots Y_N} \prod_{\chi} \Lambda_{\X_{\C_{\chi}}|\Y_{\C_{\chi}}|}^{\Y_{\N_{\chi}}}  \ = \\
	&= 	\sum_{\substack{X \in C\rbracket{\S}\\X \notin \X_{\S}}}\sum_{Y_1Y_2\dots Y_N} p\rbracket{Y_1Y_2\dots Y_N} \prod_{\chi \in L\rbracket{\S}} \Lambda_{\X_{\C_{\chi}}|\Y_{\C_{\chi}}|}^{\Y_{\N_{\chi}}}  \prod_{\chi \notin L\rbracket{\S}}\sum_{X\notin C\rbracket{\S}}\Lambda_{\X_{\C_{\chi}}|\Y_{\C_{\chi}}}^{\Y_{\N_{\chi}}}  \ = \\
	&= 	\sum_{\substack{X \in C\rbracket{\S}\\X \notin \X_{\S}}}\sum_{Y_1Y_2\dots Y_N} p\rbracket{Y_1Y_2\dots Y_N} \prod_{\chi \in L\rbracket{\S}} \Lambda_{\X_{\C_{\chi}}|\Y_{\C_{\chi}}|}^{\Y_{\N_{\chi}}} , = 
 \\
	&= 	\sum_{\substack{X \in C\rbracket{\S}\\X \notin \X_{\S}}}\ \ \sum_{Y \in \Y_{\N\rbracket{\S}\cup C\rbracket{\S}} } p\rbracket{\Y_{\N\rbracket{\S}\cup C\rbracket{\S}}} \prod_{\chi \in L\rbracket{\S}} \Lambda_{\X_{\C_{\chi}}|\Y_{\C_{\chi}}|}^{\Y_{\N_{\chi}}}\ , \numberthis
\end{align*}
where to obtain the above simplifications, we have exploited the fact that the noise matrices $\Lambda^{\Y_{\N_\chi}}$ are all left-stochastic (for any fixed $Y_{\N_\chi}$) and we have defined $ p\rbracket{\Y_{ \N\rbracket{\S}\cup C\rbracket{\S}}}$ as the marginal of the ideal distribution on the qubits belonging both the clusters and the neighborhoods of qubits from $\S$.
By $\N\rbracket{\S}$ we denote set of qubits from neighbourhoods of clusters $C\rbracket{\S}$ but without including the qubits from $\S$, i.e., $\N\rbracket{\S} = \cup_{\chi \in L\rbracket{\S}} \rbracket{\N_{\chi} / \S}$.
Note that this additional requirement is necessary since qubits which are neighbors of some clusters in $C\rbracket{\S}$ might belong to some other clusters in $C\rbracket{\S}$.

Now by means of the chain rule for probability we decompose
\begin{align}
p\rbracket{\Y_{ \N\rbracket{\S}\cup C\rbracket{\S}}} = p\rbracket{\Y_{\S}} p\rbracket{\Y_{\lbrace \N\rbracket{\S} \cup C\rbracket{\S} \rbrace / \lbrace \S \rbrace } |\Y_{\S}}
\end{align}
which after substituting to the Eq.~\eqref{eq:appendix_noisymargder} gives
\begin{align}\label{eq:appendix_marginal_derivation_step3}
  \tilde{p} \rbracket{\X_\S} =  \sum_{Y \in \Y_{\S}}p\rbracket{\Y_{\S}}	\sum_{\substack{X \in C\rbracket{\S}\\X \notin \X_{\S}}}\ \sum_{\substack{Y \in \Y_{\N\rbracket{\S}\cup C\rbracket{\S}}\\ Y\notin \Y_\S }} p\rbracket{\Y_{\lbrace \N\rbracket{\S} \cup C\rbracket{\S} \rbrace / \lbrace \S \rbrace } |\Y_{\S}} \prod_{\chi \in L\rbracket{\S}} \Lambda_{\X_{\C_{\chi}}|\Y_{\C_{\chi}}|}^{\Y_{\N_{\chi}}} \ .
\end{align}

Now notice that Eq.~\eqref{eq:appendix_marginal_derivation_step3} coincides exactly with Eq.~\eqref{eqapp:noisemarginal} if we identify the marginal noise matrix elements as
\begin{align}\label{eq:appendix_marginal_lambda2q}
\Lambda^{(\S)}_{\X_\S|Y_\S} \coloneqq 
\sum_{\substack{X \in C\rbracket{\S}\\X \notin \X_{\S}}}\ \sum_{\substack{Y \in \Y_{\N\rbracket{\S}\cup C\rbracket{\S}}\\ Y\notin \Y_\S }} p\rbracket{\Y_{\lbrace \N\rbracket{\S} \cup \cluster\rbracket{\S} \rbrace / \lbrace \S \rbrace } |\Y_{\S}} \prod_{\chi \in L\rbracket{\S}} \Lambda_{\X_{\C_{\chi}}|\Y_{\C_{\chi}}|}^{\Y_{\N_{\chi}}} \ ,
\end{align}
which shows, unsurprisingly, that the noise matrix acting on the marginal is state-dependent (via marginal distribution $p\rbracket{\Y_{\lbrace \N\rbracket{\S} \cup \cluster\rbracket{\S} \rbrace / \lbrace \S \rbrace } |\Y_{\S}}$). 
Note that the above equation \emph{does not} include simple matrix multiplication of noise matrices for neighborhoods $\N\rbracket{\S}$.
In other words, the generic marginal noise matrix \emph{is not} a convex combination of the matrices from the set
\begin{align}
\cup_{\cindex \in L\rbracket{\S}} \cbracket{\Lambda^{\Y_{\N_\cindex}}}_{\Y_{\N_\cindex}}
\end{align}
even if the neighborhoods do not overlap.

It is now left to show that the matrix defined in Eq.~\eqref{eq:appendix_marginal_lambda2q} is left-stochastic. 
To this aim, let us sum over string $\X_{\S}$ and obtain
\begin{align*}
\sum_{\X_\S}\Lambda^{(\S)}_{\X_\S|Y_\S} &\coloneqq 
\sum_{\X_\S}\sum_{\substack{X \in C\rbracket{\S}\\X \notin \X_{\S}}}\ \sum_{\substack{Y \in \Y_{\N\rbracket{\S}\cup C\rbracket{\S}}\\ Y\notin \Y_\S }} p\rbracket{\Y_{\lbrace \N\rbracket{\S} \cup C\rbracket{\S} \rbrace / \lbrace \S \rbrace } |\Y_{\S}} \prod_{\chi \in L\rbracket{\S}} \Lambda_{\X_{\C_{\chi}}|\Y_{\C_{\chi}}}^{\Y_{\N_{\chi}}} \ = \\
&=  \sum_{\substack{Y \in \Y_{\N\rbracket{\S}\cup C\rbracket{\S}}\\ Y\notin \Y_\S }} p\rbracket{\Y_{\lbrace \N\rbracket{\S} \cup C\rbracket{\S} \rbrace / \lbrace \S \rbrace } |\Y_{\S}} \sum_{\X_\S} \prod_{\chi \in L\rbracket{\S}} \Lambda_{\X_{\C_{\chi}}|\Y_{\C_{\chi}}}^{\Y_{\N_{\chi}}} \ = \\
&= \sum_{\substack{Y \in \Y_{\N\rbracket{\S}\cup C\rbracket{\S}}\\ Y\notin \Y_\S }} p\rbracket{\Y_{\lbrace \N\rbracket{\S} \cup C\rbracket{\S} \rbrace / \lbrace \S \rbrace } |\Y_{\S}} = 1 \ ,
\end{align*}
where we simply used the fact that the cluster noise matrices are left-stochastic and that the conditional distribution $p\rbracket{\Y_{\lbrace \N\rbracket{\S} \cup C\rbracket{\S} \rbrace / \lbrace \S \rbrace } |\Y_{\S}}$ is normalised.

\subsection{Noise mitigation for marginal distributions -- proof of Proposition~\ref{lem:error_mitigation}\label{sec:app:marginals_proof1}}

Here we analyze in detail the noise mitigation strategy outlined in the main text, focusing on its scalability and effectiveness to recover the ideal marginal distribution. 
Again let us assume that we are interested in marginal on qubits from some subset $\S$.

Let us start by outlining the reasoning behind our choice of the mitigation strategy. 
Recall that, as proved in the previous section, the noisy marginals can always be related to ideal ones by means of a marginal stochastic matrix as for Eq.~\eqref{eqapp:noisemarginal}. 
Hence, taking the inverse of the marginal noise matrix would the natural strategy for correct noise mitigation.
However, recall that, as shown in Eq.~\eqref{eq:appendix_marginal_lambda2q}, the form of the marginal noise matrix depends on the ideal distribution itself, which is generally unknown. 
Therefore one needs an ansatz for such a distribution, which hopefully will work in the mitigation generic case of arbitrary conditional distribution. As indicated in the main text, the natural choice for such ansatz is a uniform distribution (inserted in place of $p\rbracket{\Y_{\lbrace \N\rbracket{\S} \cup \cluster\rbracket{\S} \rbrace \setminus \lbrace \S \rbrace } |\Y_{\S}}$ in Eq.~\eqref{eq:appendix_marginal_lambda2q}). 

Now, let us make the following important observation. 
We are interested in mitigation of the noise of a $|\S|$-qubit marginal distribution of qubits generally belonging to some set of distinct clusters $C\rbracket{\S}$. 
Note that, by definition, a cluster is a set of qubits with correlations in errors so big, that one can not consider the measurement outcomes on them separately.
We argue that, if one is interested in correcting the marginal distribution of only \emph{parts} of the clusters, say $\tilde{p} (\X_{\S})$, it is still a better idea \emph{first} correct the marginal distribution on the whole variables in clusters, namely 
\begin{align}\label{eq:appendix_cluster_noise_dist}
\tilde{p} (\X_{C\rbracket{\S}}) = \sum_{\Y_{C\rbracket{\S}}} p (\Y_{C\rbracket{\S}}) \ \Lambda_{\X_{C\rbracket{\S}} | \Y_{C\rbracket{\S}}}^{\rbracket{C\rbracket{\S}}}  \ .  
\end{align}
Notice that the corrected cluster marginal distribution can always be post-processed to obtain the two-body marginal distribution of interest. 
By proceeding in a similar manner as in the previous section, one can obtain the following expression for the cluster noise matrix
\begin{align}\label{eq:appendix_marginal_lambda_mitigation}
\Lambda_{\X_{C\rbracket{\S}} | \Y_{C\rbracket{\S}}}^{(C\rbracket{\S})}  = 
\sum_{\Y_{\N\rbracket{\S}}} p\rbracket{\Y_{\N\rbracket{\S}}|\Y_{\cluster\rbracket{\S}}}
\prod_{\chi \in L\rbracket{\S}}  \Lambda_{\X_{\cluster_{\chi}}|\Y_{\cluster_{\chi}}}^{\Y_{\N_{\chi}}} \ ,
\end{align}
where $L\rbracket{\S}$ is a set of labels of the clusters $\cluster\rbracket{\S}$ (note that those are also labels for neighbours of those clusters).
Recall that $\N\rbracket{\S}$ was defined as set of qubits from neighbourhoods of clusters $C\rbracket{\S}$ but without including the qubits from $\S$, i.e., $\N\rbracket{\S} = \cup_{\chi \in L\rbracket{\S}} \rbracket{\N_{\chi} \setminus \S}$, so in the above definition we do not average over some qubits in clusters.
As mentioned before, since the exact form of such a matrix requires access to the unknown information of $p\rbracket{\Y_{\N\rbracket{\S}}|\Y_{\cluster\rbracket{\S}}}$, we propose instead to invert the average cluster matrix
\begin{align}\label{eq:appendix_average_cluster_matrix}
 \Lambda^{C\rbracket{\S}}_{av}\coloneqq \frac{1}{2^{|\N\rbracket{\S}|}} \sum_{\Y_{\N\rbracket{\S}}} \Lambda^{\Y_{{\N(\S)}}} \ ,
\end{align}
and use it to perform the error mitigation. 
From our numerical analysis, the replacement of the cluster noise matrix with its average version yields a much better correction than making a similar replacement at the level of the $|\S|$-body marginals (however we tested in only for the case of 2-qubit marginal distributions). 
This provides a clear indication in favor of performing the error mitigation at the level of the clusters instead of the single variables. Moreover, notice that the computational cost of computing the two noise matrices in Eq. \eqref{eq:appendix_marginal_lambda_mitigation} and \eqref{eq:appendix_marginal_lambda2q} is comparable. 
Indeed, both require access to the collections of matrices
$\cbracket{\Lambda^{\Y_{{\N(\S)}}}}_{\Y_{{\N(\S)}}}$. 
Hence, as long as the size of the involved clusters and neighborhoods is reasonably small (particularly, it does not scale with the system size), computing the matrix in Eq. \eqref{eq:appendix_average_cluster_matrix} is efficient.

Due to the above, the following discussion will concern the mitigation on the level of marginals on the level of clusters $C\rbracket{\S}$. 
For clarity, let us from now on use slightly changed notation 
\begin{align}
    \S \rightarrow \S = \cup_{\cindex}\cluster_{\cindex} \ ,
\end{align}
to indicate that we are interested in error-mitigation on the subset of qubits $\S$ which consists of full clusters, i.e., prior to performing error-mitigation we do not wish to take marginal over qubits belonging to the same cluster (while later one can of course marginalize the corrected distributions).

Having obtained $ \Lambda^{\S}_{av}$, one can use its inverse as a correction matrix which can reduce the noise on the marginal distribution, by defining the corrected distribution as
\begin{equation}\label{eq:appendix_corrected_marginal}
\p^{\text{corr}}_{\S} = \left(  \Lambda^{\S}_{av} \right)^{-1} \tilde{\p}_{\S} \, ,  
\end{equation}
where we used vector notation for clarity. 
$\tilde{\p}_{\S}$ is noisy distribution on qubits from $\S$. 
Similarly, the vector without tilde symbol ${\p}_{\S}$ will denote corresponding ideal distribution, i.e., we have $\tilde{\p}_{\S} = \Lambda^{\rbracket{\S}}\p_{\S}$.
Error reduction via average matrix is perfect only in the case \footnote{Clearly, the implicit assumption is that our model of noise is exact.} of infinite-statistics and under the assumption that the actual conditional distribution in Eq.~\eqref{eq:appendix_marginal_lambda_mitigation} is uniform (because then $\Lambda^{\S} = \Lambda^{\S}_{av}$). 
Since this scenario is not realistic, in practice one can hope only for partial noise mitigation, and not its complete reverse.
The errors related to statistical noise when correcting probability distribution were thoroughly analyzed in work \cite{Maciejewski2020mitigation} (we also repeat a similar analysis in the context of expected values of Hamiltonians in next sections).
Here we will analyze the errors which arise due to the fact that we use the inverse of the average cluster matrix $\Lambda^{\rbracket{\S}}_{av}$ instead of inverse of the exact matrix $ \Lambda^{\rbracket{\S}}$, while keeping the assumption of infinite statistics.
We will denote effective correction matrix as $\corr^{\rbracket{\S}}_{av} \coloneqq \rbracket{\Lambda^{\rbracket{\S}}_{av}}^{-1}$ (as in Eq.~\eqref{eq:marginal_correction_average}).

To quantify the errors, we make use the Total-Variation-Distance (see Eq.~\eqref{eq:appendix_TVD}) to measure the distance between probability distributions. For later purposes, let us also recall that the operator norm induced by the vector L1 norm is 
\begin{align}\label{eq:appendix_induced_norm_definition}
||A||_{1\rightarrow1} = \sup_{||\ket{v}||_1=1} ||A\ket{v}||_{1} ,  
\end{align}
where $A$ is any linear operator acting on given vector space.

Now we will proceed to proving Proposition~\ref{lem:error_mitigation} which states that the maximum error due to using average correction matrix is given by
\begin{align}\label{eq:appendix_mitigation_error_bound}
    \text{TVD}\rbracket{\p^{\text{corr}}_{\S},\p^{\S}} \leq \frac{1}{2} ||\corr^{\S}_{av}||_{1\rightarrow1} \max_{\Y_{\N\rbracket{\S}}}||\Lambda^{\S}_{av}-\Lambda^{\Y_{\N\rbracket{\S}}}||_{1\rightarrow 1} \ ,
\end{align}
As a starting point for the proof, let us make following decomposition $\Lambda^{\rbracket{\S}} = \Lambda^{\rbracket{\S}}_{av}  + \rbracket{\Lambda^{\rbracket{\S}} - \Lambda^{\rbracket{\S}}_{av}  }$ so to replace the left-hand-side of Eq. \eqref{eq:appendix_mitigation_error_bound} with
\begin{align}\label{eq:appendix_marginal_errors_step-1}
||\corr^{\rbracket{\S}}_{av}\rbracket{\Lambda^{\rbracket{\S}} \tilde{\p}_{\S}} - \p_{\S}|_{1}  & =
||\corr^{\rbracket{\S}}_{av}\rbracket{\Lambda^{\rbracket{\S}}-\Lambda^{\rbracket{\S}}_{av}} \p_{\S}||_{1} \nonumber \\
& \leq \sup_{\p_{\S}}||\corr^{\rbracket{\S}}_{av}\rbracket{\Lambda^{\rbracket{\S}}-\Lambda^{\rbracket{\S}}_{av}} \p_{\S} ||_{1}  \nonumber \\
& \leq ||\corr^{\rbracket{\S}}_{av}||_{1\rightarrow1}\ ||\Lambda^{\rbracket{\S}}-\Lambda^{\rbracket{\S}}_{av}||_{1\rightarrow1}   \ ,
\end{align}
where the first inequality indicates the maximization over all possible probability distributions over the set $\S$ variables and the second inequality follows from Eq.~\eqref{eq:appendix_induced_norm_definition} and the sub-multiplicativity of the L1 norm.
Now, let us bound the second term of the above equation as follows
\begin{align}\label{eq:appendix_marginal_errors_step0}
||\Lambda^{\rbracket{\S}}-\Lambda^{\rbracket{\S}}_{av}||_{1\rightarrow1} &=
||
\rbracket{\sum_{\Y_{\N\rbracket{\S}}} p\rbracket{\Y_{\N\rbracket{\S}}|\Y_{{\S}}}
\prod_{\chi \in L\rbracket{\S}}  \Lambda_{\X_{\cluster_{\chi}}|\Y_{\cluster_{\chi}}}^{\Y_{\N_{\chi}}} }-\Lambda^{\rbracket{\S}}_{av}||_{1\rightarrow 1} \nonumber \\
& \leq 
\sup_{\cbracket{p\rbracket{\Y_{\N\rbracket{\S}}|\Y_{{\S}}}}}
||
\rbracket{\sum_{\Y_{\N\rbracket{\S}}} p\rbracket{\Y_{\N\rbracket{\S}}|\Y_{{\S}}}
\prod_{\chi \in L\rbracket{\S}}  \Lambda_{\X_{\cluster_{\chi}}|\Y_{\cluster_{\chi}}}^{\Y_{\N_{\chi}}} }-\Lambda^{\rbracket{\S}}_{av}||_{1\rightarrow 1} \ ,
\end{align}
where the above sequence of inequalities removes the dependence on the unknown conditional distribution $p\rbracket{\Y_{\N\rbracket{\S}}|\Y_{{\S}}}$.

The maximization in Eq.~\eqref{eq:appendix_marginal_errors_step0} corresponds to the maximal distance of the average matrix $\Lambda^{\rbracket{\S}}_{av}$ from the convex hull of the set of matrices $\cbracket{\Lambda^{\Y_{\N\rbracket{\S}}}}_{\Y_{\N\rbracket{\S}}}$ (hence over a polytope with extremal points being each of the matrices $\Lambda^{\Y_{\N\rbracket{\S}}}$).
It is straightforward to see that this function is convex, due to the triangle inequality and absolute homogeneity of the L1 norm. 
From this fact, it follows that the maximum in Eq.~\eqref{eq:appendix_marginal_errors_step0} is attained for one of the extremal points of the convex hull, i.e., a particular $\Lambda^{\Y_{\N_i}} \Lambda^{\Y_{\N\rbracket{\S}}}$. 
Hence, combining this last bound with Eq. \eqref{eq:appendix_marginal_errors_step-1} one obtains exactly Eq. \eqref{eq:appendix_mitigation_error_bound}, which proves Proposition~\ref{lem:error_mitigation}.

\subsection{Statistical error bounds \label{sec:app:marginals_statistical}}
	
The idea behind the energy estimation routine in variational algorithms such as QAOA is to estimate Hamiltonian with $K$-terms by estimating those terms separately and then adding them up.
However, if one provides statistical error bounds for each of those terms, one needs to take into account the probabilistic nature of such bounds.
Let us say that we want to estimate some distribution $\p$ with $N$ outcomes by sampling from it $s$ number of times.
It is well known \cite{Weissman2003} that the probability  of estimated distribution being $\epsilon$-close to the true distribution $\p$ in terms of TVD (Eq.~\eqref{eq:TVD} is vanishing exponentially in number of sample
\begin{align}\label{app:eq:probability_statistical}
    \text{Pr}\rbracket{\text{TVD}\rbracket{\p,\p^{est}}\geq \epsilon} \leq \rbracket{2^n-2}\exp\rbracket{-2s\epsilon^2} \ .
\end{align}
Now it is often convenient to set the probability fixed and consider the confidence intervals, i.e., the bound for $\text{TVD}\rbracket{\p,\p^{est}}$. 
Then we can rewrite the above equation as a function $\epsilon^*\rbracket{n,s,P_{1, \text{err}},1}$ of three fixed parameters -- number of outcomes $n$, number of samples $s$ and probability of the upper bound being incorrect $P_{1, \text{err}}$.
Then the basic manipulations of Eq.\eqref{app:eq:probability_statistical} give
\begin{align}\label{eq:statistical_bound_1}
   \text{TVD}\rbracket{\p,\p^{est}} \leq \epsilon^*\rbracket{n,s,P_{1, \text{err}},1} = \sqrt{\frac{\log\rbracket{2^n-2}-\log{\rbracket{P_{1, \text{err}}}}}{2s}}\ .
\end{align}

However, since in estimation of the K-term Hamiltonian one combines the upper bounds of the form \eqref{eq:statistical_bound_1} for particular terms into upper bound $\epsilon^*\rbracket{n,s,P_{1, \text{err}},K}$ for the whole Hamiltonian, one needs to make sure that all upper bounds for particular terms are true at the same time with the desired probability. 
The union bound states that the probability of at least one event from some set occurring is no greater than the sum of probabilities of particular events.
In our case, the interesting set consists of events of the type ``one of the bounds of type \eqref{eq:statistical_bound_1} is not satisfied''.
Hence the probability of at least one bound being wrong is upper bounded by
\begin{align}
    P_{K,\text{err}} \leq  K P_{1,\text{err}}\ .
\end{align}
Therefore, if we wish to ensure that probability of all the bounds being right at the same time is fixed and equal to $P_{1,\text{err}}$ for fixed number of samples $s$, we need to effectively increase the upper bound to
\begin{align*}\label{app:eq:statistical_bound_K}
    &\epsilon^*\rbracket{n,s,P_{1, \text{err}},K} = \\
    \\
    &=\sqrt{\frac{\log\rbracket{2^n-2}-\log{\rbracket{\frac{P_{1, \text{err}}}{K}}}}{2s}} = \\
    &=\sqrt{\frac{\log\rbracket{2^n-2}-\log{\rbracket{P_{1, \text{err}}}+\log{\rbracket{K}}}}{2s}} \numberthis
\end{align*}
with additional term $\log{K}$ under the square root. 
In other words, we effectively lower the probability of error occurring in estimation of particular marginal distributions by a factor of $K$, which ensures that simultaneous estimation of $K$ marginal distributions has the precision from Eq.~\eqref{app:eq:statistical_bound_K} with probability not lower than the initial $P_{\text{1,err}}$ (by the virtue of union bound).

It is instructive to look at this from the perspective of sampling complexity. 
Let's say that we would like to increase number of samples 
\begin{align*}
    s \rightarrow \tilde{s} \eqqcolon s\rbracket{1+f_{\text{oh}}}
\end{align*}
in such a way, that upper bound remains fixed, i.e.,
\begin{align*}
    \epsilon^*\rbracket{n,\tilde{s},P_{1, \text{err}},K} = \epsilon^*\rbracket{n,s,P_{1, \text{err}},1} \ .
\end{align*}
Simple calculations show that in this case we would need to (multiplicatively) increase the number of samples $s$ by the overhead equal to
\begin{align*}
    1+f_{\text{oh}}=1+C\log{\rbracket{K}} 
\end{align*}
where parameter $C$ is equal to
\begin{align}
    \rbracket{\log{\rbracket{\frac{2^n-2}{P_{1,\text{err}}}}}}^{-1} \ .
\end{align}
Hence we see that for fixed dimension (fixed number of outcomes) simultaneous estimation of $K$ Hamiltonian terms leads to sampling overhead logarithmic in $K$.

Note that if one fixes the initial sampling size $s$, probability of error $P_{\text{1,err}}$ and number of simultaneously estimated marginals $K$, the sampling overhead is, perhaps counterintuitively, decaying in the number of outcomes. 
This dependence is roughly linear in the number of outcomes, hence it is exponential in the number of qubits.

\subsection{Proofs of energy error bounds \label{sec:app:marginals_energy}}
Here we will provide proofs of worst-case error bounds for energy estimation given in the main text. 
Note that in the following discussion, for clarity of notation we won't add a special index to marginal probability distributions indicating that they are marginals. 
This should be clear from the context.
On the contrary, local Hamiltonian terms, noise matrices, and correction matrices will have an additional label ($\alpha$) indicating their locality.

\subsubsection{Approximation errors}
We will start by providing proof of Eq.~\eqref{eq:additive_bound} which gives additive bound for possible deviations of error-mitigated energy from ideal one when the source of deviations is an approximation used in constructing noise model (recall Eq.~\eqref{eq:marginal_lambda_average}). 
Consider estimation of the local term of Hamiltonian $H_{\alpha}$. 
Recall that locality of the terms means that $H_{\alpha}$ acts non-trivially only on the subset of qubits.
Now let us denote by $\p$ the marginal probability distribution on the qubits belonging to the relevant subset.
Then expected value of the local term can be written as
\begin{align}\label{app:eq:local_term}
    \tbracket{H_{\alpha}} = \sum_{t} \mathbf{\lambda}^{\alpha}_t p_{t} = \braket{\mathbf{\lambda}^{\alpha}}{\p} ,
\end{align}
where by $\mathbf{\lambda}^{\alpha}$ we denoted vector of eigenvalues  of local term $H_{\alpha}$, and we used convenient braket notation to denote scalar product.
For example, if $H_{\alpha} = \sigma^z \otimes \sigma^z$, then $\lambda^{\alpha} = \rbracket{1,-1,-1,1}$.
Now we want to consider two different estimators of the same local term -- one from ideal distribution $\p^{\text{ideal}}$ and second from the error-mitigated noisy distribution 
$\p^{\text{corr}}=C_{av} \p^{noisy}$ (recall Proposition~\ref{lem:error_mitigation}).
Now we want to upper bound the difference between energy estimators based on those two marginal distributions.
From Eq.~\eqref{app:eq:local_term} we can write
\begin{align}\label{app:eq:proof_approx}
    |\tbracket{H^{\text{corr}}_{\alpha}} - \tbracket{H^{\text{ideal}}_{\alpha}}| = |\bra{\mathbf{\lambda}}\rbracket{\ket{\p^{\text{corr}}}-\ket{\p^{\text{ideal}}}}| \leq \underbrace{\max_{t}|\lambda_t|}_{||H_{\alpha}||} \bra{\iden} \underbrace{|\rbracket{\ket{\p^{\text{corr}}}-\ket{\p^{\text{ideal}}}}|}_{2\text{TVD}\rbracket{\p^{\text{corr}},\p^{\text{ideal}}}} \leq 2 ||H_{\alpha}|| \delta_{\alpha} \ ,
\end{align}
where $\delta^{\alpha}$ is approximation error defined in Eq.~\eqref{eq:approx_error_definition}, $\bra{\iden}$ is a vector of ones and the last inequality follows from Lemma~\ref{lem:error_mitigation}.
Factor two comes from the fact that Total-Variation Distance is defined with $\frac{1}{2}$ factor.
The additive error bound from Eq.~\eqref{eq:additive_bound} is just a multiple application of triangle inequality
\begin{align}\label{app:eq:additive:approx}
    |\tbracket{\H^{\text{corr}}}-\tbracket{\H}|   = |\sum_{\alpha}\tbracket{H^{\text{corr}}_{\alpha}} - \tbracket{H^{\text{ideal}}_{\alpha}}| \leq \sum_{\alpha} |\tbracket{H^{\text{corr}}_{\alpha}} - \tbracket{H^{\text{ideal}}_{\alpha}}| \leq 2\sum_{\alpha}\delta^{\alpha}||H_{\alpha}|| \ .
\end{align}

\subsubsection{Statistical errors}
Now we will proceed to prove Eq.~\eqref{eq:additive_bound} which bounds the effects of statistical noise on the error-mitigation. 
First, let us assume that the noise matrix acting on the marginal of interest is known exactly (not approximately as in the above derivations).
In that case, in the lack of presence of the statistical errors, the correction is done exactly.
However, in reality the statistics are finite, and the \textit{estimator} of the noisy marginal distribution  $\p^{noisy}$ from which we sample on imperfect detector can be formally written as 
\begin{align}\label{app:eq:noisy_estimation}
     \underbrace{\p^{noisy}}_{\text{true}} \rightarrow \underbrace{\p^{noisy} + \mathbf{\epsilon}}_{estimated}\ .
\end{align}
Now we want to bound the error in the energy estimated after acting by correction matrix $\Lambda_{\alpha}^{-1}$ on the estimated distribution above.
In analogy to Eq.~\eqref{app:eq:proof_approx} we can write
\begin{align*}\label{app:eq:proof_stats}
        |{H^{\text{corr, est}}_{\alpha}} - \tbracket{H^{\text{ideal}}_{\alpha}}| &= |\bra{\mathbf{\lambda}}\rbracket{\underbrace{\Lambda_{\alpha}^{-1}\ket{\p^{noisy}}}_{\ket{\p^{\text{ideal}}}}+\Lambda_{\alpha}^{-1}\ket{\mathbf{\epsilon}}-\ket{\p^{\text{ideal}}}}| =|\braket{\mathbf{\lambda}}{\Lambda_{\alpha}^{-1}\ket{\mathbf{\epsilon}}}| \leq\\
        &\leq ||H_{\alpha}||\  |\bra{\iden}{\Lambda_{\alpha}^{-1}\ket{\mathbf{\epsilon}}}| \leq ||H_{\alpha}||\ || \Lambda_{\alpha}^{-1}||_{1\rightarrow1}\ ||\ket{\mathbf{\epsilon}}||_1\leq \\
        &\leq 2 ||H_{\alpha}||\ || \Lambda_{\alpha}^{-1}||_{1\rightarrow1} \epsilon^* \ ,        \numberthis
\end{align*}
where first inequality follows from definition of operator norm, second inequality follows from a definition of any induced operator norm and the last inequality follows from analysis in previous section with $\epsilon^*$ being bound given by Eq.~\eqref{app:eq:statistical_bound_K}.
In analogy to Eq.~\eqref{app:eq:additive:approx} by applying multiple times triangle inequality we obtain Eq.~\eqref{eq:additive_bound}.

\subsubsection{Approximation and statistical errors}

Now we will combine two previous bounds by using a triangle inequality.
Let us denote by $\p^{\text{noisy, est}}$ the estimator of noisy probability distribution $\p^{\text{noisy}}$ on the marginal $\alpha$ (hence the RHS of Eq.~\eqref{app:eq:noisy_estimation}), and by $C_{av}^{\alpha}$ the average correction matrix used to correct that marginal.
We want to bound the distance between corrected estimator of noisy distribution $C^{\alpha}_{av}\p^{\text{noisy, est}}$ and the ideal distribution $\p^{\text{ideal}}$.
For particular marginal we have triangle inequality for Total-Variation Distance
\begin{align*}\label{app:eq:bound_approx_stats}
    \text{TVD}\rbracket{C^{\alpha}_{av}\p^{\text{noisy, est}},\p^{\text{ideal}}} &\leq \underbrace{\text{TVD}\rbracket{C_{av}\p^{\text{noisy}},C_{av}\p^{\text{noisy, est}}}}_{\leq 2||C_{av}||_{1\rightarrow1}\epsilon^*}+\underbrace{\text{TVD}\rbracket{C_{av}\p^{\text{noisy}},\p^{\text{ideal}}}}_{2\delta_{alpha}}\\
    &\leq    2||C^{\alpha}_{av}||_{1\rightarrow1}\epsilon^*+ 2\delta_{\alpha}    \numberthis \ ,
\end{align*}
where $\epsilon^*$ is statistical bound given by Eq.~\eqref{app:eq:statistical_bound_K}, and $\delta_{\alpha}$ is approximation error defined in Eq.~\eqref{eq:approx_error_definition}.
First underlined inequality follows from properties of induced operator norm and from statistical errors bound proved in the previous section (this step is analogous to the one in the second line of Eq.~\eqref{app:eq:proof_stats}).
The second underlined inequality follows from Proposition~\ref{lem:error_mitigation}.
To translate the above result to expected values of local Hamiltonian one just repeats the reasoning given in proofs of Eq.~\eqref{app:eq:additive:approx} and Eq.~\eqref{app:eq:proof_stats} for marginal distributions of the form $C^{\alpha}_{av}\p^{\text{noisy, est}}$ for which we have bounds of the form given by Eq.~\eqref{app:eq:bound_approx_stats}.

To finish this section, let us note that the reasoning presented in this section is analogous to the one given in Ref.~\cite{Maciejewski2020mitigation} where analysis of effects of statistical noise and non-classical noise on the error-mitigation performed on global distributions was presented.

\section{Details of Diagonal Detector Overlapping Tomography\label{sec:app:ddot}}

In this section, we give more details regarding our noise characterization procedure using DDOT.
We start by providing efficient way of construction of DDOT circuits in Section.~\ref{sec:app:ddot_construction}. 
In Appendix~\ref{sec:app:ddot_inference} we show how to use results of DDOT to infer the correlations structure of readout noise in a device.
Then we discuss the construction of Diagonal Detector Overlapping Tomography circuits which are balanced (Appendix~\ref{sec:app:ddot_balanced}) and perfect (Appendix~~\ref{sec:app:ddot_perfect}), together with proofs fro scaling of required number of random circuits. 
In Appendix~~\ref{sec:app:ddot_heuristic}, we discuss heuristic procedures of making DDOT circuits more balanced. 
Finally, in Appendix~~\ref{sec:app:ddot_overestimation} we explain the effect of overestimating correlations which can happen if the implemented DDOT collection is not balanced.

	\subsection{Construction of Diagonal Detector Overlapping Tomography circuits\label{sec:app:ddot_construction}}

	Here we will present in detail our building block for efficient characterization of correlations in a measurement device: Diagonal Detector Overlapping Tomography (DDOT). 
	In analogy to Quantum Overlapping Tomography (QOT) \cite{Cotler2020quantum}, a collection of $\rbracket{N,k}$ DDOT circuits allows one to reconstruct the noise matrices for the readout process of any group of $k$ qubits in an $N$-qubit device. 

	More precisely, we define a $\rbracket{N,k}$ perfect collection of DDOT circuits as a collection of $N$-qubit quantum circuits constructed only from $\iden$ and $X$ gates, with the property that for every subset of qubits of the size $k$, each of the computational basis states on that subset is prepared at least once.
	For example, we will call a collection of circuits $\rbracket{N,2}$ perfect if for all pairs of qubits, each of the states from  $\cbracket{\ket{00},\ket{01},\ket{10},\ket{11}}$ is prepared at least once in the whole collection.
    
    One way to construct DDOT collection is to follow Ref.~\cite{Cotler2020quantum} and make use of the notion of hash functions (here by a hash function we mean every function $\sbracket{N} \rightarrow \sbracket{k}$ with $k<N$).
	To construct $\rbracket{N,k}$ perfect collection of DDOT circuits using hash functions one can use Algorithm~\ref{alg:random_generation}, which encapsulates the idea of Quantum Overlapping Tomography from Ref.~\cite{Cotler2020quantum} translated to the construction of DDOT circuits.
	Specifically, this method corresponds to QOT with two ``bases'' which are preparation of state $\ket{0}$ or $\ket{1}$.
	Each hash function assigns each qubit a label from $\sbracket{k}$.
	For a given function, qubits are assigned to $k$ disjoint batches based on the value of the function value. For fixed assignment of batches, qubits belonging to a batch are initialized in the same state ($\ket{0}$ or $\ket{1}$), independently from the qubits in other batches.
    In this way, $2^k$ circuits are specified, which independently implement all computational basis states on all the batches.
    For example, in the case of $k=2$ and $N=6$ qubits, some specific hash functions could result in initializing the following states -- $\cbracket{000000,000111,111000,111111}$, which implement all two-qubit computational basis states on the pairs of qubits from left and right parts of the register (in this example the two batches are $\cbracket{Q0,Q1,Q2}$ and $\cbracket{Q3,Q4,Q5}$).
	The DDOT circuits collection constructed in this way is perfect if the underlying collection of hash functions is perfect.
	This means that for \textit{each} $k$-qubit subset in $N$-qubit device, there exists at least one hash function in the collection which assigns each qubit from that subset a distinct number from $\sbracket{k}$.
	Note that if a given function is indeed injective on some $k$-qubit subset, this means that qubits from those subsets belong to distinct batches and therefore all computational-basis states will be implemented on them.
	Hence if this holds for all subsets, the collection is perfect.
	In Algorithm~\ref{alg:random_generation} we generate \textit{random} hash functions to create a DDOT collection, therefore there is no deterministic guarantee that the collection constructed in this way will indeed be perfect.
	However, it follows directly from arguments in \cite{Cotler2020quantum} (specifically, Section III) that if we use Algorithm~\ref{alg:random_generation} to generate DDOT collection, then if we wish the collection to be perfect with probability at least $1-\delta$, the needed number circuits $\kappa$ is at least of the order 
	\begin{align}\label{eq:scaling_hash_functions}
	    \kappa> {\rbracket{2e}^k\rbracket{\log\rbracket{\noq}+\frac{1}{k}\log\rbracket{\frac{1}{\delta}}}}\ ,
	\end{align}
	see Appendix~\ref{sec:app:ddot_perfect} for detailed bounds.

	On the other hand, one can also consider construction which uses random circuits without referring to the notion of hash functions.
	Generation of random bitstrings and using them as definitions of quantum circuits is used to construct DDOT collection in Algorithm~\ref{alg:random_generation_circuits}.
	Similarly to Algorithm~\ref{alg:random_generation_circuits}, since we use randomness to generate the collection, there is no deterministic guarantee that the resulting collection will be perfect.
	However, in Appendix~\ref{sec:app:ddot_perfect} we show that if we wish the collection to be perfect with probability at least $1-\delta$, then the needed number of circuits $\kappa$ is at least of the order
	\begin{align}\label{eq:scaling_random_circuits}
	    \kappa > {2^{k}\rbracket{k\log\rbracket{2N}+\log{\frac{1}{\delta}}}} \ ,
	\end{align}
    which looks similar to Eq.~ \eqref{eq:scaling_hash_functions}, however in practice exhibits better scaling due to specific factors (see Appendix~\ref{sec:app:ddot_perfect} for more detailed bounds).  
	Hence we expect that for higher system sizes the random circuits algorithm should perform better in practice compared to the one which uses hash functions.
	Of course, when the collection with desired properties has been constructed, it does not matter what method was used to create it.
	
	As mentioned in the main text, during the implementation of DDOT, different circuits will in general be sampled a different number of times, this may cause some correlations to be overestimated (this effect can be reduced by proper post-processing of the data -- see Appendix~\ref{sec:app:ddot_overestimation}). 
Hence it is beneficial for the perfect collection of DDOT circuits to be \textit{approximately balanced}. 
	Here by 'balanced' we mean that all $k$-qubit computational basis states are sampled the same number of times, and by 'approximately balanced' that they are sampled approximately the same number of times (see related notion for hash functions, for example in Ref~\cite{alon2008balanced}). 
	For the construction using random circuits, we prove in Appendix~\ref{sec:app:ddot_balanced} that one can expect that with high probability the collection will also have this property with the required number of circuits scaling similarly to Eq.~\eqref{eq:scaling_random_circuits}.

    To conclude, let us point out that for a given pair of numbers $\rbracket{N,k}$ it suffices to generate the DDOT collection only once, and it can be used in the design of future experiments. 
    We will make a number of pre-generated collections publicly available in our GitHub repository QREM (Quantum Readout Errors Mitigation) \cite{qrem}.
    We note that the described techniques are suitable for noise characterization not only for the noise model we proposed but also for other models with bounded locality of correlations, for example with two-qubit correlations considered in Ref.~\cite{Bravyi2020mitigating}.

	\subsection{Inferring the structure of clusters and neighborhoods\label{sec:app:ddot_inference}}

After the implementation of a $\rbracket{N,k}$ perfect collection of DDOT circuits, one has potential access to a lot of information about the measurement noise in a device. 
Here we provide a method of using data from such implementation to infer the correlations in measurement noise.
As a starting point, let us note that one can use the output of the DDOT circuits to construct all the possible two-qubit noise matrices $\Lambda_{X_i X_j | Y_i Y_j}$. 
	In particular, whenever $k\geq2$, there are subsets of circuits that implement all computational-basis states for each two-qubit subsystem.
	This allows to gather all two-qubit marginal probability distributions of the form
	\begin{align}
	    \cbracket{p\rbracket{X_iX_j|Y_iY_j}}_{i,j}\, 
	\end{align}
	where $X_i$ ($X_j$) is the measured state of $i$th ($j$th) qubit, and the $Y_i$ ($Y_j$) is the input state of the corresponding qubit (i.e., the state that is supposed to be implemented by the quantum circuit from the DDOT collection).
	In the following Example~\ref{ex:generic_2q_lambda} we give explicitly calculated single-qubit noise-matrices for exemplary pair of qubits $i=1$ and $j=2$.
	\begin{example}\label{ex:generic_2q_lambda}
		For two-qubit set $S=\cbracket{1,2}$, assuming no dependence on the state of neighbors, generic left-stochastic map acting on its detector has a form
		\begin{align}
		\Lambda^{S} = \begin{pmatrix}
		p\rbracket{00|00} & \quad p\rbracket{00|01} &\quad p\rbracket{00|10} &\quad p\rbracket{00|11}\vspace{0.1cm}\\
		p\rbracket{01|00} &\quad p\rbracket{01|01} &\quad p\rbracket{01|10} &\quad p\rbracket{01|11}\vspace{0.1cm}\\
		p\rbracket{10|00} &\quad p\rbracket{10|01} &\quad p\rbracket{10|10} &\quad p\rbracket{10|11}\vspace{0.1cm}\\
		p\rbracket{11|00} &\quad p\rbracket{11|01} &\quad p\rbracket{11|10} &\quad p\rbracket{11|11}
		\end{pmatrix}\ .
		\end{align}
		It follows that single-qubit noise matrices from Eq.~\eqref{eq:correlations_pair} for first qubit have form
		\begin{align}
		\Lambda^{Y_2 = '0'} &= \frac{1}{2}\begin{pmatrix}
		p\rbracket{00|00}+p\rbracket{01|00}   &\quad p\rbracket{00|10}+p\rbracket{01|10} \vspace{0.1cm} \\
		p\rbracket{10|00} +p\rbracket{11|00}  &\quad p\rbracket{10|10}+p\rbracket{11|10}
		\end{pmatrix}\ ,  \\
		\Lambda^{Y_2 = '1'} &= \frac{1}{2}\begin{pmatrix}
		p\rbracket{00|01}+p\rbracket{01|01}   &\quad p\rbracket{00|11}+p\rbracket{01|11} \vspace{0.1cm} \\
		p\rbracket{10|01} +p\rbracket{11|01}  &\quad p\rbracket{10|11}+p\rbracket{11|11}
		\end{pmatrix}\, \\
		\end{align}
		while for second qubit they are
		\begin{align}
		\Lambda^{Y_1 = '0'} &= \frac{1}{2}\begin{pmatrix}
		p\rbracket{00|00}+p\rbracket{10|00}   &\quad p\rbracket{00|01}+p\rbracket{10|01} \vspace{0.1cm} \\
		p\rbracket{01|00} +p\rbracket{11|00}  &\quad p\rbracket{01|01}+p\rbracket{11|01}
		\end{pmatrix}\ ,  \\
		\Lambda^{Y_1 = '1'} &= \frac{1}{2}\begin{pmatrix}
		p\rbracket{00|10}+p\rbracket{10|10}   &\quad p\rbracket{00|11}+p\rbracket{10|11} \vspace{0.1cm} \\
		p\rbracket{01|10} +p\rbracket{11|10}  &\quad p\rbracket{01|11}+p\rbracket{11|11}
		\end{pmatrix} \ .
		\end{align}
	\end{example}
	Note that the idea is to simply fix the state of 'neighbouring' qubit to be either $'0'$ or $'1'$ and calculate according conditional marginals.

Now, we propose to use the information about the two-qubit noise matrices to calculate the correlations (Eq.~\eqref{eq:correlations_pair}) between a given pair of qubits. For the $i$-th qubit, it's the dependence from the $j$-th qubit can be calculated by constructing two single-qubit matrices on $i$-th qubit -- the first one with the condition that $j$-th qubit was initialized in $\ket{0}$ state, and the second in $\ket{1}$ state.

Those matrices can be used to calculate the parameter $c_{j\rightarrow i}$ as the norm of a difference of those matrices (Eq.~\eqref{eq:correlations_pair}).
We then propose to infer the structure of the readout correlations according to the magnitude of the parameters $c_{j\rightarrow i}$ as follows. 
One specifies threshold parameters $\delta_{\text{clust}}$ and  $\delta_{\text{neighb}}$ for the level of correlations between qubits, and then assigns qubit $j$ to the neighborhood or to the cluster of qubit $i$, if the parameter $c_{j\rightarrow i}$ is greater than the respective threshold.
In general, we advise to set those thresholds to be higher than the likely effects of statistical deviations.
We outline the whole procedure in Algorithm~\ref{alg:clusters}, which takes as input the conditional single-qubit noise matrices $\cbracket{\Lambda^{Y_j}_{Q_i}}_{i\neq j}$ together with a set of thresholds, and outputs the structure of the clusters and neighborhoods in a device.

Finally, let us note that the above-described inference of correlations from two-qubit marginal distributions works under the assumption that the correlations do not vanish under taking marginals over other (than given pair) qubits.
This does not need to be true in practice, and one can consider generalizations of Algorithm~\ref{alg:clusters}.
The analogous set of single-qubit matrices depending on states of $t$ neighboring qubits would be created in a fully analogous manner to that of Example~\ref{ex:generic_2q_lambda} but now one would need to fix the state of $t$ qubits.
Note that if one implemented DDOT $\rbracket{N,k}$ collection, the data to create such matrices for $t=k-1$ is available from the experiments.
In the future, we intend to investigate more elaborate methods of inferring correlations structure using DDOT, and we will accordingly expand our repository \cite{qrem}.

\begin{@twocolumnfalse}
\begin{algorithm*}
\small
 \caption{Generation of a perfect collection of $\rbracket{N,k}$ DDOT quantum circuits using random hash functions\label{alg:random_generation}$\qquad \ $}
  \RaggedRight\textbf{Input}:\\
  \RaggedRight $L$: number of hash functions\\
  \RaggedRight $N$: number of all qubits\\
  \RaggedRight $k$: size of the marginal\\
\begin{enumerate}
    \settowidth\tablen{$\quad$}
    \item Start collection by creating two circuits which prepare all qubits in $\ket{0}$ state and in $\ket{1}$ state.
    \item Generate $L$ random hash functions $\cbracket{f_1,\dots,f_L}$, i.e., random mappings $f_l: \sbracket{N}\rightarrow \sbracket{k}$.
	\item Define $k$-qubit sub-register as a set of all bit-strings $\cbracket{\X^1,\dots,\X^{2^k}}$ of length $k$.
	\item \textbf{For} each function $f_l$, \textbf{do}:\\
    \tabbox{$\quad$}  \textbf{For} each bitstring $\X^i$ in $k$-qubit register, \textbf{do}:\\
    \tabbox{$\quad$}\tabbox{$\quad$} Define string \textbf{Y} of length $N$ in the following way:
    \begin{align*}
        \rbracket{\textbf{Y}}_{\textbf{j}} =  \rbracket{\X^i}_{f_l\rbracket{\textbf{j}}}
        \end{align*}
    \tabbox{$\quad$}\tabbox{$\quad$}\mbox{ Save \textbf{Y} as definition of one quantum circuit in the} DDOT collection -- each symbol '0' corresponds to identity \tabbox{$\qquad$ $\quad$}\tabbox{$\qquad$ $\quad$} gate, each '1' symbol corresponds to NOT gate. 
    \item Check if generated set of $s=2+L(2^k-2)$ bitstrings $\cbracket{\Y}$ contains all combinations of $k$-length subsets of symbols '0' and '1'. 
    \\\textbf{If} yes:\\
        \tabbox{$\quad$}family $\cbracket{\Y}$ defines perfect DDOT collection,
    \\\textbf{If} not:\\
        \tabbox{$\quad$}set $L\rightarrow L+1$, generate new random hash function, add it to the collection and perform steps 2, 3 and 4.
\end{enumerate}
\end{algorithm*}
\begin{algorithm*}
\small
  \caption{\centering Generation of $\rbracket{N,k}$ perfect collection of DDOT quantum circuits using random circuits\label{alg:random_generation_circuits}$\qquad\ \qquad \ \qquad   $}
  \RaggedRight\textbf{Input}:\\
  \RaggedRight $s$: number of circuits\\
  \RaggedRight $N$: number of all qubits\\
  \RaggedRight $k$: size of the marginal\\
\begin{enumerate}
    \settowidth\tablen{$\quad$}
    \item Start collection by creating two circuits which prepare all qubits in $\ket{0}$ state and in $\ket{1}$ state.
    \item Generate $s$ random bitstrings of size $N$. 
    Each bitstring is a definition of quantum circuits  -- symbol '0' corresponds to identity gate, and symbol '1' symbol corresponds to NOT gate. 

    \item Check if generated set of $s$ bitstrings $\cbracket{\Y}$ contains all combinations of $k$-length subsets of symbols '0' and '1'. 
    \\\textbf{If} yes:\\
        \tabbox{$\quad$}family $\cbracket{\Y}$ defines perfect DDOT collection,
    \\\textbf{If} not:\\
        \tabbox{$\quad$}set $s\rightarrow s+1$, generate new random bitstring, add it to the collection and check new collection.
\end{enumerate}
\end{algorithm*}
\begin{algorithm*}
\small
 \caption{Assignment of qubits to clusters and neighborhoods\label{alg:clusters}}
\RaggedRight\textbf{Input}:\\
\RaggedRight $\S_{\text{pairs}}$: set of all pair indices without repetitions of indices\\
\RaggedRight $\delta_{\text{clust}}$: threshold of correlations for assignment to clusters\\
\RaggedRight $\delta_{\text{neighb}}$: threshold of correlations for assignment to neighborhoods\\
\vspace{0.15cm}
\RaggedRight $\cbracket{\Lambda_{X_i | Y_i }^{Y_j}}_{i,j}$: collection of single-qubit noise matrices depending on state of single neighbour for all pairs.\\
\vspace{0.15cm}
\begin{algorithmic}
\For {$(i,j)$ in $\S_{\text{pairs}}$}
	    \State Calculate $c_{j\rightarrow i}$ and $c_{i\rightarrow j}$  as 
	    \begin{align*}
	        c_{j\rightarrow i} = \frac{1}{2}||\Lambda_{Q_i }^{Y_j = '0'}-\Lambda_{Q_i }^{Y_j = '1'}||_{1\rightarrow1}\ , \quad
	        c_{i\rightarrow j} = \frac{1}{2}||\Lambda_{Q_j }^{Y_i = '0'}-\Lambda_{Q_j }^{Y_i = '1'}||_{1\rightarrow1}\ ,
	    \end{align*}
	     \State where $Y_j$ denotes input state of qubit $j$ (see Eq.~\eqref{eq:correlations_pair}).  
	    \State \ 
	    \If{$ c_{j\rightarrow i}>\delta_{\text{clust}}$ \textbf{or} $ c_{i\rightarrow j}>\delta_{\text{clust}}$}
	        \State Assign qubits $i$ and $j$ to the same cluster.
	  \State \ 
	    \Else
	        \If{$ c_{j\rightarrow i}>\delta_{\text{neighb}}$}
	           \State Assign qubit $j$ to the neighborhood of qubit $i$.
	        \EndIf
	        \If{$ c_{i\rightarrow j}>\delta_{\text{neighb}}$}
	           \State Assign qubit $i$ to the neighborhood of qubit $j$.
	        \EndIf
	  \EndIf
	 \EndFor
\end{algorithmic} 
\end{algorithm*}
\end{@twocolumnfalse}

Once a model for the above structure is obtained, one can use the rest of the data obtained in the DDOT procedure to reconstruct the cluster noise matrices as a function of the state of the neighbors and consequently construct a global noise model (Eq.~\eqref{eq:noise_model_correlated}), as well as the correction matrices for the marginals (Eqs.~\eqref{eq:marginal_lambda_average},\eqref{eq:marginal_correction_average}).
From the definition of a perfect $\rbracket{N,k}$ collection, it follows that one can reconstruct only cluster noise matrices involving a number $t = |\cluster_{\cindex}|+ |\N_{\cindex}|$ of qubits which does not exceed $k$.
If it happens that in fact $t>k$, we propose to implement one of two following solutions:
 \begin{enumerate}
     \item Neglect correlations between some qubits in order to enforce that $t=k$.
     For example, in Algorithm~\ref{alg:clusters} if the number of qubits assigned to the neighborhood of some cluster exceeds the limit, one could decide to \textit{not assign} to that neighborhood the qubits with the lowest values of $c_{j\rightarrow i}$ parameters.
     This will result in an imperfect model, which might nevertheless be accurate enough for the purposes of error mitigation.
     \item Refine the noise model by performing additional experiments of standard Diagonal Detector Tomography on a chosen subset.
 \end{enumerate}

We note that for the refinement of the noise model, there also exists the alternative possibility of constructing a \emph{restricted} DDOT collection that implements the characterization on a \emph{specific} set of $t$-length subsets (as opposed to all such subsets). 
We will analyze this problem in more detail in future works, as well as during the development of our GitHub repository \cite{qrem}.

To finish this subsection, we note that Algorithm~\ref{alg:clusters}, while being straightforward and efficient, is not a flawless method. 
For example, it might happen that the noise on qubit $i$ highly depends on the joint state of $j$ and $l$, but this dependence is much lower if one considers qubit $j$ and qubit $l$ separately. In that case, Algorithm~\ref{alg:clusters} might not assign the three qubits to the same cluster/neighborhood, even though they are correlated.
A way to overcome the above limitation is to consider s generalizations of Algorithm~\ref{alg:clusters} that makes use of three-qubit parameters ``$c_{jl\rightarrow i}$'', hence requiring a $k\geq 3$ DDOT procedure.
This might be particularly useful for refining dependencies between clusters which were reported by the original Algorithm~\ref{alg:clusters}.

\subsection{Constructing balanced DDOT collections\label{sec:app:ddot_balanced}}

Interestingly, the tools utilized in the previous sections for the analysis of statistical deviations can be used to estimate the probability that a given collection of DDOT circuits will be approximately balanced.
Recall that a perfect collection of   $\rbracket{N,k}$ DDOT circuits is a collection of $N$-qubit quantum circuits consisting of $\iden$ and $X$ gates with a property that for each $k$-qubit subset every $k$-qubit computational basis state is implemented at least once in the whole collection.
If the collection is \textit{balanced}, it means that additionally each basis state is sampled the same number of times.

Consider the randomized construction of such collection -- take circuits which are uniformly random combinations of $0$ and $1$ symbols ('$0$' corresponding to the identity gate and '$1$' to the NOT gate). 
This can be constructed efficiently since it suffices to choose each of the $N$ bits at random independently.
Such construction can be viewed as sampling from a $2^N$-dimensional uniform distribution (corresponding to all possible $N$-bit strings describing possible circuits).
The $k$-bit marginal distributions obtained from this distribution correspond to the distribution of local circuits on $k$-qubit subsets (i.e., the noise characterization by implementing computational-basis states on $k$-qubit subsets).
Having this perspective, we can formulate the problem of approximate balancing of DDOT family in the following way: What is the probability that, when sampling from the global $2^N$-dimensional uniform distribution, all of the $k$-bit marginals (corresponding to the subsets of interest in DDOT) will be at most $\epsilon$-distant from the uniform distributions (on $2^k$-dimensional space)? 

Now we can use tools of statistical analysis presented in the previous sections.
Our single marginal distribution has $2^k$ outcomes, hence from Eq.~\eqref{app:eq:probability_statistical}
we get that the probability of a single marginal being \emph{at least} $\epsilon$-distant in TVD is bounded by
\begin{align}
  P_{1} \leq 2^{2^k}\exp\rbracket{-2s\epsilon^2} \ ,
\end{align}
where $s$ is here the number of random circuits in the collection (viewed as samples from the global uniform distribution of bitstrings).
Since there is $\binom{N}{k}$ number of $k$-bit marginals, applying the union bound (as in the derivations in the previous sections) gives the upper bound for the probability $P_{N,k}$ that at least one of the $\binom{N}{k}$ marginals is at least $\epsilon$-distant from uniform distributions as
\begin{align}\label{app:eq:circuits_balanced}
P_{N,k}\leq \binom{N}{k}2^{2^k}\exp\rbracket{-2s\epsilon^2} \ .
\end{align}

Hence with probability \textit{at least} $1-\binom{N}{k}2^{2^k}\exp\rbracket{-2s\epsilon^2}$ all marginals are \textit{at most} $\epsilon$-distant from uniform distribution.
Note that $\epsilon=0$ corresponds to a perfectly balanced family of circuits, and small but non-zero $\epsilon$ will correspond to the approximately balanced family.

Let us now choose some parameter $\delta$ as an upper bound for probability Eq.~\eqref{app:eq:circuits_balanced} and find the bound for the required number of random circuits $s$ (viewed as samples from a global uniform distribution) which are needed to obtain a family for which each marginal is distant from the uniform distribution by at most $\epsilon$.
After basic manipulations of Eq.~\eqref{app:eq:circuits_balanced} we obtain that 
\begin{align}
s \geq \frac{2^k\log{2}+\log{\binom{N}{k}}+\log\rbracket{\frac{1}{\delta}}}{2\epsilon^2} \approx \frac{2^k+k\log{N}+\log{\frac{1}{\delta}}}{2\epsilon^2} \ .
\end{align}

\subsection{The efficiency of the random construction of DDOT collection\label{sec:app:ddot_perfect}}
\subsubsection{Random circuits}
Adopting the perspective from the previous section, we can in a simple manner tackle the problem of bounding the required number of random circuits which are needed to obtain a perfect collection of $\binom{N}{k}$ DDOT circuits.
Consider randomly sampling $s$ number of bit-string of length $N$.
Now for a fixed $k$-element subset, the probability of a particular $k$-element combination \emph{not appearing} is $1-\frac{1}{2^k}$. 
Hence after $s$ samples, there is $\rbracket{1-\frac{1}{2^k}}^s$ probability that this particular combination did not appear.
Since there are $2^k\binom{N}{k}$ combinations of interest (i.e., $2^k$ small $k$-length bitstrings for all $\binom{N}{k}$ subsets), we can use the union bound to obtain
\begin{align}\label{eq:app:perfect_collection_probability}
    2^k\binom{N}{k} \rbracket{1-\frac{1}{2^k}}^s \approx  2^k\binom{N}{k} \exp\rbracket{-\frac{s}{2^k}}
\end{align}
as the upper bound for the probability that \emph{at least} one $k$-length bit-string did not appear after $s$ samples.
This means that with probability of at least $1-2^k\binom{N}{k} \exp\rbracket{-\frac{s}{2^k}}$, all of the $k$-length bitstrings appeared. 

Let us now choose some parameter $\delta$ as the upper bound for Eq.~\eqref{eq:app:perfect_collection_probability} and calculate the resulting bound for the number of samples (i.e., random circuits). 
After basic manipulations we obtain
\begin{align}
s > 2^k\rbracket{k \log{2}+\log{\binom{N}{k}}+\log{\frac{1}{\delta}}} \approx 2^{k}\rbracket{k\log{2N}+\log{\frac{1}{\delta}}} \ .
\end{align}

\subsubsection{Random hash functions}
Using simple combinatorial arguments, in Ref.~\cite{Cotler2020quantum} it is shown that the probability of the collection of $\binom{N}{k}$ \textit{random hash functions} not being perfect (for our purposes perfect hash function collection means that it can be used to construct a perfect DDOT collection) is upper bounded by
\begin{align}\label{app:eq:hash_functions_cotler}
    \binom{N}{k}\rbracket{1-\frac{k!}{k^k}}^{L} \ ,
\end{align}
where $L$ is the number of generated random hash functions.
Then the authors simplify the use of the above fact to derive a bound on the needed number of hash functions for the collection to be perfect.
Let us repeat the derivation from Ref.~\cite{Cotler2020quantum} with paying attention to specific factors.
From Algorithm~\ref{alg:random_generation} it follows that the number of circuits $s$ obtained from a given hash function collection is equal to $s = 2+\rbracket{2^{k}-2}L$.
Now by choosing the parameter $\delta$ as upper bound on Eq.~\eqref{app:eq:hash_functions_cotler}, we obtain that required number of circuits is lower bounded by 
\begin{align*}\label{app:eq:hash_functions_scaling}
    s &\geq 2+\rbracket{2^k-2} \frac{-\log\binom{N}{k}+\log{\delta}}{\log{\rbracket{1-\sqrt{2\pi k}\ e^{-k}}} }  \\
    &\ge 2+
    \rbracket{2^k-2} \frac{-\log\binom{N}{k}+\log{\delta}}{-\sqrt{2\pi k}\ e^{-k} - 2 \pi k e^{-2k}} \  \\
    &\ge 2+ \rbracket{2^k-2} \frac{-\log\binom{N}{k}+\log{\delta}}{-\sqrt{2\pi}\ k\ e^{-k}}  \\
    &\approx 2^k e^{k}  \rbracket{\frac{1}{k}\log{\frac{1}{\delta}}+\log{N}}  \ , \numberthis
\end{align*}
where we used that $\log (1-x) \ge -x -x^2$. Let us note that Eq.~\eqref{app:eq:hash_functions_scaling} is a reiterated result from Ref.~\cite{Cotler2020quantum}.

In above derivations we assumed that $\rbracket{-\log{\binom{N}{k}}+\log\delta}<0$.
, then we utilized the fact that 
\begin{align}
   \sqrt{2\pi}\ k^{k+\frac{1}{2}}e^{-k}\ \leq k! \leq   e\ k^{k+\frac{1}{2}}e^{-k} ,
\end{align}
and we used approximation $\binom{N}{k}\approx N^k$.

\subsection{Heuristic balancing of DDOT collection\label{sec:app:ddot_heuristic}}
After generating a perfect DDOT collection, one can be interested in making it more balanced. 
There are various possible figures of merit that can be used to quantify the ``balancing'' of the family. 
For example, in previous sections, we used TVD between a uniform distribution and generated sample, when viewing obtained circuits in the collection as samples from the uniform distribution.
Another possibility is to calculate a number of appearances of each marginal term in the whole collection (for example, for 2-qubit subsets the number that each of $00$, $01$, $10$, and $11$ appeared, for every 2-qubit subset), and then take an empirical standard deviation $\sigma_{n,k}$ of this quantity. 
A perfectly balanced family would have 0 standard deviation defined in that way.

Now we discuss a simple heuristic method to improve balancing. 
The starting point is a perfect $\rbracket{N,k}$ DDOT collection. 
Now we apply the following steps in a loop.
\begin{enumerate}
    \item Calculate a number of appearances of all $k$-qubit terms in the collection.
    \item Find the set of non-overlapping $k$-bit marginals which appear in the collection the least number of times compared to the whole population. 
    If $k$ does not divide $N$, choose $\lfloor\frac{N}{k}\rfloor$ subsets and add random gates to the remaining bits. 
    \item Add circuit which implements those least-appearing marginals to the collection.
\end{enumerate}
For example, say that in the last step of the above procedure for $k=2$ and $N=5$ we added circuit
\begin{align}
    01100 \ .
\end{align}
This might mean that we found that the least appearing marginal is state $01$ on qubits $Q0$ and $Q1$, the marginal $10$ on qubits $Q2$ and $Q3$ appeared the same or second-least number of times in the whole collection, while the $0$ on last qubit $Q4$ was random.
In this way, we are adding missing marginal terms ``by hand'' at each step of the loop.
Clearly, this procedure is heuristic and it is not guaranteed to succeed since by adding certain circuits that implement desired marginals we also implement other marginals (in the example above, we add, e.g., marginal $00$ on qubits $Q0$ and $Q3$).
If it happens that those other marginals are in the opposite ``tail'' of the whole distribution (i.e., they are the most abundant ones), then the collection will not become more balanced (it can actually become less balanced).
However, on average it is more likely that by doing so we add marginals that are closer to the ``average marginals'' (i.e., those which appear a number of times close to the mean appearance number).

Clearly, a lot of practical refinements of the described method are possible -- for example, in the second step of the method, instead of adding the least appearing marginals one can focus on maximizing the number of low-appearance marginals. 
In practice, we found that the described method without modifications usually reduced $\sigma_{N,k}$ with a growing number of circuits added in the loop.

To conclude, let us note that the special case of the above technique can be also used to add circuits to the non-perfect collection of DDOT circuits in order to make it closer to being perfect.
Namely, if the collection is not perfect, this means that it does not implement some computational-basis states on some $k$-qubit subsets.
Hence in that case the above procedure would report ``the least appearing marginals'' to be those which are missing in the collection and it would keep adding them in a loop until there are no missing terms -- a strategy which might turn out to better then adding random circuits if the number of missing terms is not too big.

\subsection{Overestimation of correlations\label{sec:app:ddot_overestimation}}

In the main text, we explained that when considering the implementation of DDOT circuits, some two-qubit correlations might be overestimated due to the fact the collection of circuits is not balanced. 
To understand this effect, let us now consider the following illustrative (but rather unrealistic) example.

\subsubsection{Explanation of the effect and post-processing strategy}
Say that we want marginal probability distribution on qubits $Q0$ and $Q2$ when input state was $\ket{00}$.
Then if noise is \textit{uncorrelated} it does not matter whether global input state on three qubits was $\ket{000}$ or $\ket{010}$ ($\ket{001}$ or $\ket{011}$).
However, when there are correlations, it might happen that those two distributions \textit{will be different} depending on the input state of $Q1$.
When we marginalize over $Q1$ we forgot about ``where the data came from''.
Normally, we would just add the marginal probability from circuits implementing both global states and then normalized it.
But if global state $\ket{000}$ was implemented a different number of times than $\ket{010}$, it can cause that effectively correlations which are caused by $Q1$ are wrongly identified as correlations between $Q0$ and $Q2$, because some global probability distributions contribute to the marginal with higher weights (i.e., are effectively counted more times when calculating marginal distributions).
As mentioned in the main text, the natural way to reduce such effects is to create a collection that is balanced, hence it samples from all two-qubit states the same number of times. 
The other thing one can do is to perform post-processing of the experimental data in such a way, that all contributions to the given marginal are weighted by the inverse of a number of times they were implemented. Note that they come from different global distributions, so what is important is the number of times the given marginal state was implemented together with some specific state on all the other qubits.
Importantly, this method is not perfect because for big systems the ``state on all the other qubits'' will likely be different each time anyway (this is due to the fact that collection of DDOT circuits will be random, hence it becomes quite unlikely to obtain two times the same bitstring if the number of qubits is high).
Another thing one can do is to change the weights of the given contribution to the marginal distribution depending on the state of some particular subset of qubits in order to assess whether inferred correlations were correct.
Specifically, one might perform a recursive procedure in which the structure of clusters and neighborhoods inferred from non-post-processed data is validated on particular subsets using this type of post-processing.
There are certainly a lot of practical possibilities to improve the post-processing scheme and we intend to investigate them in future research.

\subsubsection{Illustration of the effect}
To have some idea how big a described effect can be, let us consider implementation of the following collection of circuits on three qubits $\cbracket{Q_0,Q_1,Q_2}$
\begin{align*}\label{eq:example_collection}
\{000,  \textcolor{red}{0}0\textcolor{red}{1}, \textcolor{red}{0}0\textcolor{red}{1}, 010,  \textcolor{red}{0}1\textcolor{red}{1}, 100, 101 ,110 , 111 \} \ , \numberthis
\end{align*}
where each $0$ corresponds to implementation of identity gate and each $1$ to the NOT gate.
Note that considered collection is clearly abundant for 3-qubit characterization since it contains more than $2^3=8$ circuits.
One can think of the above circuits as a part of a collection of DDOT circuits on a higher number of qubits, while we only look at a specific triple of qubits.
Now let us assume that that the qubits $Q_0$ and $Q_2$ are completely \emph{uncorrelated} in terms of readout noise.
On the other hand, $Q_1$ and $Q_2$ are highly correlated in a following way
\begin{align}
    Q_1 &= \ket{0} \implies \text{do nothing}\ ,\\
    Q_1 &= \ket{1} \implies \text{apply bitflip to $Q_2$} \ .
\end{align}
Now if one takes outcomes from circuits from the collection, add them up, normalize and consider resulting estimators of probabilities of obtaining different outcomes after implementation of collection from Eq.~\eqref{eq:example_collection} \emph{marginalized} over qubit 1, it follows from direct computation that the noise matrix on $Q_0$ and $Q_2$ is of the form
	\begin{align}
		\Lambda^{Q_0Q_2} = \begin{pmatrix}
		p\rbracket{00|00} & \quad p\rbracket{00|01} &\quad p\rbracket{00|10} &\quad p\rbracket{00|11}\vspace{0.1cm}\\
		p\rbracket{01|00} &\quad p\rbracket{01|01} &\quad p\rbracket{01|10} &\quad p\rbracket{01|11}\vspace{0.1cm}\\
		p\rbracket{10|00} &\quad p\rbracket{10|01} &\quad p\rbracket{10|10} &\quad p\rbracket{10|11}\vspace{0.1cm}\\
		p\rbracket{11|00} &\quad p\rbracket{11|01} &\quad p\rbracket{11|10} &\quad p\rbracket{11|11}
		\end{pmatrix} =   \begin{pmatrix}
		\frac{1}{2}& \quad \textcolor{red}{\frac{1}{3}} &\quad 0 &\quad 0\vspace{0.1cm}\\
		\frac{1}{2} &\quad \textcolor{red}{\frac{2}{3}} &\quad 0 &\quad 0\vspace{0.1cm}\\
		0 &\quad 0 &\quad \frac{1}{2}&\quad \frac{1}{2}\vspace{0.1cm}\\
	    0 &\quad 0 &\quad \frac{1}{3} &\quad \frac{1}{2}
		\end{pmatrix} \ , 
		\end{align}
		from which we obtain that the noise matrices on $Q_0$ depending on state of $Q_2$ are
		\begin{align}
		\Lambda^{Y_2 = '0'} &= \begin{pmatrix}
		p\rbracket{00|00}+p\rbracket{01|00}   &\quad p\rbracket{00|10}+p\rbracket{01|10} \vspace{0.1cm} \\
		p\rbracket{10|00} +p\rbracket{11|00}  &\quad p\rbracket{10|10}+p\rbracket{11|10}
		\end{pmatrix}\ = \begin{pmatrix}
		1  &\quad 0 \vspace{0.1cm} \\
		0  &\quad 1
		\end{pmatrix}\ \\
		\Lambda^{Y_2 = '1'} &= \begin{pmatrix}
		p\rbracket{00|01}+p\rbracket{01|01}   &\quad p\rbracket{00|11}+p\rbracket{01|11} \vspace{0.1cm} \\
		p\rbracket{10|01} +p\rbracket{11|01}  &\quad p\rbracket{10|11}+p\rbracket{11|11}
		\end{pmatrix}\ = \begin{pmatrix}
		1   &\quad 0\vspace{0.1cm} \\
		0 &\quad 1
		\end{pmatrix}\, \\
		\end{align}
		and on $Q_2$ depending on state of $Q_0$
		\begin{align}
		\Lambda^{Y_0 = '0'} &=\begin{pmatrix}
		p\rbracket{00|00}+p\rbracket{10|00}   &\quad p\rbracket{00|01}+p\rbracket{10|01} \vspace{0.1cm} \\
		p\rbracket{01|00} +p\rbracket{11|00}  &\quad p\rbracket{01|01}+p\rbracket{11|01}
		\end{pmatrix}\ = \begin{pmatrix}
		\frac{1}{2}  &\quad \frac{1}{2}\vspace{0.1cm} \\
		\frac{1}{2}  &\quad \frac{1}{2}
		\end{pmatrix} \\
		\Lambda^{Y_0 = '1'} &= \begin{pmatrix}
		p\rbracket{00|10}+p\rbracket{10|10}   &\quad p\rbracket{00|11}+p\rbracket{10|11} \vspace{0.1cm} \\
		p\rbracket{01|10} +p\rbracket{11|10}  &\quad p\rbracket{01|11}+p\rbracket{11|11}
		\end{pmatrix} \ =\begin{pmatrix}
		 \textcolor{red}{\frac{1}{3}}   &\quad \frac{1}{2}  \vspace{0.1cm} \\
		 \textcolor{red}{\frac{2}{3}}  &\quad \frac{1}{2} 
		\end{pmatrix} \ .
		\end{align}
		The above matrices give correlation factors (Eq.~\eqref{eq:correlations_pair}
        \begin{align}
            c_{2\rightarrow 0} &= 0\\
            c_{0\rightarrow 2} &= \frac{1}{6} \ .
        \end{align}		
        Hence the Algorithm~\ref{alg:clusters} would report that the noise on $Q_2$ significantly depends on the state of $Q_0$ even though physically there is no dependence.
        The correlations between $Q_1$ and $Q_2$ give rise to false correlations between $Q_0$ and $Q_2$ after marginalizing over $Q_1$.
        This is solely due to the fact that in Eq.~\eqref{eq:example_collection} different three-qubit states are sampled different numbers of times, which gives some contributions to the marginals have higher weight when constructing effective noise matrix on $Q_0$ and $Q_2$.

\section{Noise characterization scheme overview}\label{sec:app:characterization}

Here we provide a step-by-step description of our noise characterization procedure. 
We note that stages 0 and 1 were only briefly mentioned in the main text, and they correspond to the verification of the undertaken assumptions: stage 0 verifies the quality of the single-qubit gates, and stage 1 the assumption about classical nature of the noise in the measurement device.
Stages 2 and 3 describe the proper characterization scheme of Diagonal Detector Overlapping Tomography which was discussed in detail in the main part of the work and in Appendix~\ref{sec:app:ddot}.

\subsection{Stage 0 -- single-qubit gate fidelities}
	To begin, let us note that of our characterization procedure relies on the assumption of perfect state preparation.
	However, in practice, this assumption might be significantly violated. 
	Therefore we propose not to use qubits with single-qubit gate infidelities above some threshold -- in our experiments we arbitrarily chose this threshold to be $0.01$.
	
	In experiments on IBM's \textit{Melbourne} backend, the single-qubit gate fidelities were good enough (fidelities above $99\%$) to use all of the qubits. 
	In the case of Rigetti's \textit{Aspen-8}, we discarded  8 qubits which had fidelities below $98\%$, while still using qubits 5 qubits which had fidelity in range $\sbracket{98\%,99\%}$.
	\subsection{Stage 1 -- assessing classical form of the noise}
	In order to perform simultaneous estimation of single-qubit detectors with an overcomplete operator basis, one needs to implement 6 different circuits -- each implementing eigenstate of a different Pauli matrix on every qubit at the same time.
	After this implementation, one needs to post-process data to obtain marginal single-qubit distributions and use standard detector tomography algorithms, for example, those described in \cite{Fiurasek2001} (and implemented in Python in online repository \cite{qrem}).

	Having reconstructed POVMs describing each single-qubit detector in a device, one can assess the classicality of the noise using methods described in Ref.~\cite{Maciejewski2020mitigation} -- 
	for the sake of completeness, we will recall the main notions of that procedure here.
	First, we will need a notion of distance between quantum measurements.
	Such distances are usually related to the probability distributions that those measurements generate via Born's rule.
    Recall from Eq.~\eqref{eq:TVD} that the Total Variation Distance between probability distributions $\p$ and $\mathbf{q}$ as L1 norm of difference of those vectors
	\begin{align}\label{eq:appendix_TVD}
	\text{D}_{\text{TV}}\rbracket{\p,\mathbf{q}} = \frac{1}{2}||\p-\mathbf{q}||_{1} = \frac{1}{2} \sum_i |p_i-q_i| \ .
	\end{align}
	Now, the distance between quantum measurements related to TVD is the operational distance  $\text{D}_{\text{op}}$ defined for two POVMs $\M$ and $\P$ as \cite{Puchala2018}
	\begin{align}\label{eq:appendix_operational_distace}
	\text{D}_{\text{op}}\rbracket{\M,\P} = \max \text{D}_{\text{TV}}\rbracket{\p_{\M},\mathbf{q}_{\P}} \ ,
	\end{align}
	where $\p_{\M}$ (or $\mathbf{q}_{\P}$) is a probability distribution generated by measurement $\M$ (or $\P$) via Born's rule, and the maximization goes over all quantum states. 
	Hence, the operational distance between two quantum measurements is the worst-case distance between probability distributions they can generate.
	Now, to quantify readout noise one can simply calculate\footnote{See Refs. \cite{Puchala2018,Maciejewski2020mitigation} on practical calculation of RHS of Eq.~\eqref{eq:appendix_operational_distace}.} operational distance for reconstructed POVM $\M$ for each qubit and the ideal measurement $\P$. 
	Then, to quantify coherent part of the noise, one can make the following decomposition \cite{Maciejewski2020mitigation}
	\begin{align}\label{eq:appendix_measurement_decomposition}
	\M = \underbrace{\Lambda\P}_{\text{classical part}}+\underbrace{\Delta}_{\text{coherent part}} \ .
	\end{align}
	For the ideal measurement $\P$ being the standard measurement in the computational basis, this decomposition is straightforward -- the classical part of the noise is contained in the diagonal part of the measurement operators, while off-diagonal terms are a coherent part.
	The magnitude of the coherent part of the noise can be quantified as $		\text{D}_{\text{op}}\rbracket{\M,\Lambda\P}$.
	The assumption of fully classical noise leads to discarding the coherent part $\Delta$ in the Eq.~\eqref{eq:appendix_measurement_decomposition} and performing error-mitigation \textit{as if} POVM $\Lambda \P$ was exact description of the detector. 
	This leads to the propagation of coherent errors under error-mitigation, and it can be quantified via $||\Lambda^{-1}||_{1\rightarrow1}\text{D}_{\text{op}}\rbracket{\M,\Lambda\P}$.
	Following guidelines in Ref.~\cite{Maciejewski2020mitigation}, we propose to \emph{discard} every qubit which fulfills the following inequality
	\begin{align}\label{app:eq:rule_of_thumb}
	||\Lambda^{-1}||_{1\rightarrow1}\text{D}_{\text{op}}\rbracket{\M,\Lambda\P} \geq \text{D}_{\text{op}}\rbracket{\M,\P} \ .
	\end{align}
	Fortunately, in experiments on both IBM's and Rigetti's machines, this step did not lead us to discard any qubits -- the noise in those devices remains highly classical, as indicated in previous experiments \cite{Maciejewski2020mitigation,Chen2019detector}.
	
	Before proceeding further, let us note that while assessing classicality of the noise via single-qubit QDTs we make the following implicit assumption -- the coherent part of the noise does not scale significantly with growing system size (by this we mean that it is at most additive in the number of qubits).
	Furthermore, in using the rule Eq.~\eqref{app:eq:rule_of_thumb} we disregard effects of statistical deviations.

    \subsection{Stage 2 -- Diagonal Detector Overlapping Tomography}
	
	The main idea of DDOT was described in the main text. 
	The practical generation of relevant circuits was described in Appendix~\ref{sec:app:ddot_construction} and summarized in Algorithm~\ref{alg:random_generation} (using random hash functions) and in Algorithm~\ref{alg:random_generation_circuits} (using random circuits).

    \subsection{Stage 3 -- inferring correlations structure}
	The procedure of  inferring correlations (i.e., structure of clusters and neighbourhoods) was described in detail in Appendix~\ref{sec:app:ddot_inference} and summarized in Algorithm~\ref{alg:clusters}.

\section{Sample complexity of energy estimation\label{sec:app:sampling} }

In this section, we give some derivations related to the estimation of expectation values of local Hamiltonian terms on various quantum states.
We start by discussing correlations in random states in Appendix~\ref{sec:app:sampling_states}. 
Then in Appendix~\ref{sec:app:sampling_proof2}, we prove Proposition~2 from the main text, which concerns states appearing in the QAOA algorithms.
Finally, in Appendix~\ref{sec:app:sampling_covariances} we analyze what happens with the covariances of local Hamiltonian terms if the uncorrelated measurement noise and its mitigation are present.

\subsection{Local correlations in random states\label{sec:app:sampling_states}}

\begin{proposition}\label{prop:covMIXEDBOUND}
 Let $\ket{\psi}$ be a pure state on $(\mathbb{C}^2)^{\ot N}$, for  a subset $\gamma\subset[N]$ of qubits we denote by $\rho_\gamma$ and $\id_\gamma$ the marginal of $\ketbra{\psi}{\psi}$ corresponding to $\gamma$ and the maximally mixed state on $\gamma$, respectively.  Let $H_\alpha,H_\beta$ be local hamiltonians term that act on disjoint substets of qubits. We then have
 \begin{equation}
     \mathrm{Cov}(H_A,H_B) \leq 3 \| H_A\| \|H_B\| \|\rho_{\alpha\cup \beta } - \id_{\alpha\cup\beta} \|_1 .
 \end{equation}
\end{proposition}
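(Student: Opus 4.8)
The plan is to reduce the statement to two elementary norm estimates by exploiting the fact that on a product state the covariance of observables supported on disjoint sets vanishes identically. Write $\sigma \coloneqq \rho_{\alpha\cup\beta}$ and $\tau \coloneqq \id_{\alpha\cup\beta} = \id_\alpha \otimes \id_\beta$. Since $H_\alpha$ and $H_\beta$ act on the disjoint sets $\alpha,\beta$, Born's rule gives $\langle H_\alpha H_\beta\rangle = \mathrm{Tr}(\sigma\, H_\alpha\otimes H_\beta)$, while $\langle H_\alpha\rangle = \mathrm{Tr}(\rho_\alpha H_\alpha)$ and $\langle H_\beta\rangle = \mathrm{Tr}(\rho_\beta H_\beta)$ depend only on the one-region marginals. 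The key observation is that replacing $\sigma$ by the maximally mixed $\tau$ makes both terms factorize the same way, so that the covariance evaluated on $\tau$ equals zero. I would therefore write the covariance as the difference between its value on $\sigma$ and its (vanishing) value on $\tau$.

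Concretely, first I would split
$$\mathrm{Cov}(H_\alpha,H_\beta) = \mathrm{Tr}\big((\sigma-\tau)\,H_\alpha\otimes H_\beta\big) - \big(ab - a_0 b_0\big),$$
where $a = \mathrm{Tr}(\rho_\alpha H_\alpha)$, $b = \mathrm{Tr}(\rho_\beta H_\beta)$, and $a_0,b_0$ are the analogous expectations in the maximally mixed state. The first term is bounded by the H\"older-type inequality $|\mathrm{Tr}(X M)|\le \|X\|_1\,\|M\|_\infty$ together with $\|H_\alpha\otimes H_\beta\|_\infty = \|H_\alpha\|\,\|H_\beta\|$, giving the contribution $\|\sigma-\tau\|_1\,\|H_\alpha\|\,\|H_\beta\|$. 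For the second term I would use the telescoping identity $ab - a_0 b_0 = (a-a_0)b + a_0(b-b_0)$, bound $|a-a_0| \le \|\rho_\alpha - \id_\alpha\|_1\,\|H_\alpha\|$ and $|b-b_0|\le\|\rho_\beta-\id_\beta\|_1\,\|H_\beta\|$ by the same H\"older inequality, and use $|b|\le\|H_\beta\|$, $|a_0|\le\|H_\alpha\|$ since states have unit trace norm.

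The final ingredient is monotonicity of the trace norm under the partial trace (a trace-preserving completely positive map): $\|\rho_\alpha - \id_\alpha\|_1 = \|\mathrm{Tr}_\beta(\sigma-\tau)\|_1 \le \|\sigma-\tau\|_1$, and likewise for $\beta$. Collecting the three contributions---one from the joint term and two from the telescoped product of marginals---yields the factor $3$ and the claimed bound $\mathrm{Cov}(H_\alpha,H_\beta) \le 3\,\|H_\alpha\|\,\|H_\beta\|\,\|\rho_{\alpha\cup\beta}-\id_{\alpha\cup\beta}\|_1$. I do not anticipate a genuine obstacle here; the only points requiring care are keeping the bookkeeping of the three terms straight so that the constant comes out as $3$ rather than something larger, and invoking the partial-trace contraction correctly so that everything is expressed in terms of the single trace distance $\|\rho_{\alpha\cup\beta}-\id_{\alpha\cup\beta}\|_1$.
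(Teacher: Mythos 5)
Your proof is correct and follows essentially the same route as the paper's: your splitting of the covariance into the joint term $\mathrm{Tr}\bigl((\sigma-\tau)H_\alpha\otimes H_\beta\bigr)$ plus the telescoped product $(a-a_0)b + a_0(b-b_0)$ is exactly the scalar form of the paper's operator decomposition $\rho_{\alpha\cup\beta}-\rho_\alpha\otimes\rho_\beta = \Delta_{\alpha\cup\beta}-\Delta_\alpha\otimes\rho_\beta-\id_\alpha\otimes\Delta_\beta$ with $\Delta_\gamma=\rho_\gamma-\id_\gamma$. Both arguments then invoke the same three ingredients---H\"older's inequality $|\mathrm{Tr}(AX)|\le\|A\|\,\|X\|_1$, the triangle inequality, and monotonicity of the trace norm under partial trace---to land on the same factor of $3$.
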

\begin{proof}
A simple algebra gives
\begin{align}
\mathrm{Cov}(H_\alpha,H_\beta) =  \tr\left(H_\alpha \ot H_\beta ( \rho_{\alpha\cup\beta} - \rho_{\alpha}\otimes  \rho_\beta) \right) \nonumber \\
   =  \tr\left(H_\alpha\ot  H_\beta (  \Delta_{\alpha\cup\beta} -\Delta_{\alpha}\ot\rho_\beta +\id_\alpha \ot \Delta_\beta) \right)\  ,
 \end{align}  
 where $\Delta_\gamma=\rho_\gamma -\id_\gamma$. We now apply the well-known inequality 
 \begin{equation}\label{eq:simpleINEQ}
     \tr( X) \leq \|A\| \|X\|_1\ ,
 \end{equation}
 for $A=H_\alpha \ot H_\beta$ and $X=\Delta_{\alpha\cup\beta} -\Delta_{\alpha}\ot\rho_\beta -\id_\alpha \ot \Delta_\beta$. The 1-norm can be upper bounded as follows:
 \begin{align}\label{eq:DeltBound}
\|    \Delta_{\alpha\cup\beta} -\Delta_{\alpha}\ot\rho_\beta +\id_\alpha \ot \Delta_\beta \|_1 \leq  \|\Delta_{\alpha\cup\beta} \|_1 + \|\Delta_{\alpha} \|_1 +\|\Delta_{\beta} \|_1 \leq 3 \|\Delta_{\alpha\cup\beta} \|_1\ ,
 \end{align}
 where we used the following properties of 1-norm: triangle inequality, multiplicativity: $\|A\ot B\|_1 = \|A\|_1 \|B\|_1  $, and data-processing inequality \cite{Nielsen2010}. We conclude the proof by inserting \eqref{eq:DeltBound} into \eqref{eq:simpleINEQ} and using $\|H_\alpha \ot H_\beta\|=\| H_\alpha\| \|H_\beta\|$. 
\end{proof}

\subsection{Proof of Proposition~\ref{lem:random_graphs}\label{sec:app:sampling_proof2}}

Consider Hamiltonian with connectivity given by Erd\"os-R\'enyi random graph with $N$ nodes and $K$ edges with average degree equal to $q=\frac{K}{N}$. 
We will show that 
if the number of levels  satisfies $p = c  \log(N)+1$ with $c \le \frac{1}{2 \log( 2 q/ \ln 2)} $, then the variance of the energy  $\mathrm{Var}\rbracket{\H}$  will scale as $\mathcal{O}(N^{2-x})$, with $x > 0$ depending on $c$.

 Let $G =(V, E)$ be a graph (with vertex set $V$ and edge set $E$), we denote by $B(i,r)$ the set of vertices that are in graph distance $r$ or less from vertex $i$. 
 Similarly, for an edge $\alpha=(i,j) \in E$,  we define 
 $C(\alpha, r)$ the set of vertices that are in graph distance $r$ or less away from $\alpha$.
 For $\alpha=(i,j)$ and $\beta=(v,w)$, we have that if $v \notin B(i, r+2)$, then $\beta \notin C(\alpha, r)$.

 The random Hamiltonian corresponding to a graph can be written as sum of 2-qubit terms (with also single qubit terms incorporated into to these) corresponding to the edges $E$ of the random graph $G(N,q)$, i.e., $H = \sum_{(i,j) \in E} H_{(i,j)}$.
 The variance is then  
 \begin{align} \label{eq:general_exp}
    \mathrm{Var}\rbracket{\H} = \sum_{\alpha,\beta \in E}\mathrm{Cov}\rbracket{H_{\alpha},H_{\beta}} \ .
\end{align}

To bound this quantity we will utilize the following two facts.
First, we notice that than any non-zero term in Eq.~\eqref{eq:general_exp} can be bounded by
\begin{align}
    \mathrm{Cov}\rbracket{H_{\alpha},H_{\beta}} \leq \mathrm{Var}\rbracket{H_{\alpha}} \ \mathrm{Var}\rbracket{H_{\beta}}  \leq ||H_{\alpha}||\ ||H_{\beta}||  ,
\end{align}
where we used known covariance-variance inequality together with Popoviciu's inequality.

Second, we reiterate an important observation from Ref.~\cite{farhi2020quantum} about these types of QAOAs: Consider two operators $O_1$ and $O_2$ acting non-trivially only on the sets of qubits  $A_1 \subset V$ and $A_2 \subset V$, respectively. If $U$ is a unitary corresponding to a $p$-level QAOA (with any parameter setting), the set of nodes $A_1$ and $A_2$ is in graph distance at least $2p$ distance from each other and  $\ket{\psi}$ is product state, then 
\begin{equation}\label{app:eq:farhi_result}
    \bra{\psi} U^\dagger O_1 O_2 U \ket{\psi} = \bra{\psi} U^\dagger O_1 U \ket{\psi} \bra{\psi} U^\dagger O_2 U \ket{\psi}.
\end{equation}

With these two ingredients, we can bound the variance as
\begin{align} 
    \mathrm{Var}\rbracket{\H} &= \sum_{\alpha,\beta \in E}\mathrm{Cov}\rbracket{H_{\alpha},H_{\beta}} \nonumber \\
    &= \sum_{\alpha \in E} \sum_{\beta \in C(\alpha, 2p)} \mathrm{Cov}\rbracket{H_{\alpha},H_{\beta}} \nonumber \\
    &\le 
    \sum_{\alpha \in E} \; \; \sum_{\beta \in C(\alpha, 2p)} ||H_{\alpha}||\ ||H_{\beta}||  \ .
\end{align}
An even more rough upper bound can be given by
\begin{align} \label{app:eq:rough_bound}
    \mathrm{Var}\rbracket{\H} &\le 
    f_{\H} \sum_{\alpha \in E} \max_{\alpha \in E} |\mathrm{C}(\alpha,2p)|  = f_{\H}\ K\  \max_{\alpha \in E} |\mathrm{C}(\alpha,2p)|  \ ,
\end{align}
with setting
\begin{align}
        f_{\H} = \max_{\alpha,\beta} ||H_{\alpha}||\ ||H_{\beta}|| \
\end{align}

Now we can turn to the concrete case of Hamiltonians corresponding to graph sampled from the  Erd\"os-R\'enyi graphs with average degree $q$. For such graphs the {\it Neighborhood Size Theorem}  states the following \cite{farhi2020quantum}:\\
For any 
\begin{equation}
    r <   \frac{  w\log(N)}{ 4 \log(2q/ \ln(2)},
\end{equation}
where $0 <w >1$ , there exists a constants $a>0$ and $A < 1$ such that
\begin{equation}
    \mathrm{Prob}\Big[\max_{i \in V} |\mathrm{B}(i,r)| \ge N^{A/2}\Big] \le e^{-N^{a/2}}.
\end{equation}
To be more specific, we can give the expression of $a$ and $A$ in terms of $r$ and $N$:
\begin{align}
   A &= w \, \frac{(2 + |\log_{2q}(\ln(2))|) }{(1+ |\log_{2q}(\ln(2))|)}   ,\; \; \\
   a &= \frac{w}{3(1 + |\log_{2q}(\ln(2))|}  \ .
\end{align}
The proof of the above comes from  the proof of Neighborhood Size Theorem given in \cite{farhi2020quantum}, page 13. 
Specifically our inequalities correspond to setting $s=1$.  

Now note that since for any $\alpha =(i,j) \in E$ and $\beta = (v,w) \in E$, we have that if $\beta \in \mathrm{C}(\alpha,r)$ then $v \in \mathrm{B}(i,r+2)$. 
This immediately implies that generally for  $\alpha =(i,j) \in E$ we have $|\mathrm{C}(\alpha,r)| \le  |\mathrm{B}(i,r+2)|^2$, and thus also 
\begin{equation}\label{app:eq:concentration_qaoa}
    \mathrm{Prob}\Big[\max_{\alpha \in E} |\mathrm{C}(\alpha,r-2)| \ge
    N^{{A}}\Big] \le 
    \mathrm{Prob}\Big[\max_{\alpha \in E} |\mathrm{B}(\alpha,r)|^2 \ge
    N^{{A}}\Big] =
    \mathrm{Prob}\Big[\max_{\alpha \in E} |\mathrm{B}(\alpha,r)| \ge
    N^{{A/2}}\Big] \le
    e^{-N^{{a/2}}}.
\end{equation}
If we now choose $2p=r-2$, and thus 
\begin{equation}
    p <   \frac{  w\log(N)}{ {8} \log(2q/ \ln(2))} - 1,
\end{equation}
then by combining Eq.~\eqref{app:eq:concentration_qaoa} with bound in Eq.~\eqref{app:eq:rough_bound}, we get that with probability at least $1-e^{-N^{{a/2}}}$ variance of the Hamiltonian is bounded by
\begin{align} \label{app:eq:final_bound}
    \mathrm{Var}\rbracket{\H} &\le 
    f_{\H}\ K\  \max_{\alpha \in E} |\mathrm{C}(\alpha,r)| \leq f_{\H} \ K\ N^{{A}} = f_{\H} \ q\ N^{{A}+1}
\end{align}
which is statement of Proposition~\ref{lem:random_graphs}.

\subsection{Covariances of local terms in presence of uncorrelated readout noise\label{sec:app:sampling_covariances}}

\subsubsection{Presence of uncorrelated readout noise}
As a starting point, let us assume that we have a state with correlations bounded in trace norm 
\begin{align}\label{app:eq:marginal_state_bounded}
    ||\rho_{\alpha \beta}-\rho_{\alpha}\otimes\rho_{\beta}||_1 \leq \epsilon_{\alpha\beta} \ ,
\end{align}
where $\rho_{\alpha \beta}$ is marginal quantum state on subsystems $\alpha$ and $\beta$ which is close to a product state of those subsystems.
The special instance $\epsilon=0$ corresponds to the product state case.
We are interested in what happens with the covariances of local Hamiltonian terms if the measurement is affected by uncorrelated classical noise of the form $\Lambda = \bigotimes_{i}\Lambda_{Q_i}$, where $\Lambda_{Q_i}$ is noise matrix acting on qubit $Q_i$ (see Eq.~\eqref{eq:approx_error_definition}).
Similarly, as in the main text, the Hamiltonian we consider is classical (i.e., diagonal), and local terms can be decomposed into sums of products of Pauli $\sigma_z$ terms. 
Let us denote by $\mathbf{p}^{\text{noisy}}_{\alpha} = \Lambda_{\alpha} \mathbf{p}^{\text{ideal}}_{\alpha}$ a noisy marginal distribution on the qubit subset $\alpha$ corresponding to local Hamiltonian term $H_{\alpha}$.
Note that in the uncorrelated noise model $\Lambda_{\alpha} = \bigotimes_{Q_i \in \alpha}\Lambda_{Q_i}$.
Similarly to Eq.~\eqref{app:eq:local_term}, we can write the expectation value of the noisy local Hamiltonian term as 
\begin{align}\label{app:eqq:noisy_local_term}
    \tbracket{\tilde{H}_{\alpha}}_{\rho_{\alpha}} \coloneqq \braket{\mathbf{\lambda}_{\alpha}}{\mathbf{p}^{\text{noisy}}_{\alpha}}= \braket{\mathbf{\lambda}_{\alpha}}{\Lambda_{\alpha}\mathbf{p}^{\text{ideal}}_{\alpha}} 
    \ ,
\end{align}
where $\mathbf{\lambda}^{\alpha}$ is a vectorized spectrum of term $H_{\alpha}$, the subscript $\rho_{\alpha}$ indicates the marginal state on which the expectation value is calculated and we used a convenient braket notation to denote scalar product.

We are now interested in bounding the covariance $\mathrm{Cov}\rbracket{H^{\text{noisy}}_{\alpha},H^{\text{noisy}}_{\beta}}$ between two local Hamiltonian terms when measured in a quantum state from Eq.~\eqref{app:eq:marginal_state_bounded}.
Let us decompose a marginal quantum state as
\begin{align*}
    \rho_{\alpha\beta} &= \rho_{\alpha}\otimes \rho_{\beta} + \Delta_{\alpha\beta} \\
    \Delta_{\alpha\beta} &\coloneqq \rho_{\alpha\beta}- \rho_{\alpha}\otimes \rho_{\beta} \ . \numberthis
\end{align*}
Now can write
\begin{align*}\label{app:eq:convariances_local_noise}
    \mathrm{Cov}\rbracket{\tilde{H}_{\alpha},\tilde{H}_{\beta}} &=    \tbracket{\tilde{H}_{\alpha\beta}}_{\rho_{\alpha\beta}}-   \tbracket{\tilde{H}_{\alpha}}_{\rho_{\alpha}}   \tbracket{\tilde{H}_{\beta}}_{\rho_{\beta}} =\\
    &=\tbracket{\tilde{H}_{\alpha\beta}}_{\Delta{\alpha\beta}}+\tbracket{\tilde{H}_{\alpha\beta}}_{\rho_{\alpha}\otimes\rho_{\beta}}-   \tbracket{\tilde{H}_{\alpha}}_{\rho_{\alpha}}   \tbracket{\tilde{H}_{\beta}}_{\rho_{\beta}} = \\
    &=\underbrace{\braket{\mathbf{\lambda}_{\alpha\beta}}{\Lambda_{\alpha\beta}\mathbf{p}^{\text{ideal}}_{\Delta_{\alpha\beta}}}}_{\text{correlated part}} + 
    \underbrace{\braket{\mathbf{\lambda}_{\alpha\beta}}{\Lambda_{\alpha\otimes\beta}\mathbf{p}^{\text{ideal}}_{\alpha\beta}}-\braket{\mathbf{\lambda}_{\alpha}}{\Lambda_{\alpha}\mathbf{p}^{\text{ideal}}_{\alpha}}\braket{\mathbf{\lambda}_{\beta}}{\Lambda_{\alpha}\mathbf{p}^{\text{ideal}}_{\beta}}}_{\text{uncorrelated part}}
    \ . \numberthis
\end{align*}

In the above we slightly abused the notation -- in general, $\alpha$ and $\beta$ can overlap, in which case one needs to insert proper identities and accordingly redefine product state $\rho_{\alpha} \otimes \rho_{\beta}$ (together with bound in Eq.~\eqref{app:eq:marginal_state_bounded} which will now correspond to the different, more refined division of qubits) and corresponding noise matrices.
We also denoted by $\mathbf{p}^{\text{ideal}}_{\Delta_{\alpha\beta}}$ a formal vector given by diagonal elements of $\Delta_{\alpha\beta}$ (this corresponds to measurement in computational basis).
Now note that in the last line of Eq.~\eqref{app:eq:convariances_local_noise} the part underlined as ``uncorrelated part'' contains terms without any correlations except those which can appear if $\alpha$ and $\beta$ overlap. 
In particular, if $\alpha \cap \beta = \emptyset$, then since the noise is uncorrelated, we have $\bra{\lambda_{\alpha\beta}}=\bra{\lambda_{\alpha}}\otimes\bra{\lambda_{\beta}}$ and  $\Lambda_{\alpha\otimes\beta}\mathbf{p}^{\text{ideal}}_{\alpha \otimes \beta} = \rbracket{\Lambda_{\alpha}\mathbf{p}^{\text{ideal}}_{\alpha}} \otimes \rbracket{\Lambda_{\beta}\mathbf{p}^{\text{ideal}}_{\beta}}$, therefore then this part is equal 0.
Otherwise, it gives a non-zero contribution, which however would be present even without any measurement noise.

The only part which adds non-trivial correlations is therefore $\braket{\mathbf{\lambda}_{\alpha\beta}}{\Lambda_{\alpha\beta}\mathbf{p}^{\text{ideal}}_{\Delta_{\alpha\beta}}}$ which using Eq.~\eqref{app:eq:marginal_state_bounded} and elementary transformations can be bounded as
\begin{align}
    |\braket{\mathbf{\lambda}_{\alpha\beta}}{\Lambda_{\alpha\beta}\mathbf{p}^{\text{ideal}}_{\Delta_{\alpha\beta}}}| \leq ||{H}_{\alpha}{H}_{\beta}||\ ||\Lambda_{\alpha\beta}||_{1\rightarrow 1}\
    ||\Delta_{\alpha\beta}||_{1} \leq ||{H}_{\alpha}{H}_{\beta}||\ \ \epsilon_{\alpha\beta} \ .
\end{align}
In the last inequality we made use of the following facts.
If $\alpha$ and $\beta$ do not overlap, we have $||\Lambda_{\alpha\beta}||_{1\rightarrow1} = ||\Lambda_{\alpha}\otimes \Lambda_{\beta}||_{1\rightarrow1} =  ||\Lambda_{\alpha}||_{1\rightarrow1}\ ||\Lambda_{\beta}||_{1\rightarrow1} = 1$. 
In case $\alpha$ and $\beta$  overlap, the non-overlapping parts will simply give stochastic matrices (and they have $1\rightarrow 1$ norm equal to $1$), while overlapping parts will be squared. 
We therefore obtain $||\Lambda_{\alpha\beta}||_{1\rightarrow 1} = \prod_{i \in \alpha \cap \beta}||\Lambda^2_{i}||_{1\rightarrow1} \leq \prod_{i \in \alpha \cap \beta}||\Lambda_{i}||^2_{1\rightarrow1} =1$, where we used submultiplicity of norm.  
Of course, if the noise model is known, the $||\Lambda_{\alpha\beta}||_{1\rightarrow 1}$ could be also calculated explicitly.

From the above, it follows that if $\epsilon_{\alpha\beta}$ is small, then under uncorrelated measurement noise the covariances between local terms will be, unsurprisingly, small as well.

\subsubsection{Effects of error-mitigation}
To include error-mitigation on the above considerations let us note that since error-mitigation is operation performed classically, it can be incorported into the spectrum of the Hamiltonian by defining spectrum $\bra{\lambda_{\alpha}\corr_{\alpha}^{\dagger}}$ of ``error-mitigated Hamiltonian'' term $H_{\alpha}^{\text{corr}}$ which energy is estimated on the noisy probability distribution $\ket{\mathbf{p_{\alpha}^{\text{noisy}}}}$.
Here we define a dual of correction matrix $\corr_{\alpha}^{\dag}$ acting on the spectrum of the local Hamiltonian.
Note that if the correction matrix is exact (i.e., not approximate), it immediately follows that expected value of such defined Hamiltonian coincides with the true value of the energy
\begin{align}
    \tbracket{H_{\alpha}^{\text{corr}}} = \braket{\lambda_{\alpha}\corr_{\alpha}^{\dagger}}{\mathbf{p_{\alpha}^{\text{noisy}}}} =    \braket{\lambda_{\alpha}\corr_{\alpha}^{\dagger}}{\Lambda_{\alpha}\mathbf{p_{\alpha}^{\text{ideal}}}}  = \braket{\lambda_{\alpha}}{\underbrace{\corr_{\alpha}\Lambda_{\alpha}}_{\iden}\mathbf{p_{\alpha}^{\text{ideal}}}} = \braket{\lambda_{\alpha}}{\mathbf{p_{\alpha}^{\text{ideal}}}} = \tbracket{H_{\alpha}} \ .
\end{align}
Using this perspective we can derive the bounds on covariances between error-mitigated local Hamiltonian terms in a manner fully analogous to previous derivations.
The ``correlated part''  this time is bounded as
\begin{align}
   | \braket{\lambda_{\alpha\beta}\corr_{\alpha\beta}^{\dagger}}{\Lambda_{\alpha\beta}\mathbf{p_{\Delta_{\alpha\beta}}^{\text{ideal}}}} | \leq ||{H}_{\alpha}{H}_{\beta}||\  ||\corr_{\alpha\beta}^{\dag}||_{1\rightarrow 1} ||\Lambda_{\alpha\beta}||_{1\rightarrow 1}\
    ||\Delta_{\alpha\beta}||_{1} \leq ||{H}_{\alpha}{H}_{\beta}||\  ||\corr_{\alpha\beta}^{\dag}||_{1\rightarrow 1} \ 
    \epsilon_{\alpha\beta} \ .
\end{align}
The rest of discussion is identical to the previous analysis of the effects of uncorrelated noise, the difference being that now we have additional terms coming from the duals of correction matrices.
Note that those duals in special cases can have huge values -- indeed, if the noise matrix is barely invertible in the first place, one can expect that error-mitigation will highly increase the uncertainty in the estimation of energy.

\section{Details of numerical simulations and additional experimental data}\label{sec:app:misc}

In this short section, we provide some additional information on numerical simulations and experiments.
This includes detailed discussion of methods used to simulate QAOA in Appendix~\ref{sec:app:misc_QAOA}, and additional experimental results on characterization of correlations in Appendix~\ref{sec:app:misc_experiments}.

\subsection{Simulation of the Quantum Approximate Optimization Algorithm \label{sec:app:misc_QAOA}}

\begin{table}[!t]
\begin{center}
\begin{tabular}{|l|l|l|l|l|l|}
\hline
\textbf{parameter}      & $\alpha$ & $\gamma$ & A   & a    & c    \\ \hline
\textbf{starting value} & 0.602    & 0.101    & 200 & 0.06 & 0.12 \\ \hline
\end{tabular}
\caption{\label{app:tab:qaoa_params}Starting values of hyperparameters used in optimization. The meaning of the parameters is in agreement with standard conventions (see for example Refs.~\cite{Spall1998ANOO,cade2019strategies}).}
\end{center}
\end{table}

Here we provide some details of the performed numerical simulations. 
Simulation of QAOA was performed on the system of $N=8$ qubits.
In the main text, we described how the algorithm works, however, let us now repeat it for the sake of completeness.
In standard implementation \cite{farhi2014quantum}, one initializes quantum system to be in $\ket{+}^{\otimes \noq}$ state, where $\ket{+}=\frac{1}{\sqrt{2}}\rbracket{\ket{0}+\ket{1}}$.
Then $p$-layer QAOA is performed via implementation of unitaries of the form
\begin{align}
U_p = \Pi_{j}^p \exp\rbracket{-i\ \alpha_j \H_{D}}\exp\rbracket{-i\ \beta_j \H_{O}}, \ 
\end{align}
where $\H_{D}$ is driver Hamiltonian, which we take to be
\begin{align}
    \H_{D} = \sum_{k}^{\noq} \sigma_{x}^{k}, \ 
\end{align}
and $\H_{O}$ is objective Hamiltonian that one wishes to optimize (i.e., to find approximation for its ground state energy), and $\cbracket{\alpha_j},\cbracket{\beta_j}$ are the angles to-be-optimized.
The quantum state after $p$-th layer is
\begin{align}
    \ket{\psi_p} = U_p \ket{+}^{\otimes \noq} \ ,
\end{align}
and the function which is passed to classical optimizer is the estimator of the expected value $\bra{\psi_p}\H_{O}\ket{\psi_p}$ (note that this makes those estimators to effectively be a function of parameters $\cbracket{\alpha_j},\cbracket{\beta_j}$).
The estimator is obtained by sampling from the distribution defined by the measurement of $\ket{\psi_p}$ in the computational basis, taking the relevant marginals, and calculating the expected value of $\H_{O}$ using values of those estimated marginals.

Theoretically one could optimize $p$-layer QAOA by simultaneous optimization of all $2p$ angles.
However, this is hard in practice, since the number of optimized parameters increases the complexity of classical optimization.
In our optimizations, we therefore modified the optimization to be divided into steps in the following way.
In each step of optimization, we optimized over the set of $6$ angles (i.e., 3 QAOA layers). 
Then the input to the next step was the optimized state obtained in the previous steps.
For example, input to second optimization step was the quantum state $U^{\rbracket{1}}_3\ket{+}^{\otimes \noq}=U_3 \ket{+}^{\otimes \noq}$, where we used superscript to denote optimization step.
Then input to third optimization step was the state 
$U^{\rbracket{2}}_3U^{\rbracket{1}}_3\ket{+}^{\otimes \noq}= U_6 \ket{+}^{\otimes \noq}$, etc.

In summary, for $p$-layer QAOA, the optimization was effectively divided into $\frac{p}{3}$ steps, and in each step, we optimized over $6$ angles, i.e., $3$ QAOA layers. 
Each step used $2*800=1600$ function evaluations (the factor $2$ comes from two evaluations needed for gradient evaluation which was done 800 times) plus a single final function evaluation.
We further performed each such procedure additional time and chose the better run (out of 2).
Each energy estimator was obtained using $10^4$ energy measurements.
Therefore the total number of function evaluations was  $\approx\frac{p}{3}\times 3.2 \times 10^7$.

As a classical optimizer, we used Simultaneous perturbation stochastic approximation (SPSA) \cite{Spall1998ANOO,cade2019strategies,montanaro2020compressed,Kandala2017}.
Then with a growing number of optimization steps, we gradually changed the parameters to (heuristically) make the optimization more adaptive.
Parameters $\alpha$ and $\gamma$ were not changed, while other parameters were changed according to prescription
\begin{align*}
 a_p &= a_0\ 0.9^{p} \ , \\ 
 c_p &= c_0\ 0.9^{p} \ , \\ 
  A_p &= A_0\ 1.1^{p} \ , \numberthis
\end{align*}
where $0$ subscript denotes starting values of parameters given in Table~\ref{app:tab:qaoa_params}. 

We note that the results presented in Fig.~\ref{fig:QAOA_estimators} exhibit rather poor convergence of the algorithm in a sense that adding more layers above $p=9$ changed the resulting energies only slightly. 
This can be explained by the fact that we did not perform an exhaustive search over hyperparameters, but can also be a manifestation of the recently reported fact that QAOA might have problems with reaching global minima for relatively complicated Hamiltonians (like high-density MAX-2-SAT used in our work) \cite{Akshay2020reachability}.
Clearly, in the context of our work, only the comparison of noisy and noise-mitigated optimization to the noiseless run was of significance.

\begin{figure}[!h]
\begin{center}
      \includegraphics[width=0.98\textwidth]{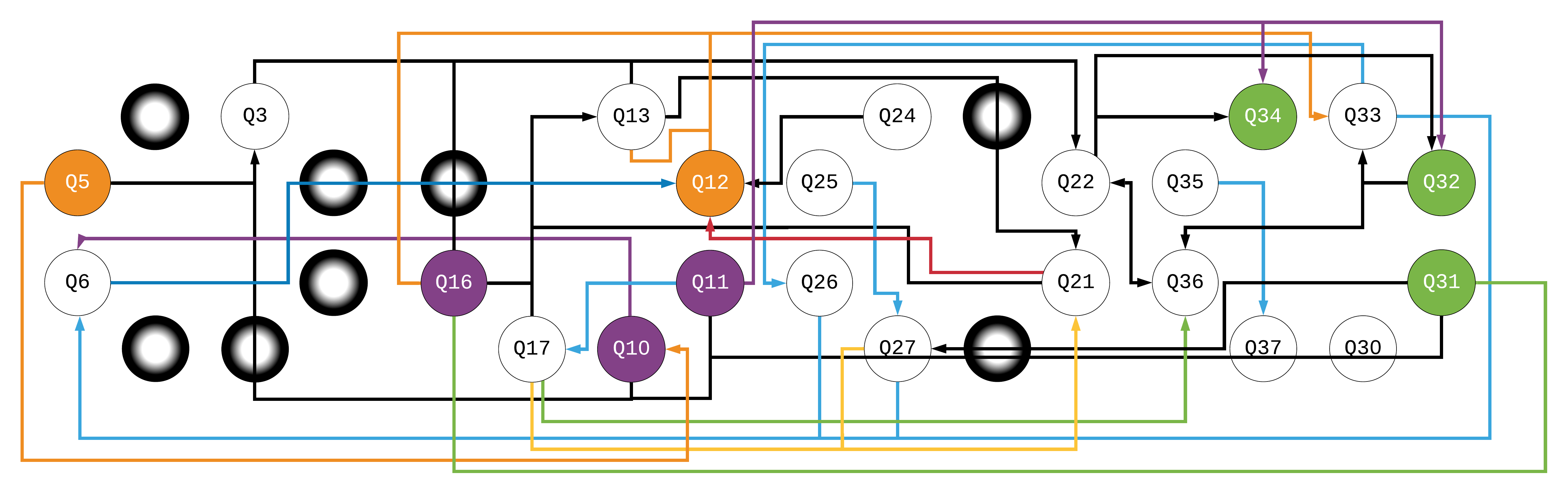}
\caption{\label{fig:full_picture_rigetti} Depiction of correlations in Rigetti device. In Fig.~\ref{fig:correlations_models} we presented the above image splited into two parts for clarity.}
\end{center}
 \end{figure}
 \begin{figure}[!h]
\begin{center}
\includegraphics[width=0.475\textwidth]{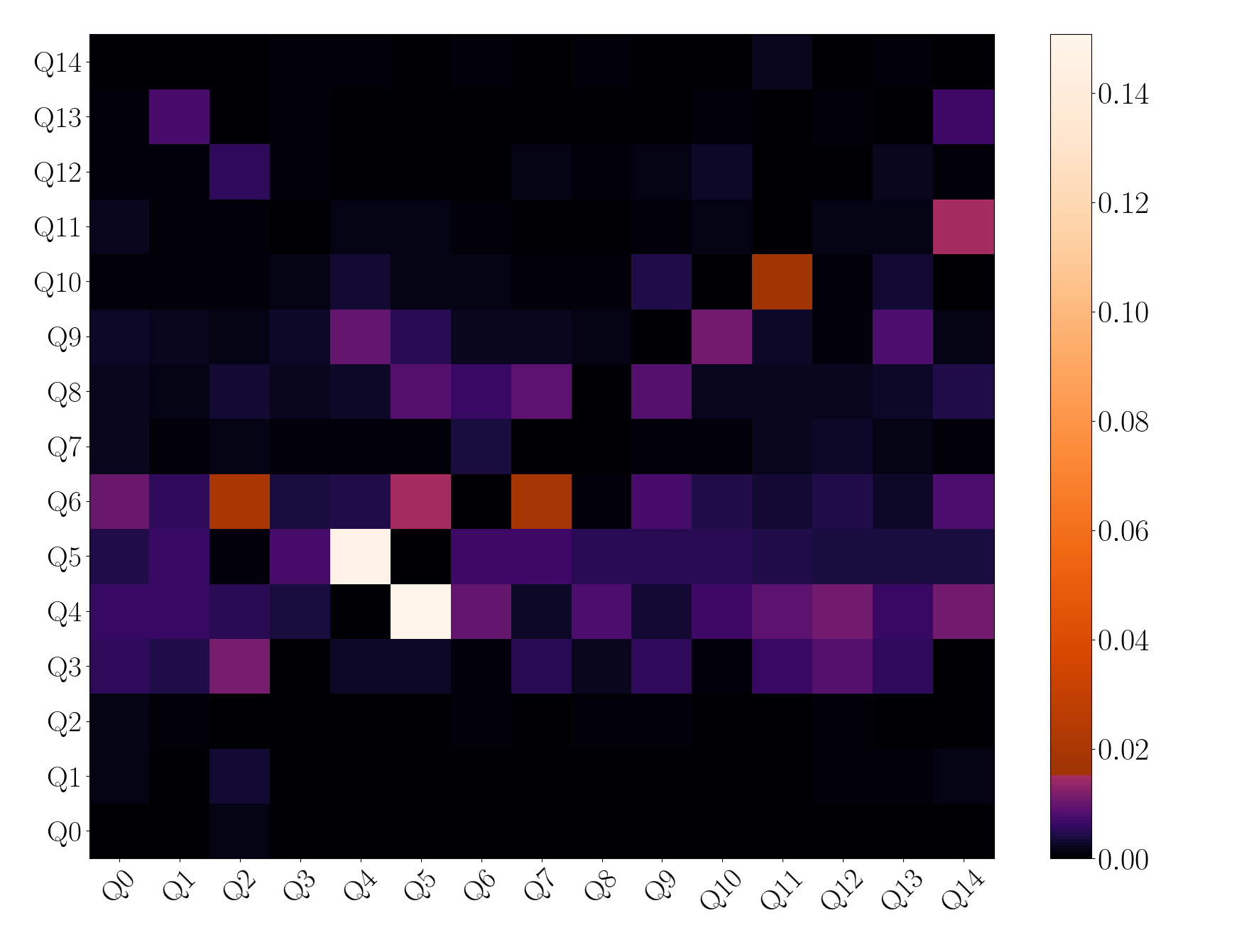} $\quad$
\includegraphics[width=0.475\textwidth]{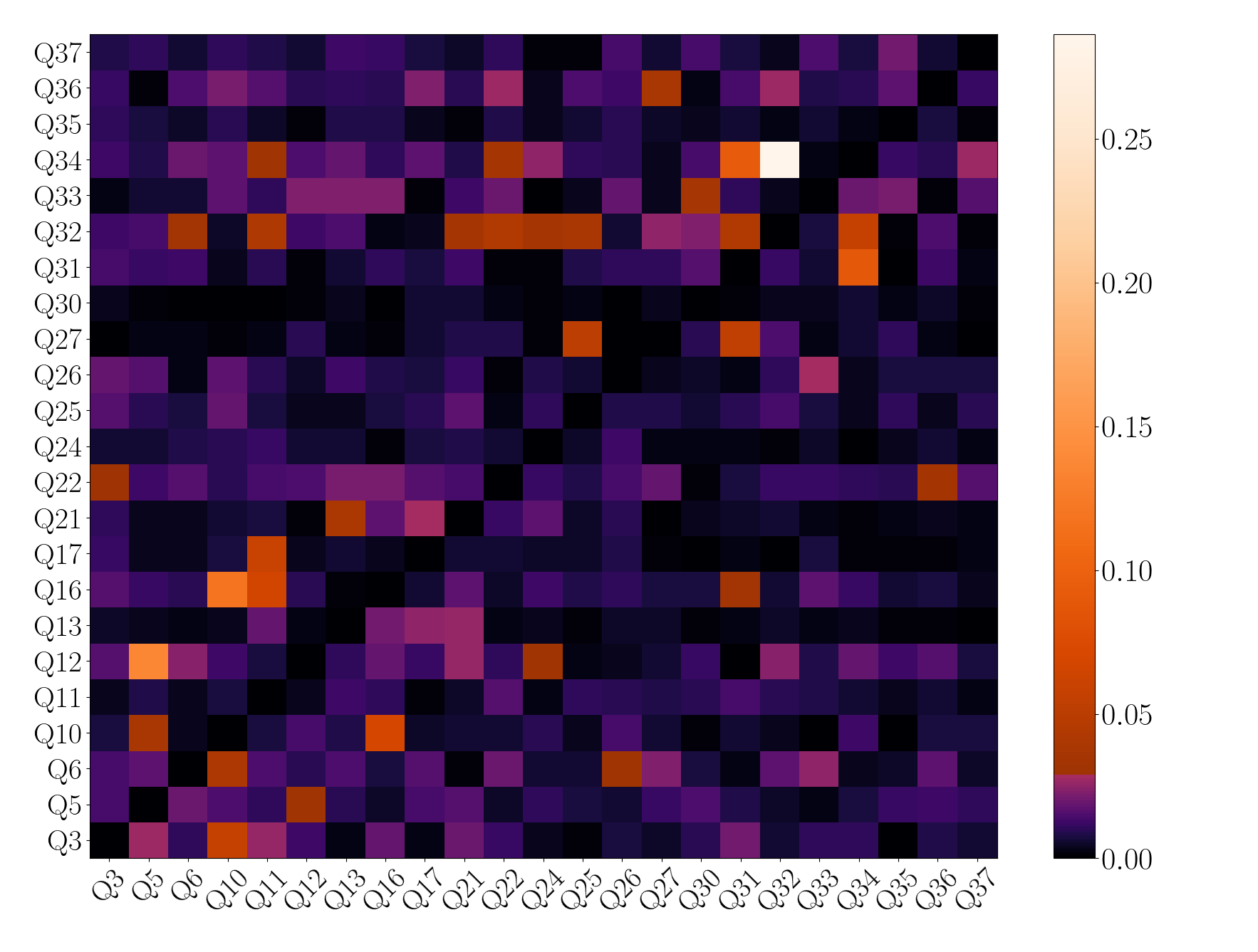}
\caption{\label{fig:heatmaps}Heatmap of the correlations (Eq.~\eqref{eq:correlations_pair}) in IBM's 15-qubit \textit{Melbourne} device (left) and a 23-qubit subset of Rigetti's \textit{Aspen-8} device (right). 
The convention is that ``row is affected by column'', i.e., the measurement noise on the qubit with the label given by row index depends on the state of the qubit with the label given by column index, and the magnitude of the dependence is indicated by colors.}
\end{center}
\end{figure}

\subsection{Additional experimental data\label{sec:app:misc_experiments}}
Here we present additional data concerning correlations reconstructed in DDOT characterization of IBM's 15-qubit \textit{Melbourne} device and a 23-qubit subset of Rigetti's \textit{Aspen-8} device.
The full depiction of correlations in Rigetti's device is presented in Fig.~\ref{fig:full_picture_rigetti}.
The heatmaps showing the reported correlations in both devices are presented in Fig.~\ref{fig:heatmaps}.
In Fig.~\ref{fig:table_parameters_count} we show how many parameters are needed to describe various noise models, and table in Fig.~\ref{fig:table_processing_time} shows how much time data-processing took. 
The data was processed using laptop with 32GB DDR4 RAM (speed 2667MT/s) and Intel(R) Core(TM) i7-9750H CPU @ 2.60GHz. 
We note that no multi-threading was implemented -- in principle, one likely can reduce the run-time further be exploiting parallel calculations.
We intend to optimize the code used for data-processing during development of our online repository \cite{qrem}.

\begin{figure}[!t]
\begin{center}
      \includegraphics[width=0.98\textwidth]{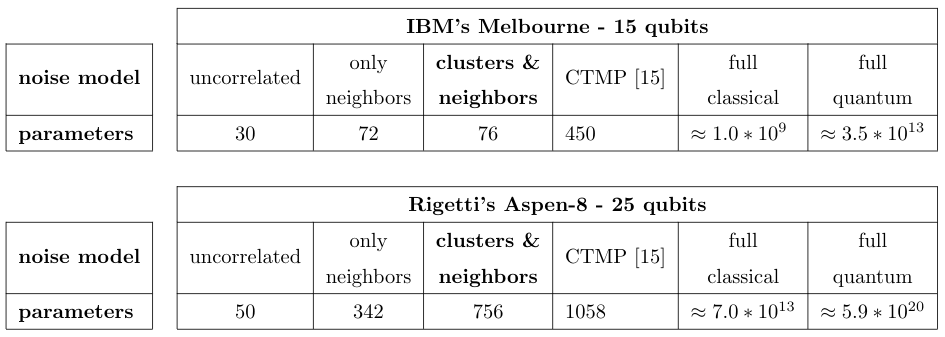}
\caption{\label{fig:table_parameters_count} 
Total number of parameters needed to describe a noise model. 
The uncorrelated noise model is tensor product of single-qubit stochastic noise matrices.
The "only neighbors" model corresponds to considering only single-qubit (trivial) clusters and their neighborhoods estimated in our experiments.
The "clusters \& neighbors" model corresponds to our full noise model containing both non-trivial clusters and their neighborhoods.
CTMP is a number of parameters needed to describe 2-local classical noise model from Ref.~\cite{Bravyi2020mitigating} \textit{without any assumptions} on the correlations structure. 
We note that combining our DDOT characterization with CTMP model could reduce the number of parameters in CTMP by pointing to  negligible correlations that can be disregarded. 
For comparison, the "full classical" noise model refers to generic stochastic map, and "full quantum" to a generic d-outcome POVM.}
\end{center}
\end{figure}

\begin{figure}[!t]
\begin{center}
      \includegraphics[width=0.98\textwidth]{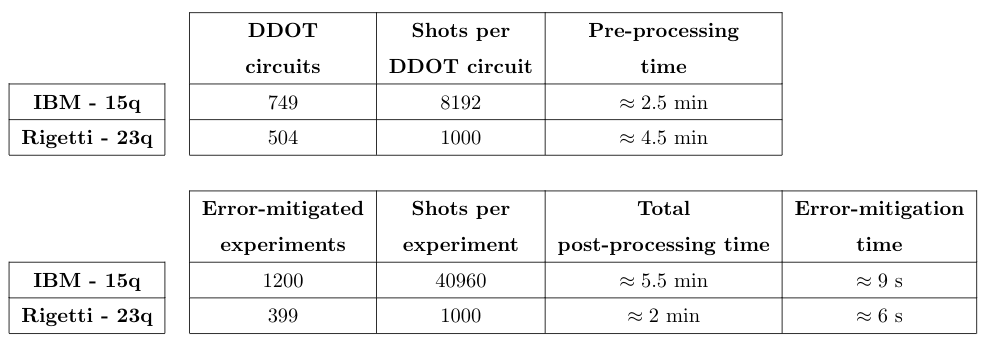}
\caption{\label{fig:table_processing_time} 
Time of data processing. 
In first table, the "pre-processing time" includes calculation of marginal noise matrices on 2-qubit subsets, calculation of pairwise correlations, reconstruction of the noise model and calculation of inverse noise matrices needed for corrections of all possible two-qubit marginal probability distributions (note that this in general includes also higher-dimensional corrections, as explained in Fig.~\ref{fig:example_marginal_subset}).
In the second table, the "Total post-processing time" includes both calculation of marginal distributions needed to estimate energies of investigated 2-local Hamiltonians and performed error-mitigation on them.
The "Error-mitigation time" shows only the latter. 
}
\end{center}
 \end{figure}

\end{document}